\documentclass[11pt,a4paper]{article}
\usepackage[margin=2.6cm]{geometry}
\usepackage[utf8]{inputenc}
\usepackage[T1]{fontenc}
\usepackage{lmodern}
\usepackage{microtype}
\usepackage{amsmath,amssymb,amsthm,mathtools}
\usepackage{enumitem}
\usepackage{array}
\usepackage{tabularx}
\usepackage{hyperref}
\usepackage{url}
\hypersetup{colorlinks=true,linkcolor=blue,citecolor=blue,urlcolor=blue}

\newtheorem{theorem}{Theorem}[section]
\newtheorem{lemma}[theorem]{Lemma}
\newtheorem{proposition}[theorem]{Proposition}
\newtheorem{corollary}[theorem]{Corollary}
\newtheorem{definition}[theorem]{Definition}
\newtheorem{remark}[theorem]{Remark}
\newtheorem{example}[theorem]{Example}

\theoremstyle{remark}
\newtheorem*{remark*}{Remark}

\newcommand{\CS}{\mathrm{CS}}
\newcommand{\Cmcf}[2]{\mathcal{C}^{\mathrm{mcf}}_{#1,#2}}
\newcommand{\D}{\mathcal{D}}
\newcommand{\out}{\mathrm{out}}
\newcommand{\tuple}[1]{\vec{#1}}
\newcommand{\blank}{\square}
\newcommand{\eps}{\varepsilon}
\newcommand{\obs}{\mathrm{obs}}

\providecommand{\keywords}[1]{\par\smallskip\noindent\textbf{Keywords. }#1\par}

\title{The Value of Finite Observation in Positive-Data Learning of Multiple Context-Free Languages}
\author{Takayuki Kuriyama\\\texttt{growup.kuriyama@gmail.com}}
\date{}

\usepackage{tikz}

\newcommand{\cutpos}[1]{%
  \raisebox{-0.7ex}{%
    \tikz[baseline=(char.base)]{%
      \node[draw,circle,inner sep=0.4pt,font=\scriptsize]
        (char) {\ensuremath{#1}};%
    }%
  }%
}

\begin{document}
\setlength{\abovedisplayskip}{6pt plus 2pt minus 3pt}
\setlength{\belowdisplayskip}{6pt plus 2pt minus 3pt}
\setlength{\abovedisplayshortskip}{3pt plus 1pt minus 2pt}
\setlength{\belowdisplayshortskip}{3pt plus 1pt minus 2pt}
\maketitle

\begin{abstract}
Positive data can show that two tuple occurrences share a successful sentence
context without certifying that they are safely interchangeable.  We study
finite compositional observations, represented by finite-monoid
homomorphisms, as semantic side information for learning bounded-fan-out
multiple context-free languages from positive data.

For every fixed fan-out bound \(f\) and supplied finite observation \(h\), we
define observation-guarded tuple substitutability and give a canonical
set-driven learner that exactly reconstructs every language in the full
semantic slice
\[
  \{L\in f\text{-}\mathrm{MCFL}:L\text{ is }(f,h)\text{-tuple-substitutable}\}.
\]
Under a fixed branching cap, hypothesis construction is polynomial.  More
generally, polynomial exposure of a presentation implies polynomial
characteristic data; binary single-spine presentations provide an explicit
sufficient condition.

We then vary how much observation information is available.  A fixed bound on
the size of an unknown observation can be compiled into a universal finite
refinement, restoring identifiability slicewise, whereas the unbounded
latent-observation union is not identifiable.  Moreover, universal finite
refinements have an unavoidable exponential dependence on the observation
bound, and a separate deletion obstruction shows that no family of set-driven
identifiers can have characteristic data polynomial jointly in that bound and
presentation size.  Intrinsic observation size alone therefore does not
determine latent-slice data complexity: the quantitative behavior also depends
on whether the witnessing semantic slice is supplied or must be resolved
inside a larger ambient class.
\end{abstract}

\keywords{grammatical inference, multiple context-free grammars, positive data,
identification in the limit, finite observations, semantic side information,
observation slices, tuple substitutability, polynomial data, single-spine grammars}

\section{Introduction}
\label{sec:introduction}

Positive-data identification in the limit originates with Gold and was
developed further in the formal-language setting by Angluin
\cite{Gold1967,Angluin1980}.  A central difficulty is that positive evidence can
show that two expressions occur in a common successful environment without
showing that they are interchangeable in every successful environment
\cite{ClarkEyraud2007}.  Broad
language families are therefore not identifiable from positive data alone,
while distributional and structural restrictions can make finite positive
witnesses sufficient
\cite{ClarkEyraud2007,Yoshinaka2008,Yoshinaka2009,Yoshinaka2011,Kanazawa1998}.
For multiple context-free languages, Yoshinaka's multidimensional
substitutability framework is the closest predecessor of the tuple-indexed
construction used below \cite{Yoshinaka2009,Yoshinaka2011}.

This paper asks a complementary question: \emph{what finite semantic side
information is sufficient to license safe positive-data
merges\footnote{Here a merge means treating two substrings or tuples as having
the same role and identifying them with a single state or nonterminal.}, and
what is the cost of supplying or withholding that information?}  We represent the
side information by a finite compositional observation
\(h:\Sigma^*\to M\)
and require substitutability only between tuples having equal componentwise
observation values and sharing an accepting named sentence context.  The
observation therefore acts as a finite guard on distributional merging.  The
semantic promise is stated at the language level, while the learner operates
only on concrete positive witnesses.

The distinction is substantive in the MCFG setting.  Tuple components may be
permuted and separated by terminal material, so reconstruction must recover
occurrence-sensitive child placement rather than only a one-dimensional
constituent boundary.  More importantly for the present paper, the observation
is not merely a typing convenience: the results below compare three
information interfaces, according to whether a target observation is supplied,
merely known to be bounded, or left completely latent.

\paragraph{Main results: an observation-information hierarchy.}
The principal results can be read in three steps.
At a high level, the results form the following hierarchy.
\begin{table}[t]
\centering
\small
\caption{Learning behavior under three observation interfaces.}
\label{tab:observation-hierarchy}
\begin{tabularx}{\textwidth}{>{\raggedright\arraybackslash}p{0.20\textwidth}>{\raggedright\arraybackslash}p{0.34\textwidth}>{\raggedright\arraybackslash}X}
\hline
Observation information & Positive-data identification & Quantitative behavior \\
\hline
supplied \(h\) & yes, on the full slice \(\Cmcf{f}{h}\) & polynomial construction at fixed rank; polynomial data for polynomially exposable presentations, in particular single-spine \\
\(|M|\le k\) known & yes, slicewise via \(H_{\Sigma,k}\) & polynomial slicewise under single-spine, but not uniformly polynomial in \(k\) \\
no bound on \(h\) & no & non-identifiable \\
\hline
\end{tabularx}
\end{table}
\begin{enumerate}[label=(\Roman*),leftmargin=*]
\item \textbf{Supplied observation.}
For fixed \(f,h\), the canonical set-driven learner has a finite positive
characteristic sample for every target \(L\in\Cmcf{f}{h}\), and every finite
positive sample for \(L\) containing that characteristic sample yields a
hypothesis with language \(L\)
(Theorem~\ref{thm:exact-reconstruction} and
Corollary~\ref{cor:semantic-identification}).
No common target rule-rank bound is required for qualitative identification.
With a fixed branching cap, hypothesis construction is polynomial
(Theorem~\ref{thm:poly}); polynomial exposure implies polynomial-size
characteristic samples
(Theorem~\ref{thm:polynomial-exposure-implies-data}), and single-spine
presentations give the concrete bound \(O_f(m^{3f}|G|^3)\)
(Theorem~\ref{thm:single-spine-polynomial-exposure}).

\item \textbf{Bounded latent observation.}
If only an observation-size bound \(k\) is known, finitely many candidate
observations can be compiled into one universal refinement \(H_{\Sigma,k}\),
restoring identification on each fixed slice
(Theorem~\ref{thm:bounded-union-identifiable}) and slicewise polynomial time
and data under single-spine
(Theorem~\ref{thm:bounded-union-single-spine-poly}).

\item \textbf{Unbounded or uniformly varying observation.}
Without any bound, the latent-observation union is not identifiable
(Theorem~\ref{thm:no-advice-nonidentifiability}).  Quantitatively, explicit
universal observations already have an exponential-in-\(k\) obstruction
(Corollary~\ref{cor:universal-observation-size-lower-bound} and
Proposition~\ref{prop:universal-learner-exponential-k-data}); separately, large
bounded latent slices contain deletion families that rule out any characteristic-
data bound polynomial jointly in the observation bound and presentation size
(Theorem~\ref{thm:no-uniform-polynomial-k-data}).  Thus intrinsic observation
size alone does not control latent-slice data complexity.
\end{enumerate}

\paragraph{Relation to earlier distributional learning.}
Yoshinaka's multidimensional substitutability is the closest predecessor of
the tuple construction \cite{Yoshinaka2009,Yoshinaka2011}.  The semantic
conditions differ: \(L_3\) belongs to a fixed-observation slice while failing
Yoshinaka's condition already at dimension two
(Proposition~\ref{prop:l3-not-yoshinaka-2d}).  Section~\ref{sec:comparison}
separates the information interfaces and the role of rule-rank parameters in
detail.  The qualitative theorem here concerns the full semantic fixed-fan-out
slice; binary, branching, and single-spine assumptions enter only the
quantitative presentation results.

\section{Preliminaries and the Target Class}
\label{sec:prelim}

\subsection{Positive-data learning}

\begin{definition}[Text and identification]
To handle the empty language, fix a pause symbol
\(\#\notin\Sigma\).%
\footnote{Allowing the pause symbol makes
\(\#,\#,\ldots\) a text for the empty language.}
A \emph{text} for \(L\subseteq\Sigma^*\) is an infinite sequence over
\(L\cup\{\#\}\) in which every element of \(L\) appears at least once.
Pause symbols may occur arbitrarily, and \(\#\) is not an end-of-text marker.

The \emph{content} of a finite prefix is the finite set
\(K\subseteq L\) of positive examples from \(\Sigma^*\) that occur in that
prefix.  Thus neither the positions nor the number of pause symbols, nor repeated
occurrences of the same positive example, affect the content.
A learner is a computable function mapping each finite prefix to a hypothesis
grammar.

In this paper, a learner is said to identify a class \(\mathcal C\) in the
limit from positive data if, for every \(L\in\mathcal C\) and every text for
\(L\), there exists \(n_0\) such that every hypothesis produced from stage
\(n_0\) onward has language \(L\).  We call this \emph{behaviorally correct
identification from text} (TxtBc) \cite{CaseSmith1983}.  Thus eventual
semantic correctness of every hypothesis is required, but syntactic
stabilization of the hypothesis grammar is not.  The stronger Gold-style
explanatory (EX) criterion additionally requires the hypothesis grammar itself
to converge to a stable grammar.  We prove the semantic result with a
set-driven learner and recover explanatory convergence by a conservative
wrapper in Corollary~\ref{cor:gold-explanatory}.
\end{definition}

\begin{definition}[Set-driven learner and characteristic sample]
A learner \(\mathcal A\) is \emph{set-driven} if its output hypothesis is
independent of presentation order and multiplicity and depends only on the
finite set \(K\) of positive examples observed so far.  Write
\(\mathcal A(K)\) for the hypothesis on observation set \(K\).

For a target language \(L\), a finite set \(S\subseteq L\) is a
\emph{characteristic sample for \(\mathcal A\)} if, for every finite set
\(K\), \(S\subseteq K\subseteq L\) implies \(L(\mathcal A(K))=L\).
The case \(S=\emptyset\) is allowed.

The canonical learner constructed below is set-driven.  All learner-uniform
characteristic-sample lower bounds proved in this paper are stated within this
set-driven model.%
\footnote{Accordingly, this paper makes no characteristic-sample lower-bound
claim for arbitrary learners that may depend on presentation order or
multiplicity.}

The presentation-relative characteristic samples constructed below are
concrete characteristic samples for this canonical learner.  The target
presentation is used only to establish the existence and size of such a finite
characteristic sample; it is not supplied as input to the learner.
\end{definition}

\begin{definition}[Positive sample size]
For finite \(K\subseteq\Sigma^*\), define
\(\|K\|_+:=\sum_{w\in K}\max(1,|w|)\).%
\footnote{The factor \(\max(1,|w|)\) prevents an observed empty word from
making a degenerate zero contribution to sample size.}
For a tuple \(\tuple u=(u_1,\ldots,u_d)\), write
\(\|\tuple u\|_1:=\sum_{i=1}^d |u_i|\)
for its total component length.
\end{definition}

The general MCFG theorem below allows \(\eps\)-generating rank-zero rules.

\begin{definition}[Polynomial time and data]
\label{def:poly-time-data}
Following the characteristic-sample viewpoint of de~la~Higuera
\cite{delaHiguera1997,delaHiguera2010}, let \(\mathcal G\) be a class of
presentations and let
\(\mathcal L(\mathcal G):=\{L(G)\mid G\in\mathcal G\}\).
For fixed external parameters, we say that \(\mathcal G\) is
\emph{identified in polynomial time and data} if there exist polynomials
\(p\) and \(q\) such that, for every \(G\in\mathcal G\), there is a
characteristic sample \(S_G\subseteq L(G)\) for \(L(G)\) satisfying
\(\|S_G\|_+\le p(|G|)\),
and, for every finite sample \(K\subseteq\Sigma^*\), the learner's hypothesis
on \(K\) is constructible in time at most \(q(\|K\|_+)\), including output
size.  Thus the polynomial-data requirement is presentation-relative: the bound is
measured against the size of each witnessing presentation rather than a
canonical or minimal language description.  For the working binary
presentation classes used in the quantitative sections, we use the following
fixed binary encoding.  If \(P_{\mathrm{bin}}\) is the set of binary rules and
\(\lVert\rho\rVert_{\mathrm{temp}}\) is the total number of terminal and
variable symbols in the template of \(\rho\), then for fixed fan-out
\[
  |G|=\Theta\!\left(
    |V|+|P|\log\max\{2,|V|\}
    +\sum_{\rho\in P_{\mathrm{bin}}}\lVert\rho\rVert_{\mathrm{temp}}
  \right).
\]
On each fixed \((f,h)\)-slice, the alphabet and the explicit finite observation
are external parameters and are not included in \(|G|\).
\end{definition}

\begin{remark}[Slicewise polynomiality]
\label{rem:slicewise-poly}
Unless a later theorem makes additional parameter dependence explicit,
polynomial bounds are understood slicewise in the fixed external parameters.
In particular, Theorem~\ref{thm:poly} fixes \(f\), a branching cap \(b\), and \(h\), and gives the
construction bound \(\|K\|_+^{O(fb)}\); no bound polynomial uniformly in \(f\) or \(b\)
is claimed.  Theorem~\ref{thm:single-spine-polynomial-exposure} later makes
the dependence on the size of a supplied observation explicit, and
Theorem~\ref{thm:bounded-union-single-spine-poly} treats each fixed
observation-size bound \(k\) as a separate slice.
\end{remark}

\subsection{Finite monoid typing}

\begin{definition}[Explicit finite homomorphism]
\label{def-monoid-morphism}
A monoid homomorphism \(h\colon\Sigma^*\to M\) into a finite monoid \(M\) is
\emph{explicit} if the multiplication table of \(M\) and the values \(h(a)\) for
\(a\in\Sigma\) are given.  For a tuple \(\tuple w=(w_1,\ldots,w_d)\), write
\(h^{(d)}(\tuple w):=(h(w_1),\ldots,h(w_d))\in M^d\).
\end{definition}

In this paper, finite-monoid homomorphisms are used as finite-state auxiliary
observations for positive-data learning of MCFLs.  For the classical
correspondence between finite-monoid homomorphisms and finite-automaton
recognition of regular languages, see Pin~\cite{Pin1986}.

A finite observation \(h\) may arise, for example, when a DFA recognizing a
regular language \(R\supseteq L\) is available in advance.  The transition
morphism of this DFA yields an explicit homomorphism \(h:\Sigma^*\to M\): for
each word \(w\in\Sigma^*\), the element \(h(w)\) records the state
transformation induced by reading \(w\).

Thus \(h(u)=h(v)\) means that \(u\) and \(v\), when read from any DFA state,
lead to the same resulting state.  The learner does not infer \(R\) itself and
does not use \(R\) as a membership oracle for \(L\).  It uses only this finite
state information from the DFA as auxiliary information controlling when
tuples observed in positive data may be identified.

\subsection{Standard MCFG presentations and the binary working subfamily}

We use the standard tuple-generating definition of multiple context-free
grammars \cite{SekiEtAl1991}; see also \cite{Kallmeyer2010} for a general
overview of MCFGs and related mildly context-sensitive formalisms.
The explicit general-rank notation below is needed for the qualitative learner;
the more restrictive binary working form that follows is retained for the
quantitative single-spine analysis.

\begin{definition}[Standard linear MCFG presentation]
\label{def:standard-mcfg}
A \emph{multiple context-free grammar} is a tuple
\(G=(V,\Sigma,P,S,\mu)\), where \(V\) is a finite set of nonterminals,
\(\Sigma\) is a finite terminal alphabet, \(S\in V\) has \(\mu(S)=1\), and
\(\mu(A)\ge1\) is the fan-out of each \(A\in V\).  A rule of rank \(r\ge0\)
has the form
\[
  \rho:\quad
  A\to(\alpha_1,\ldots,\alpha_e)(B_1,\ldots,B_r),
\]
where \(e=\mu(A)\), \(d_j=\mu(B_j)\), and, with
\[
  X_j:=\{x_j^1,\ldots,x_j^{d_j}\},
  \qquad
  \Gamma_\rho:=\Sigma\uplus X_1\uplus\cdots\uplus X_r,
\]
each \(\alpha_i\in\Gamma_\rho^*\) and every variable in
\(X_1\uplus\cdots\uplus X_r\) occurs at most once in the complete tuple
\((\alpha_1,\ldots,\alpha_e)\).  The rule is \emph{nondeleting} if every such
variable occurs exactly once.

For compatible child tuples
\(\tuple u_j=(u_{j,1},\ldots,u_{j,d_j})\), simultaneous substitution of
\(u_{j,k}\) for \(x_j^k\) defines
\[
  \rho(\tuple u_1,\ldots,\tuple u_r)\in(\Sigma^*)^e.
\]
When \(r=0\), the rule is a constant rule and derives the tuple
\((\alpha_1,\ldots,\alpha_e)\).  The tuple languages \(L_A(G)\) form the least family, ordered componentwise by
inclusion, that is closed under all rule applications, and
\[
  L(G):=\{w\in\Sigma^*:(w)\in L_S(G)\}.
\]
An \(f\)-MCFG has \(\mu(A)\le f\) for every nonterminal, and an \(f\)-MCFL is
a language generated by an \(f\)-MCFG.  A grammar is \(b\)-branching if every
rule has rank at most \(b\).
\end{definition}

When two nonterminals have the same fan-out \(d\), we use
\[
  A\to B
\]
as shorthand for the unary identity-template rule
\(A\to(x^1,\ldots,x^d)(B)\).%
\footnote{This abbreviation denotes only the identity correspondence between
components.  For example, at fan-out two, \(A\to B\) means
\(A\to(x^1,x^2)(B)\); it does not denote a component-swapping rule such as
\(A\to(x^2,x^1)(B)\).}
Likewise, \(A\to\tuple c\) denotes a rank-zero
constant rule.  These abbreviations are used for typed start links and for the
semantic unary rules of the learner.

The standard formalism allows rank zero, nonidentity unary rules, arbitrary
finite rule rank, and deleting rules.  We use two known facts to separate the
semantic theorem from presentation artifacts.  First, every \(b\)-branching
\(f\)-MCFG has an equivalent \(b\)-branching \(f\)-MCFG whose rules are
nondeleting (indeed nondeleting and nonpermuting)
\cite{Kanazawa2019Ogden}.  Second, reducing arbitrary rule rank to two cannot
in general be done while preserving fan-out.%
\footnote{This reflects the fact that an MCFG cannot later split a string
component once it has been concatenated.  For example, consider fan-out-two
nonterminals \(A,B,C,D,E\) and the rank-four rule
\(A\to(x_B^1x_C^1x_D^1x_E^1,\,
 x_D^2x_B^2x_E^2x_C^2)(B,C,D,E)\).
For every pair of children, the corresponding variables are adjacent in one
parent component but are separated by variables of other children in the
other component.  Hence, if that pair is first combined into an auxiliary
nonterminal, pieces that must later be separated by material contributed by
other children have to be retained as distinct components, requiring at least
three components.  Every binary decomposition has a first pair of children to
combine, so this rule cannot be binarized while keeping fan-out two.}
LCFRS binarization may require
larger fan-out, and fan-out-preserving weakly equivalent binarizations do not
always exist \cite{GomezRodriguezEtAl2009}.  Thus the arbitrary-rank witness
construction below is not replaceable by a harmless binary normal-form step
when the fan-out parameter \(f\) is fixed.

\begin{definition}[Productivity, reachability, and reduction]
\label{def:productive-reachable-reduced}
Let \(G=(V,\Sigma,P,S,\mu)\) be a standard MCFG.

A nonterminal \(A\in V\) is \emph{productive} if
\(L_A(G)\ne\emptyset\).

The \emph{nonterminal dependency graph} of \(G\) has vertex set \(V\) and,
for every rule
\(A\to(\alpha_1,\ldots,\alpha_e)(B_1,\ldots,B_k)\)
and every \(1\le j\le k\), an edge \(A\to B_j\).
A nonterminal \(A\) is \emph{reachable} if there is a directed path from
\(S\) to \(A\) in this graph.  Paths of length zero are allowed, so the start
symbol \(S\) is always reachable.

The grammar \(G\) is \emph{reduced} if every nonterminal is both reachable
and productive.
\end{definition}

\paragraph{Binary working subfamily.}
The following restricted form is used for the compact examples, the
single-spine notion, and the polynomial-data analysis.  It is no longer part
of the definition of the qualitative target class.

\begin{definition}[Working binary linear nondeleting MCFG]
\label{def:working-mcfg}
A \emph{working binary linear nondeleting multiple context-free grammar} is a
tuple \(G=(V,\Sigma,P,S,\mu)\), where \(V\) is a finite nonterminal set,
\(\Sigma\) is a finite terminal alphabet, \(S\in V\) is the start symbol, and
\(\mu\colon V\to\mathbb N_{>0}\) is the fan-out map with \(\mu(S)=1\).  Rules
have one of the following forms.
\begin{enumerate}[label=(\roman*),leftmargin=*]
\item A start rule \(S\to A\), where \(A\ne S\) and \(\mu(A)=1\).
\item A terminal rule \(A\to(a)\), where \(A\ne S\), \(a\in\Sigma\), and
\(\mu(A)=1\).
\item A binary rule \(\rho\colon A\to(\alpha_1,\ldots,\alpha_e)(B,C)\), where
\(A\ne S\), \(e=\mu(A)\), \(\mu(B)=d_B\), \(\mu(C)=d_C\), each \(\alpha_i\) is
a word over \(\Sigma\cup\{x^1,\ldots,x^{d_B},y^1,\ldots,y^{d_C}\}\), and each
variable \(x^1,\ldots,x^{d_B},y^1,\ldots,y^{d_C}\) occurs exactly once in the
whole tuple \((\alpha_1,\ldots,\alpha_e)\), so the rule is nondeleting.
\end{enumerate}
The presentation is \emph{start-separated}: the only rules with left-hand side
\(S\) are start rules, and \(S\) never occurs as a child on the right-hand side
of a binary rule.
\end{definition}

\paragraph{Rule rank versus fan-out.}
The \emph{rank} of a production is the number of nonterminal children on its
right-hand side, whereas the fan-out \(\mu(A)\) is the arity of tuples derived
from \(A\).  These notions are independent.%
\footnote{Fan-out affects the complexity of describing individual tuple
occurrences, whereas rule rank controls how many child tuple occurrences must
be combined simultaneously when reconstructing one rule.  The latter therefore
enters directly into the combinatorial complexity of hypothesis construction.}
For example, if \(\mu(A)=\mu(B)=3\), the rule
\(A\to(x^1a,x^2,bx^3)(B)\) has rank \(1\), while \(A\) has fan-out \(3\).
Conversely, if \(\mu(C)=\mu(D)=\mu(E)=\mu(F)=1\), then
\(C\to(x^1y^1z^1)(D,E,F)\) has parent fan-out \(1\) but rule rank \(3\).
Apart from the distinguished start rules, the working form defined in this
paper has one-letter rank-zero terminal rules and rank-two composition rules.
It has no rank-one nonidentity composition rules.  This restriction concerns
only the working subfamily: the general learner returns standard MCFG
hypotheses and may retain unary identity template rules directly.

In what follows, for a binary rule
\(\rho\colon A\to(\alpha_1,\ldots,\alpha_e)(B,C)\),
we call the tuple \((\alpha_1,\ldots,\alpha_e)\) the
\emph{template tuple} of \(\rho\).
The semantics of rule application, as well as the tuple languages
\(L_A(G)\) and the generated language \(L(G)\), are those of
Definition~\ref{def:standard-mcfg}.

\begin{remark}[Empty components, the empty word, and the empty language]
\label{rem:working-mcfg-empty}
For working binary linear nondeleting MCFGs, three different notions of
emptiness must be distinguished.

First, an individual component of a tuple derived by a nonterminal may be the
empty word \(\eps\).  Indeed, a component \(\alpha_i\) of a binary-rule
template may itself be empty.  For example, if
\(B\to(a)\), \(C\to(b)\), and
\(A\to(x^1y^1,\eps)(B,C)\), then \(A\) derives the tuple
\((ab,\eps)\).

Second, a word generated by the grammar can never itself be \(\eps\).  Every
successful derivation tree contains at least one terminal leaf, and each
terminal rule \(A\to(a)\) contributes one terminal letter.  Since every
binary rule is linear and nondeleting, every component received from its
children is preserved exactly once in the parent composition, so a terminal
contribution cannot disappear higher in the derivation.  Finally, a start
rule \(S\to A\) has \(\mu(A)=1\), and hence all terminal contributions in a
successful derivation are ultimately contained in this single start
component.  Therefore every word in \(L(G)\) is nonempty and
\(L(G)\subseteq\Sigma^+\).

Third, the generated language \(L(G)\) may nevertheless be the empty set.
Definition~\ref{def:working-mcfg} does not require the grammar to be
productive, so, for example, \(L(G)=\emptyset\) if there is no successful
terminal derivation from the start symbol.  Likewise, an individual
nonterminal \(A\) may satisfy \(L_A(G)=\emptyset\).  If \(G\) is reduced,
however, the start symbol is productive in particular, and therefore
\(L(G)\ne\emptyset\).
\end{remark}

\begin{remark}[Role of the working subfamily]
\label{rem:working-form-scope}
The working binary form is a presentation-level subfamily used only where a
compact binary presentation is quantitatively relevant.  In particular,
positive fan-out, one-letter rank-zero axioms, start separation, and the
absence of nonidentity unary rules are \emph{not} assumptions of the general
reconstruction theorem.  For a standard target MCFG, the proof of qualitative
learnability first uses the fan-out-preserving nondeleting normalization cited
above and then works directly with arbitrary finite rule rank.

For working presentations, individual tuple components may still be empty
because template components lie in \(\Gamma_\rho^*\).  The single-spine
condition introduced below remains only a sufficient presentation condition
for polynomial characteristic data; it is not a semantic restriction on
\(\Cmcf{f}{h}\).
\end{remark}

The following presentation-level notion will be used in the examples and in
the polynomial-data analysis of Section~\ref{sec:poly-time}.

\begin{definition}[Lexical nonterminal]
\label{def:lexical-nonterminal}
Let \(G\) be a working binary linear nondeleting MCFG.
A nonterminal \(A\in V\setminus\{S\}\) is \emph{lexical} if every rule with
left-hand side \(A\) is a terminal rule \(A\to(a)\) with \(a\in\Sigma\).
Equivalently, every tuple derived from a lexical nonterminal has arity \(1\)
and consists of a single terminal letter.
\end{definition}

\begin{definition}[Single-spine working MCFG]
\label{def:single-spine-working-mcfg}
A working binary linear nondeleting MCFG \(G\) is \emph{single-spine} if, for
every binary rule \(A\to\rho(B,C)\), at most one of the two child
\emph{occurrences} is labelled by a nonlexical nonterminal.
In particular, if a rule \(A\to\rho(B,B)\) exists for a nonlexical
nonterminal \(B\), then \(G\) is not single-spine.
Equivalently, in every derivation tree the nonlexical child occurrences form
at most one downward spine, with lexical leaves attached to that spine.
\end{definition}

\begin{proposition}[Inlining lexical children]
\label{prop:single-spine-unary-rank-collapse}
Every single-spine working binary linear nondeleting MCFG has an equivalent
standard MCFG with the same fan-out bound in which every nonstart production
has rank at most one.
\end{proposition}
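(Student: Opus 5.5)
The plan is to inline every lexical child occurrence directly into the template of its parent rule, one terminal letter at a time. Each lexical nonterminal \(D\) has fan-out one and derives exactly the finite set \(\Lambda_D:=\{a\in\Sigma : D\to(a)\in P\}\) of one-letter tuples (Definition~\ref{def:lexical-nonterminal}). Let \(G=(V,\Sigma,P,S,\mu)\) be single-spine. The new grammar \(G'\) will keep the same nonterminal set \(V\), the same fan-out map \(\mu\) and the same start symbol. Its rules are built as follows.
\begin{enumerate}[label=(\alph*),leftmargin=*]
\item Keep every start rule \(S\to A\) and every terminal rule \(A\to(a)\). These have rank one and zero.
\item For a binary rule \(\rho: A\to(\alpha_1,\ldots,\alpha_e)(B,C)\) whose two children are both lexical, add, for every \(a\in\Lambda_B\) and \(b\in\Lambda_C\), the rank-zero constant rule \(A\to(\alpha_1[x^1:=a,y^1:=b],\ldots,\alpha_e[x^1:=a,y^1:=b])\).
\item If exactly one child is lexical, say \(C\) (the case of \(B\) is symmetric), add, for every \(b\in\Lambda_C\), the rank-one rule \(A\to(\alpha_1[y^1:=b],\ldots,\alpha_e[y^1:=b])(B)\), whose variables are \(x^1,\ldots,x^{d_B}\).
\end{enumerate}
By the single-spine condition (Definition~\ref{def:single-spine-working-mcfg}), no binary rule has two nonlexical children, so these cases are exhaustive. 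Every nonstart rule of \(G'\) therefore has rank at most one. Fan-outs are unchanged, so the fan-out bound is preserved. Each substituted template is still linear and nondeleting in its remaining variables, so \(G'\) is a standard MCFG in the sense of Definition~\ref{def:standard-mcfg}. Lexical nonterminals and their terminal rules may simply be retained; they are harmless.

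For equivalence, I will show \(L_A(G)=L_A(G')\) for every \(A\in V\). This gives \(L(G)=L(G')\), since the start rules coincide.

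For the inclusion \(\subseteq\), induct on the least-fixed-point stages (equivalently, on derivation height). Take a tuple produced by \(\rho(\tuple u,\tuple v)\) with, say, \(C\) lexical. Then \(\tuple v=(b)\) for some \(b\in\Lambda_C\). Simultaneous substitution commutes with first substituting \(b\) for \(y^1\), so \(\rho(\tuple u,(b))\) equals the corresponding rule of \(G'\) applied to \(\tuple u\). Since \(\tuple u\in L_B(G')\) by induction, the tuple lies in \(L_A(G')\). The doubly lexical case gives a constant rule directly.

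For the inclusion \(\supseteq\), every rule of \(G'\) is simulated by the original binary rule together with a terminal rule for the lexical child. Induction on \(G'\)-derivations then completes the argument.

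The only point needing care is the identity \(L_D(G)=\{(a):a\in\Lambda_D\}\) for lexical \(D\), together with the substitution-commutation fact. Both are immediate from the definitions, so I expect no real obstacle.
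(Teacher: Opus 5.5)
Your proposal is correct and follows essentially the same route as the paper. Both substitute each terminal letter of a lexical child for its unique variable, which yields rank-one rules when one child is nonlexical and rank-zero constant rules when both are lexical, with the single-spine condition making these cases exhaustive; your two-way induction on derivation height simply spells out the language equivalence that the paper states in one line.
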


\begin{proof}
Consider a binary rule \(A\to\rho(B,C)\).  By single-spine, at least one child
occurrence is lexical.  If, say, \(C\) is lexical, then for every terminal rule
\(C\to(a)\) substitute \(a\) directly for the unique variable contributed by
that child in the template of \(\rho\).  This produces a rank-one rule with
child \(B\) and the same parent tuple for every derivation using that lexical
choice.  The same argument applies when \(B\) is lexical.  If both children are lexical,
expand the finitely many pairs of terminal choices and obtain rank-zero
constant rules.  Repeating this finite inlining operation for every binary rule
does not change the fan-out.
\end{proof}

\begin{corollary}[Fan-out-one single-spine presentations are linear]
\label{cor:fanout-one-single-spine-linear}
In particular, every language generated by a fan-out-one single-spine working
presentation is a linear context-free language.
\end{corollary}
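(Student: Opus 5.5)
The plan is to apply Proposition~\ref{prop:single-spine-unary-rank-collapse} and then read the resulting rank-at-most-one fan-out-one grammar directly as a linear context-free grammar. Let \(G\) be a fan-out-one single-spine working presentation. By Proposition~\ref{prop:single-spine-unary-rank-collapse} there is an equivalent standard MCFG \(G'\) with fan-out bound one in which every nonstart production has rank at most one. The remaining rules of \(G'\) are the start rules \(S\to A\), and each of these is itself a rank-one identity rule. So every production of \(G'\) has rank zero or one, and every nonterminal has fan-out exactly one.

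Next I translate each production of \(G'\) into a context-free production. A rank-zero rule \(A\to(w)\) with \(w\in\Sigma^*\) becomes \(A\to w\). A rank-one rule \(A\to(\alpha)(B)\) has fan-out one on both sides. Because \(G'\) comes from inlining terminals into nondeleting working templates, the single variable \(x^1\) of \(B\) occurs exactly once in \(\alpha\). Hence \(\alpha=u\,x^1\,v\) with \(u,v\in\Sigma^*\), and the rule becomes \(A\to uBv\). Start rules \(S\to A\) become \(S\to A\).

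The resulting context-free grammar has at most one nonterminal on every right-hand side, so it is linear. Rule application commutes with this translation: substituting a string \(t\) for \(x^1\) in \(u\,x^1\,v\) gives \(utv\), which is exactly the string produced by the corresponding context-free step. An induction on derivation height then shows that \(L_A(G')\) coincides, as a set of one-tuples, with the set of strings derivable from \(A\) in the context-free grammar. This gives the same language, and hence \(L(G)=L(G')\) is linear context-free.

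The only point needing care is the claim that each rank-one rule has exactly one variable occurrence. If a deleting rule did arise, the variable would be absent, \(\alpha\) would be a plain terminal string, and the rule would simply become the terminal production \(A\to\alpha\) whenever \(B\) is productive. That is still linear, so the conclusion holds either way.
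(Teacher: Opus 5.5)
Your proposal is correct and follows the paper's own proof. Like the paper, you apply Proposition~\ref{prop:single-spine-unary-rank-collapse} and then read the rank-zero fan-out-one templates as terminal productions and the unary templates \(u\,x^1\,v\) as linear productions \(A\to uBv\). Your use of nondeletion to get exactly one variable occurrence, and your induction on derivation height, make explicit what the paper's two-line argument leaves implicit.
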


\begin{proof}
Apply Proposition~\ref{prop:single-spine-unary-rank-collapse}.  At fan-out one,
a rank-at-most-one MCFG is an ordinary linear context-free grammar after
writing its unary templates as productions \(A\to uBv\) and its rank-zero
templates as terminal productions.
\end{proof}

\begin{remark}
Thus single-spine is a strong presentation restriction, not merely a mild
binary normal-form convention.  Its role here is quantitative: it is one
sufficient condition for polynomial exposure.  At fan-out two it still covers
genuinely non-context-free examples such as \(L_3\) and \(L_{\times}\).
\end{remark}

\begin{remark}[Compatible binary LCFRS notation]
Already-binary LCFRS presentations satisfying the working restrictions of
Definition~\ref{def:working-mcfg} are identified below with the corresponding
MCFG tuple-template notation.
For example, for fan-out-two nonterminals \(A,B,C\), the LCFRS rule
\(A(x_1ay_1,x_2by_2)\to B(x_1,x_2)\,C(y_1,y_2)\)
is written in the MCFG notation of this paper as
\(A\to(x^1ay^1,x^2by^2)(B,C)\).
The former writes the placement of child components directly on the left-hand
side, while the latter records the same placement as a tuple template; the two
notations describe the same composition operation.
This is only a notation-level correspondence and does not assert that an
arbitrary LCFRS can be normalized into the present working form.
\end{remark}

\subsection{Sentence contexts and tuple distributions}

\begin{definition}[Sentence context]
\label{def:sentence-context}
A \emph{sentence context of arity \(d\)} is a word of the form
\(E=u_0\blank_{\sigma(1)}u_1\cdots\blank_{\sigma(d)}u_d\),
where \(\sigma\) is a permutation of \(\{1,\ldots,d\}\), each
\(u_i\in\Sigma^*\), and each named hole \(\blank_i\) occurs exactly once.
For \(\tuple w=(w_1,\ldots,w_d)\), write \(E[\tuple w]\) for the string
obtained by substituting \(w_i\) for \(\blank_i\).
\end{definition}

\begin{definition}[Tuple occurrence in a sample word]
\label{def:tuple-occurrence}
Let \(K\subseteq\Sigma^*\) and let \(w=a_1\cdots a_n\in K\).
Number the cut positions of \(w\) as
\(\cutpos{0}a_1\cutpos{1}a_2\cutpos{2}\cdots a_n\cutpos{n}\).
For \(0\le\ell\le r\le n\), write
\(w[\ell:r]:=a_{\ell+1}\cdots a_r\); in particular,
\(w[\ell:\ell]=\eps\).

Let \(d\ge1\).  An \emph{arity-\(d\) tuple occurrence} in \(w\) is data
\(\mathfrak{o}=(\sigma,(\ell_i,r_i)_{i=1}^d)\), where \(\sigma\) is a
permutation of \(\{1,\ldots,d\}\) and
\(0\le\ell_{\sigma(1)}\le r_{\sigma(1)}
\le\cdots\le\ell_{\sigma(d)}\le r_{\sigma(d)}\le n\).
Empty intervals \(\ell_i=r_i\) are allowed, and when several empty intervals
occupy the same cut position, \(\sigma\) also records their local order.

For such an occurrence, put
\(\tuple x=(w[\ell_1:r_1],\ldots,w[\ell_d:r_d])\).
Replacing the specified intervals by their corresponding named holes
\(\blank_i\) yields an arity-\(d\) sentence context \(E\) satisfying
\(E[\tuple x]=w\).  We identify the occurrence with the pair
\((E,\tuple x)\).

For example, let
\(w=\cutpos{0}a\cutpos{1}b\cutpos{2}c\cutpos{3}d\cutpos{4}\), let \(d=2\),
and take \((\ell_1,r_1)=(2,4)\) and \((\ell_2,r_2)=(0,1)\).
The second component occurs first from the left, so \(\sigma=(2,1)\);
the corresponding tuple is \((cd,a)\) and the corresponding sentence context
is \(\blank_2b\blank_1\).
\end{definition}

Template components in MCFG rules may be empty, which is why
Definition~\ref{def:tuple-occurrence} allows empty intervals.
For example, for \(w=ab\) with cut positions
\(\cutpos{0}a\cutpos{1}b\cutpos{2}\), the empty word at the middle cut is
represented by the interval \([1,1]\).
Thus, if both components of the arity-two tuple \((\eps,\eps)\) are observed
at this cut, then \((\ell_1,r_1)=(\ell_2,r_2)=(1,1)\).
Even in this case, \(\sigma=(1,2)\) and \(\sigma=(2,1)\) determine,
respectively, \(a\blank_1\blank_2b\) and \(a\blank_2\blank_1b\), and hence
are treated as distinct tuple occurrences.

\begin{definition}[Tuple distribution]
\label{def:tuple-distribution}
Let \(d\ge1\).  For \(L\subseteq\Sigma^*\) and
\(\tuple x\in(\Sigma^*)^d\), define
\[
  \D_L^{(d)}(\tuple x)
  :=
  \left\{
    E
    \;\middle|\;
    E\text{ is an arity-}d\text{ sentence context and }
    E[\tuple x]\in L
  \right\}.
\]
When the arity is clear, write \(\D_L(\tuple x)\).
\end{definition}

\begin{definition}[\((f,h)\)-tuple substitutability]
\label{def:tuple-substitutable}
Let \(f\ge1\) and let \(h\colon\Sigma^*\to M\) be an explicit finite monoid
homomorphism.  A language \(L\subseteq\Sigma^*\) is
\emph{\((f,h)\)-tuple-substitutable} if, for every \(1\le d\le f\) and all
\(\tuple x,\tuple y\in(\Sigma^*)^d\), the implication
\(h^{(d)}(\tuple x)=h^{(d)}(\tuple y)\) and
\(\D_L^{(d)}(\tuple x)\cap\D_L^{(d)}(\tuple y)\ne\emptyset\) imply
\(\D_L^{(d)}(\tuple x)=\D_L^{(d)}(\tuple y)\).
Recall from Definition~\ref{def-monoid-morphism} that
\(h^{(d)}(\tuple w)=(h(w_1),\ldots,h(w_d))\in M^d\).
\end{definition}

For \(d=1\), sentence contexts are ordinary two-sided contexts
\(u\blank_1v\), so Definition~\ref{def:tuple-substitutable} specializes to the
fixed-\(h\) two-sided substitutability condition for CFLs
\cite{kuriyamaCFG}.

\begin{definition}[The target class]
\label{def:class-cmcf}
Fix \(f\ge1\) and an explicit finite monoid homomorphism \(h\).  The class
\(\Cmcf{f}{h}\) consists of all languages \(L\subseteq\Sigma^*\) such that
\[
  L\in f\text{-}\mathrm{MCFL}
  \qquad\text{and}\qquad
  L\text{ is }(f,h)\text{-tuple-substitutable}.
\]
\end{definition}

For fixed \(f\) and \(h\), we call \(\Cmcf{f}{h}\) the
\emph{fixed-observation slice determined by \(h\)}.  The word ``slice''
emphasizes that the observation morphism is fixed externally and shared by
all targets in the class; it is not inferred separately from each positive
text.

\begin{remark}[Semantic nature of the target class]
\label{rem:semantic-class}
Membership in \(\Cmcf{f}{h}\) is a semantic promise imposed on the target
language.  The learner is defined on every finite positive sample, but it does
not receive an MCFG presentation of the target language and does not test this
promise from positive data.  The present paper neither asserts nor uses a
procedure that decides this promise from a given MCFG presentation.  Whether
\((f,h)\)-tuple substitutability is decidable from an arbitrary standard MCFG
presentation is outside the scope of the paper.  Related decision problems for
congruential properties have been studied in the context-free setting
\cite{Clark2017,KanazawaKappe2019}.
\end{remark}

\begin{definition}[Refinement of finite observation morphisms]
\label{def:h-refinement}
Let \(h\colon\Sigma^*\to M\) and \(h'\colon\Sigma^*\to M'\) be explicit finite
homomorphisms.  We say that \(h'\) \emph{refines} \(h\), and write
\(h\preceq h'\), if there is a monoid homomorphism
\(\pi\colon M'\to M\) such that \(h=\pi\circ h'\).
\end{definition}

\begin{proposition}[Monotonicity under refinement of the observation morphism]
\label{prop:h-refinement-monotonicity}
If \(h\preceq h'\) and \(L\) is \((f,h)\)-tuple-substitutable, then \(L\) is
\((f,h')\)-tuple-substitutable.  Consequently, for every pair of explicit
finite homomorphisms with \(h\preceq h'\),
\(\Cmcf{f}{h}\subseteq \Cmcf{f}{h'}\).
\end{proposition}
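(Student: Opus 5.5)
The argument unfolds the definition of \((f,h)\)-tuple substitutability and exploits the fact that a refinement can only make the guard \(h^{(d)}(\tuple x)=h^{(d)}(\tuple y)\) harder to satisfy. Since the substitutability condition only uses the guard as a hypothesis, a stronger guard gives a weaker requirement. The argument is direct, with no case analysis beyond fixing the arity \(d\).

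First, fix the monoid homomorphism \(\pi\colon M'\to M\) with \(h=\pi\circ h'\), supplied by Definition~\ref{def:h-refinement}. Lift it componentwise to \(\pi^{(d)}\colon (M')^d\to M^d\). Then \(h^{(d)}=\pi^{(d)}\circ h'^{(d)}\) on \((\Sigma^*)^d\) for every \(d\).

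Next, fix \(1\le d\le f\) and tuples \(\tuple x,\tuple y\in(\Sigma^*)^d\) satisfying \(h'^{(d)}(\tuple x)=h'^{(d)}(\tuple y)\) and \(\D_L^{(d)}(\tuple x)\cap\D_L^{(d)}(\tuple y)\ne\emptyset\). Applying \(\pi^{(d)}\) gives \(h^{(d)}(\tuple x)=h^{(d)}(\tuple y)\). The shared-context hypothesis does not mention the observation at all, so it is unchanged. The \((f,h)\)-substitutability of \(L\) therefore yields \(\D_L^{(d)}(\tuple x)=\D_L^{(d)}(\tuple y)\). This is exactly the \((f,h')\) implication of Definition~\ref{def:tuple-substitutable}.

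For the class inclusion, note that membership in \(\Cmcf{f}{h}\) is the conjunction of \(L\in f\text{-}\mathrm{MCFL}\), which is independent of the observation, and \((f,h)\)-substitutability. By the first part, the latter implies \((f,h')\)-substitutability, so \(L\in\Cmcf{f}{h'}\) by Definition~\ref{def:class-cmcf}.

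There is no real obstacle. The only point needing care is the direction of the implication: a refinement has finer fibers, so \(h'\)-equality implies \(h\)-equality and not the converse. Only the existence of \(\pi\) as a function is used; its being a monoid homomorphism plays no role here.
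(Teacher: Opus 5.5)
Your proposal is correct and follows essentially the same route as the paper: apply \(\pi\) componentwise to turn \(h'\)-type equality into \(h\)-type equality, leave the shared-context premise unchanged, invoke \((f,h)\)-substitutability, and note that \(f\)-MCFL membership does not depend on the observation. Your side remark is also right: the argument only needs \(\pi\) to be a function with \(h=\pi\circ h'\), not a monoid homomorphism.
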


\begin{proof}
Let \(\pi\colon M'\to M\) satisfy \(h=\pi\circ h'\).  Suppose that
\(h'^{(d)}(\tuple x)=h'^{(d)}(\tuple y)\) and that \(\tuple x,\tuple y\) share an
accepting arity-\(d\) sentence context.  Applying \(\pi\) componentwise gives
\(h^{(d)}(\tuple x)=h^{(d)}(\tuple y)\).  The \((f,h)\)-tuple-substitutability of
\(L\) therefore gives \(\D_L^{(d)}(\tuple x)=\D_L^{(d)}(\tuple y)\).
This is precisely the \((f,h')\)-tuple-substitutability condition.
Membership in \(f\text{-}\mathrm{MCFL}\) is unchanged, so the class
inclusion follows.
\end{proof}

\begin{example}[Refinement distinguishes grammatical phases]
\label{ex:refinement-grammar-intuition}
Let \(\Sigma=\{a,b\}\) and consider
\(L_{\mathrm{eq}}:=\{a^n b^n:n\ge1\}\).

First, the trivial observation
\(\mathbf 1\colon\Sigma^*\to\{e\}\)
maps every string to the same observation value.

Now let
\(M_{\mathrm{phase}}=(\{0,1\},\lor,0)\)
and define
\(h_{\mathrm{phase}}\colon\Sigma^*\to M_{\mathrm{phase}}\)
to return \(0\) if no \(b\) has yet occurred and \(1\) once a \(b\) has
occurred.  Equivalently,
\(h_{\mathrm{phase}}(w)=0\) for \(w\in a^*\), and
\(h_{\mathrm{phase}}(w)=1\) otherwise.
If
\(\pi\colon M_{\mathrm{phase}}\to\{e\}\)
is the unique homomorphism to the one-element monoid, then
\(\mathbf 1=\pi\circ h_{\mathrm{phase}}\), so
\(\mathbf 1\preceq h_{\mathrm{phase}}\).

Consider the substrings \(x=a\) and \(y=a^2b\).
For the sentence context \(E=\blank_1 b\),
\(E[x]=ab\in L_{\mathrm{eq}}\) and
\(E[y]=a^2b^2\in L_{\mathrm{eq}}\).
Positive data alone therefore make the two substrings look like plausible
candidates for identification.

They are not fully interchangeable.  For
\(F=\blank_1 abb\),
\(F[x]=a^2b^2\in L_{\mathrm{eq}}\), whereas
\(F[y]=a^2babb\notin L_{\mathrm{eq}}\).
Thus identifying \(x\) and \(y\) would be unsafe.

The trivial observation \(\mathbf 1\) cannot distinguish them, but
\(h_{\mathrm{phase}}(a)=0\) and
\(h_{\mathrm{phase}}(a^2b)=1\).
The refined observation therefore distinguishes a substring that is still in
the \(a\)-phase from one that has already entered the \(b\)-phase, preventing
this unsafe grammatical identification.

In this example, refinement does more than increase the number of observation
values: it retains finite-state information about the stage of the generative
process that is difficult to recover from positive data alone.
The phenomenon already occurs at fan-out one, so the role of finite
observation is not specific to multiple context-free grammars.
\end{example}

\begin{definition}[Fixed-\(h\) distributional equivalence]
\label{def:distributional-equivalence}
For \(\tuple{u},\tuple{x}\in(\Sigma^*)^d\), write
\(\tuple{u}\equiv_L^d\tuple{x}\) if
\(h^{(d)}(\tuple{u})=h^{(d)}(\tuple{x})\) and
\(\D_L^{(d)}(\tuple{u})=\D_L^{(d)}(\tuple{x})\).
\end{definition}

\begin{remark}[Distributional equivalence and concrete witnesses]
\label{rem:distributional-equivalence-witness}
The relation \(\equiv_L^d\) is semantic.  Its validity by itself does not
require the existence of a concrete accepting context connecting the two
tuples, nor of a witness appearing in the positive sample.  Indeed, it may
happen that
\(\D_L^{(d)}(\tuple u)=\D_L^{(d)}(\tuple x)=\emptyset\); if the two tuples
also have the same \(h\)-type, then
\(\tuple u\equiv_L^d\tuple x\) still holds.

By contrast, the shared accepting context used in
Lemma~\ref{lem:shared-context}, as well as the concrete tuple occurrences and
composition witnesses enumerated later by the learner, are finite pieces of
evidence observable from positive data.  We therefore distinguish the
semantic relation \(\equiv_L^d\) from the concrete witnesses used to justify
that relation from positive data.
\end{remark}

\begin{lemma}[Shared-context substitutability]
\label{lem:shared-context}
Let \(L\) be \((f,h)\)-tuple-substitutable.  If \(d\le f\),
\(h^{(d)}(\tuple x)=h^{(d)}(\tuple y)\), and a concrete arity-\(d\) context
\(E\) satisfies \(E[\tuple x],E[\tuple y]\in L\), then
\(\tuple x\equiv_L^d\tuple y\).
\end{lemma}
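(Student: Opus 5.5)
The lemma is an unpacking of Definition~\ref{def:tuple-substitutable} followed by Definition~\ref{def:distributional-equivalence}, so the plan is to verify the hypotheses of the substitutability implication directly.

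First I will observe that the concrete context \(E\) is itself an element of both distributions. By assumption \(E\) is an arity-\(d\) sentence context. Since \(E[\tuple x]\in L\), Definition~\ref{def:tuple-distribution} gives \(E\in\D_L^{(d)}(\tuple x)\). Likewise \(E[\tuple y]\in L\) gives \(E\in\D_L^{(d)}(\tuple y)\). Hence
\(\D_L^{(d)}(\tuple x)\cap\D_L^{(d)}(\tuple y)\ne\emptyset\).

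Next I will invoke \((f,h)\)-tuple-substitutability. We have \(1\le d\le f\), where \(d\ge1\) holds because sentence contexts and tuple occurrences are only defined for positive arity. We also have \(h^{(d)}(\tuple x)=h^{(d)}(\tuple y)\), and the intersection is nonempty. So Definition~\ref{def:tuple-substitutable} yields \(\D_L^{(d)}(\tuple x)=\D_L^{(d)}(\tuple y)\).

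Combining this equality with the hypothesis \(h^{(d)}(\tuple x)=h^{(d)}(\tuple y)\) gives exactly the two clauses defining \(\tuple x\equiv_L^d\tuple y\) in Definition~\ref{def:distributional-equivalence}.

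There is no real obstacle. The only point needing care is that membership of \(E\) in \(\D_L^{(d)}\) uses the same arity \(d\) and the same named-hole convention for both tuples, and this holds because a single context \(E\) is substituted into both.
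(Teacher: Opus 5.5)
Your proof is correct and follows the paper's own argument: \(E\) lies in both tuple distributions, so Definition~\ref{def:tuple-substitutable} gives equality of the distributions, and the assumed equality of \(h\)-types supplies the remaining clause of \(\tuple x\equiv_L^d\tuple y\). Your remark that \(d\ge1\) holds is a small extra check that the paper leaves implicit.
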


\begin{proof}
The context \(E\) belongs to both tuple distributions, so
Definition~\ref{def:tuple-substitutable} gives equality of the distributions;
the type equality is assumed.
\end{proof}

\section{Examples}
\label{sec:running-examples}

Before the abstract reconstruction proof, we record concrete grammars showing
what the fixed observation morphism captures in familiar block-synchronization
languages.  A fixed regular envelope supplies only coarse information such as
the order of the letter blocks.  Equalities between block lengths are not
encoded by the envelope; they are recovered by the learner from the
sentence-context distributions of tuples observed in positive examples.  We
also verify the semantic promise for these examples, so that they really belong
to the relevant fixed-observation slices.

As in the examples below, MCFLs defined by numerical relations among letter
blocks provide basic test cases for the expressive power of the formalism.
For comparisons of the numbers of letters in consecutive blocks and the
resulting separations in the MCFL fan-out hierarchy, see
\cite{LehnerLindorfer2021}.

\begin{proposition}[Regular control and an abelian balance constraint]
\label{prop:regular-abelian-balance}
Let \(R\subseteq\Sigma^*\) be a regular language, represented by a finite
monoid \(M\), a homomorphism \(h_R\colon\Sigma^*\to M\), and a subset
\(P\subseteq M\) such that \(R=h_R^{-1}(P)\).
Let \(A\) be an abelian group written additively,
let \(\varphi\colon\Sigma^*\to A\) be a monoid homomorphism, and let
\(H\le A\) be a subgroup.
Then
\(L:=R\cap\varphi^{-1}(H)\)
is \((f,h_R)\)-tuple-substitutable for every \(f\ge1\).
\end{proposition}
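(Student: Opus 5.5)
Fix \(1\le d\le f\) and tuples \(\tuple x,\tuple y\in(\Sigma^*)^d\) with \(h_R^{(d)}(\tuple x)=h_R^{(d)}(\tuple y)\) and a common context \(E_0\in\D_L^{(d)}(\tuple x)\cap\D_L^{(d)}(\tuple y)\). By symmetry it suffices to show \(\D_L^{(d)}(\tuple x)\subseteq\D_L^{(d)}(\tuple y)\), so I take an arbitrary \(E\) with \(E[\tuple x]\in L\) and show \(E[\tuple y]\in L\). This splits into two independent checks: the regular component \(R\) and the balance component \(\varphi^{-1}(H)\).

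\emph{The regular component.} Write \(E=u_0\blank_{\sigma(1)}u_1\cdots\blank_{\sigma(d)}u_d\). Since \(h_R\) is a homomorphism,
\[
  h_R(E[\tuple y])=h_R(u_0)h_R(y_{\sigma(1)})h_R(u_1)\cdots h_R(y_{\sigma(d)})h_R(u_d),
\]
and replacing each \(h_R(y_i)\) by \(h_R(x_i)\) gives \(h_R(E[\tuple x])\). As \(E[\tuple x]\in R=h_R^{-1}(P)\), we have \(h_R(E[\tuple y])=h_R(E[\tuple x])\in P\), so \(E[\tuple y]\in R\). This step uses only the observation equality and not the shared context.

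\emph{The balance component.} Since \(A\) is abelian, \(\varphi\) is insensitive to order, so for any context \(E\),
\[
  \varphi(E[\tuple w])=\varphi(u_0\cdots u_d)+\textstyle\sum_{i=1}^d\varphi(w_i).
\]
Abbreviate \(c_E:=\varphi(u_0\cdots u_d)\) and \(s(\tuple w):=\sum_i\varphi(w_i)\). The shared context \(E_0\) gives \(c_{E_0}+s(\tuple x)\in H\) and \(c_{E_0}+s(\tuple y)\in H\). Subtracting, \(s(\tuple y)-s(\tuple x)\in H\). Then for the arbitrary \(E\),
\[
  \varphi(E[\tuple y])=\varphi(E[\tuple x])+\bigl(s(\tuple y)-s(\tuple x)\bigr)\in H+H=H .
\]
Thus \(E[\tuple y]\in\varphi^{-1}(H)\), and together with the previous step \(E[\tuple y]\in L\).

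The only real point is the balance step. It uses commutativity of \(A\) to collapse the permuted context into a single constant \(c_E\), and the subgroup property to transport membership in \(H\) from the shared context \(E_0\) to an arbitrary \(E\). The argument is uniform in \(d\), so it holds for every \(f\ge1\). No MCFL membership is claimed, since the statement concerns substitutability only.
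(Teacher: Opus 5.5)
Your proof is correct and follows essentially the same route as the paper. The regular control is handled by evaluating the context homomorphically under equal \(h_R\)-types, and the balance constraint by observing that \(\varphi(F[\tuple y])-\varphi(F[\tuple x])\) is a constant independent of \(F\) that lies in \(H\) because of the shared context. Your reduction to one inclusion by symmetry is only a cosmetic variant of the paper's biconditional formulation.
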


\begin{proof}
Fix \(1\le d\le f\), and let
\(\tuple{x}=(x_1,\ldots,x_d)\) and
\(\tuple{y}=(y_1,\ldots,y_d)\) satisfy
\(h_R^{(d)}(\tuple{x})=h_R^{(d)}(\tuple{y})\) and
\(\D_L^{(d)}(\tuple{x})\cap\D_L^{(d)}(\tuple{y})\ne\emptyset\).
It suffices to show
\(\D_L^{(d)}(\tuple{x})=\D_L^{(d)}(\tuple{y})\).

Fix
\(E\in\D_L^{(d)}(\tuple{x})\cap\D_L^{(d)}(\tuple{y})\).
Then \(E[\tuple{x}],E[\tuple{y}]\in L\), and in particular
\(\varphi(E[\tuple{x}]),\varphi(E[\tuple{y}])\in H\).

First consider the regular control.
Let
\(F=u_0\blank_{\sigma(1)}u_1\cdots
\blank_{\sigma(d)}u_d\)
be an arbitrary arity-\(d\) sentence context.
Since
\(h_R^{(d)}(\tuple{x})=h_R^{(d)}(\tuple{y})\),
we have \(h_R(x_i)=h_R(y_i)\) for every \(i\), and hence
\(h_R(F[\tuple{x}])=h_R(F[\tuple{y}])\).
Because \(R=h_R^{-1}(P)\),
\(F[\tuple{x}]\in R\) if and only if \(F[\tuple{y}]\in R\).

Next consider the abelian balance constraint.
Put
\(\Delta:=\sum_{i=1}^d(\varphi(y_i)-\varphi(x_i))\in A\).
Since \(A\) is abelian, the order of the holes is irrelevant, and for every
arity-\(d\) sentence context \(F\),
\(\varphi(F[\tuple{y}])-\varphi(F[\tuple{x}])=\Delta\).
The fixed terminal parts of \(F\) cancel, leaving only the sum of the
componentwise differences.

Taking \(F=E\) gives
\(\Delta=\varphi(E[\tuple{y}])-\varphi(E[\tuple{x}])\).
Both terms lie in \(H\), so \(\Delta\in H\).
Hence for every \(F\),
\(\varphi(F[\tuple{y}])=\varphi(F[\tuple{x}])+\Delta\), and therefore
\(F[\tuple{x}]\in\varphi^{-1}(H)\) if and only if
\(F[\tuple{y}]\in\varphi^{-1}(H)\).

Combining this with \(L=R\cap\varphi^{-1}(H)\), we obtain, for every
arity-\(d\) sentence context \(F\),
\(F[\tuple{x}]\in L\) if and only if \(F[\tuple{y}]\in L\).
Thus
\(\D_L^{(d)}(\tuple{x})=\D_L^{(d)}(\tuple{y})\).

Since \(d\le f\) was arbitrary, \(L\) is
\((f,h_R)\)-tuple-substitutable.
\end{proof}

\begin{example}[A non-block example: MIX]
\label{ex:mix}
Let \(\Sigma:=\{a,b,c\}\) and define
\(\mathrm{MIX}:=\{w\in\Sigma^*:|w|_a=|w|_b=|w|_c\}\).
Salvati proved that \(\mathrm{MIX}\) is a \(2\)-MCFL
\cite{Salvati2015}.

Let \(A:=\mathbb Z^2\) be the additive group and define
\(\varphi\colon\Sigma^*\to\mathbb Z^2\) by
\(\varphi(w):=(|w|_a-|w|_b,\ |w|_b-|w|_c)\).
By additivity of Parikh counts, \(\varphi\) is a monoid homomorphism from
\((\Sigma^*,\cdot,\eps)\) to \((\mathbb Z^2,+,(0,0))\).
If \(H:=\{(0,0)\}\le\mathbb Z^2\), then for every \(w\in\Sigma^*\),
\(w\in\mathrm{MIX}\) if and only if \(\varphi(w)\in H\).  Hence
\(\mathrm{MIX}=\varphi^{-1}(H)\).

Next let \(\mathbf 1\colon\Sigma^*\to\{e\}\) be the unique homomorphism to
the one-element monoid.  With \(R:=\Sigma^*\), we have
\(R=\mathbf 1^{-1}(\{e\})\).  Proposition~\ref{prop:regular-abelian-balance},
applied with \(h_R=\mathbf 1\), \(A=\mathbb Z^2\), and
\(H=\{(0,0)\}\), therefore shows that \(\mathrm{MIX}\) is
\((f,\mathbf 1)\)-tuple-substitutable for every \(f\ge1\).

Since \(\mathrm{MIX}\) is a \(2\)-MCFL, it is an \(f\)-MCFL for every
\(f\ge2\).  Hence Definition~\ref{def:class-cmcf} gives
\(\mathrm{MIX}\in\Cmcf{f}{\mathbf 1}\) for every \(f\ge2\).
Finally, the one-element monoid has the minimum possible cardinality of a
finite monoid, so Definition~\ref{def:intrinsic-observation-size} gives
\(\obs_f(\mathrm{MIX})\le1\).  Conversely every finite monoid image has
cardinality at least \(1\), so \(\obs_f(\mathrm{MIX})\ge1\).  Therefore
\(\obs_f(\mathrm{MIX})=1\) for every \(f\ge2\).
\end{example}

\begin{remark}[MIX is not single-spine at fan-out two]
Although MIX is a \(2\)-MCFL \cite{Salvati2015}, it does not admit a
fan-out-two single-spine working presentation.
Indeed, Proposition~\ref{prop:single-spine-unary-rank-collapse} would turn any
such presentation into a non-branching \(2\)-MCFG.
Non-branching MCFGs reduce to the well-nested case, and well-nested
\(2\)-MCFLs coincide with tree-adjoining languages.
Kanazawa and Salvati showed, however, that MIX is not a tree-adjoining
language \cite{KanazawaSalvati2012}.
\end{remark}

\begin{proposition}[The copy language has no finite witnessing observation]
\label{prop:copy-infinite-observation}
Let \(\Sigma:=\{a,b\}\) and define
\(\mathrm{COPY}:=\{ww:w\in\Sigma^+\}\).
Then \(\mathrm{COPY}\in2\text{-}\mathrm{MCFL}\) but
\(\obs_2(\mathrm{COPY})=\infty\).  Consequently,
\(\Cmcf{2}{\bullet}\subsetneq2\text{-}\mathrm{MCFL}\).
\end{proposition}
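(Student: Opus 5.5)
We argue in three steps: COPY is a 2-MCFL, no finite observation witnesses it, and the class separation follows. We read \(\obs_2(L)\) as the least cardinality of a finite monoid \(M\) admitting an explicit homomorphism \(h\colon\Sigma^*\to M\) with \(L\) being \((2,h)\)-tuple-substitutable, set to \(\infty\) when no such \(h\) exists. Accordingly, \(\Cmcf{2}{\bullet}\) is the union over all finite \(h\) of \(\Cmcf{2}{h}\).

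\emph{COPY is a 2-MCFL.} We use the grammar
\[
S\to(x^1x^2)(A),\qquad A\to(a,a),\qquad A\to(b,b),\qquad A\to(x^1a,x^2a)(A),\qquad A\to(x^1b,x^2b)(A).
\]
By induction on derivations, \(L_A=\{(w,w):w\in\Sigma^+\}\), and hence \(L(G)=\mathrm{COPY}\).

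\emph{No finite observation works.} We argue by contradiction, already at dimension \(d=1\), which suffices since \(d\le f=2\). Suppose \(h\colon\Sigma^*\to M\) with \(M\) finite makes COPY \((2,h)\)-tuple-substitutable. Since \(M\) is finite, pigeonhole on the infinitely many words \(a^nb\) gives \(m\ne n\) with \(h(a^mb)=h(a^nb)\). Put \(x=a^mb\) and \(y=a^nb\). The contexts \(E_x=\blank_1x\) and \(E_y=\blank_1y\) show \(\D(x)\) and \(\D(y)\) are both nonempty, but we need a single shared context. To obtain one, use a two-sided context \(E=u\blank_1v\) with \(E[x],E[y]\in\mathrm{COPY}\).

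A cleaner route is to pick \(x\) and \(y\) with equal length parity and choose the context so that both fillings become squares. Take \(E=\blank_1\), so that \(E[x]=x\) and \(E[y]=y\), and restrict attention to words that are themselves squares. Pigeonhole on the infinite family \(z_n=a^nba^nb\in\mathrm{COPY}\) gives \(n\ne m\) with \(h(z_n)=h(z_m)\). The empty context \(\blank_1\) is then shared, since \(z_n,z_m\in\mathrm{COPY}\). Substitutability now forces \(\D(z_n)=\D(z_m)\). But the context \(\blank_1 z_n\) accepts \(z_n\), because \(z_nz_n\) is a square, while it rejects \(z_m\): the word \(z_mz_n=a^mba^mba^nba^nb\) has length \(2m+2n+4\) and is not a square. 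Indeed, splitting it in half yields halves whose \(b\)-counts or \(a\)-block lengths differ when \(m\ne n\); concretely, the first half is \(a^mba^mb\) followed by a prefix of \(a^nba^nb\) of length \(n-m\) when \(n>m\), while the second half begins with \(a^{m}\) and contains two \(b\)'s. This contradiction shows that no finite \(h\) exists, so \(\obs_2(\mathrm{COPY})=\infty\).

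\emph{Class separation.} COPY lies in \(2\text{-}\mathrm{MCFL}\) but in no \(\Cmcf{2}{h}\), which gives the strict inclusion; the inclusion \(\Cmcf{2}{\bullet}\subseteq 2\text{-}\mathrm{MCFL}\) holds by Definition~\ref{def:class-cmcf}.

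The main obstacle is the non-square verification for \(z_mz_n\), where the half-split case analysis must be done carefully. If that becomes messy, a simpler choice is the context \(\blank_1 a^nba^nb\) with WLOG \(m<n\), comparing the count of \(b\)'s in each half of \(z_mz_n\): the midpoint falls inside the \(a^n\) block of \(z_n\), so the first half contains exactly two \(b\)'s and the second half also two, and one then compares the lengths of the leading \(a\)-blocks of the two halves (\(m\) versus \(n-m\) or similar) to exhibit the mismatch.
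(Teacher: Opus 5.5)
Your argument is correct, but it takes a different route from the paper. The paper works at arity two. Pigeonhole over the $2^n$ words of length $n$ gives distinct $u,v$ with $h(u)=h(v)$. The tuples $(u,u)$ and $(v,v)$ share the context $\blank_1\blank_2$, and $\blank_1\blank_2uu$ separates them. The non-square check for $vvuu$ is immediate, because its two halves $vv$ and $uu$ have equal length.

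You instead work at arity one. Your witnesses are the squares $z_n=a^nba^nb$, with the empty context $\blank_1$ as the shared context and $\blank_1z_n$ as the separating context. This gives a slightly stronger conclusion: COPY already fails the $d=1$ clause. So it is not $(f,h)$-tuple-substitutable for any $f\ge1$ and any finite $h$, i.e.\ $\obs_f(\mathrm{COPY})=\infty$ for every $f\ge1$, not just $f=2$. The price is that your pigeonhole yields words of different lengths, so the non-square check takes more work.

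As written, that check does not actually exhibit the mismatch. For $n>m$ the two halves of $z_mz_n$ are $a^mba^mba^{n-m}$ and $a^mba^nb$. Both begin with $a^m$ and both contain two $b$'s. So neither the features you list nor a comparison of leading $a$-blocks (``$m$ versus $n-m$'') distinguishes them.

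Here is a clean repair that handles both orderings of $m,n$:
\begin{itemize}
\item The word $z_mz_n$ has exactly four $b$'s, at positions $m+1$, $2m+2$, $2m+n+3$, $2m+2n+4$, and it ends in $b$.
\item If $z_mz_n=uu$, then $u$ ends in $b$ and contains exactly two $b$'s.
\item Hence the midpoint $m+n+2$ must be the position $2m+2$ of the second $b$, forcing $m=n$.
\end{itemize}

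Alternatively, stay at arity one but borrow the paper's equal-length witnesses. Take $x=uu$ and $y=vv$ with $|u|=|v|$, $u\ne v$, $h(u)=h(v)$, shared context $\blank_1$, and separating context $\blank_1uu$; the check is then trivial. You should also delete the abandoned first attempt with $a^mb$ and $a^nb$, which plays no role in the final argument.
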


\begin{proof}
Intuitively, the obstruction is that \(L_3\) only requires finite-state
information about the current block, whereas COPY requires the first half
\(w\) itself to remain distinguishable.
Every finite observation identifies some distinct strings \(u\ne v\), so the
exact identity information required by COPY cannot be compressed into finitely
many observation states.

We first show that \(\mathrm{COPY}\in2\text{-}\mathrm{MCFL}\).
Use a nonterminal \(A\) of fan-out \(2\) and a start symbol \(S\) of fan-out
\(1\).  Give \(A\) the rules
\(A\to(a,a)\), \(A\to(b,b)\),
\(A\to(a x^1,a x^2)(A)\), and
\(A\to(b x^1,b x^2)(A)\), and use the start rule
\(S\to(x^1x^2)(A)\).
Then \(L_A=\{(w,w):w\in\Sigma^+\}\).  Indeed, the two base rules derive
\((a,a)\) and \((b,b)\), while each recursive rule maps an already derived
\((w,w)\) to either \((aw,aw)\) or \((bw,bw)\), and no other form can be
derived.  Thus the start rule generates exactly
\(\{ww:w\in\Sigma^+\}\), proving that
\(\mathrm{COPY}\in2\text{-}\mathrm{MCFL}\).

Now fix an arbitrary finite monoid \(M\) and an arbitrary monoid homomorphism
\(h\colon\Sigma^*\to M\).  We show that \(\mathrm{COPY}\) is not
\((2,h)\)-tuple-substitutable.  Choose \(n\ge1\) with \(2^n>|M|\).
Since there are exactly \(2^n\) words of length \(n\), the pigeonhole
principle yields distinct \(u,v\in\Sigma^n\) with \(h(u)=h(v)\).

Let \(\tuple x:=(u,u)\) and \(\tuple y:=(v,v)\).  Then
\(h^{(2)}(\tuple x)=h^{(2)}(\tuple y)\).  Moreover, for the arity-two named
context \(E:=\blank_1\blank_2\), we have
\(E[\tuple x]=uu\in\mathrm{COPY}\) and
\(E[\tuple y]=vv\in\mathrm{COPY}\).  Hence
\(E\in\D_{\mathrm{COPY}}^{(2)}(\tuple x)\cap
\D_{\mathrm{COPY}}^{(2)}(\tuple y)\).

The two tuple distributions are nevertheless different.  Let
\(F:=\blank_1\blank_2uu\).  Then
\(F[\tuple x]=uuuu=(uu)(uu)\in\mathrm{COPY}\).  On the other hand,
\(F[\tuple y]=vvuu\).  Suppose for contradiction that
\(vvuu\in\mathrm{COPY}\).  Then \(vvuu=zz\) for some \(z\in\Sigma^+\).
Since \(|vvuu|=4n\), we have \(|z|=2n\).  Thus the first and second halves
of \(vvuu\), each of length \(2n\), must be equal, so \(vv=uu\).
Because \(|u|=|v|=n\), comparing the first \(n\) symbols gives \(u=v\), a
contradiction.  Therefore \(F[\tuple y]\notin\mathrm{COPY}\).
Consequently
\(\D_{\mathrm{COPY}}^{(2)}(\tuple x)\neq
\D_{\mathrm{COPY}}^{(2)}(\tuple y)\).

Thus \(\tuple x\) and \(\tuple y\) have the same componentwise \(h\)-type
and share an accepting context, but have different tuple distributions.
By Definition~\ref{def:tuple-substitutable}, \(\mathrm{COPY}\) is not
\((2,h)\)-tuple-substitutable.  Since the finite-monoid homomorphism \(h\)
was arbitrary, no finite observation witnesses \((2,h)\)-tuple
substitutability for \(\mathrm{COPY}\), and hence
\(\obs_2(\mathrm{COPY})=\infty\).

Finally, Definition~\ref{def:no-advice-union} gives
\(\Cmcf{2}{\bullet}\subseteq2\text{-}\mathrm{MCFL}\).  The language
\(\mathrm{COPY}\) belongs to \(2\text{-}\mathrm{MCFL}\) but, by the result
just proved, does not belong to \(\Cmcf{2}{\bullet}\).  Therefore
\(\Cmcf{2}{\bullet}\subsetneq2\text{-}\mathrm{MCFL}\).
\end{proof}

For the positive block envelope
\(P_m:=a_1^+a_2^+\cdots a_m^+\),
let \(D_m^+\) be the complete DFA with state set
\(\{0,1,\ldots,m,\mathsf{sink}\}\),
initial state \(0\), and unique accepting state \(m\).

For each \(1\le i<m\), use the transitions
\[
  i\xrightarrow{a_i}i,
  \qquad
  i\xrightarrow{a_{i+1}}i+1,
\]
together with
\(0\xrightarrow{a_1}1\) and
\(m\xrightarrow{a_m}m\).
All transitions not listed above go to \(\mathsf{sink}\), and
\(\mathsf{sink}\) loops on every letter.

Let
\(h_m^+:\Sigma_m^*\to M_m^+\)
be the transition morphism of \(D_m^+\).
Then \(D_m^+\) recognizes exactly \(P_m\).

\begin{example}[The three-block agreement language]
\label{ex:l3-language}
Let
\(L_3:=\{a^n b^n c^n\mid n\ge1\}\) and
\(P_3:=a^+b^+c^+\), and put \(h_{P_3}:=h_3^+\).
This morphism records the coarse block zone of a component but not the equality
of the three block lengths.
\end{example}

\begin{proposition}[A single-spine working presentation for \(L_3\)]
\label{prop:l3-running-example}
The language \(L_3\) has a reduced single-spine working binary linear
nondeleting MCFG presentation of fan-out two.
\end{proposition}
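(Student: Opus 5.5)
We will write down an explicit grammar and check its language, the working-form constraints, single-spine, and reducedness. Take lexical nonterminals \(A_a,A_b,A_c\) with terminal rules \(A_a\to(a)\), \(A_b\to(b)\), \(A_c\to(c)\). The spine will carry a fan-out-two nonterminal \(T\) deriving exactly the tuples \((a^nb^n,c^n)\) for \(n\ge1\). Since a working binary rule attaches only one lexical letter per step, each increment of \(n\) needs three binary steps. We therefore use intermediate fan-out-two nonterminals \(T_1,T_2\) and the rules
\[
  T_1\to(x^1,y^1)(A_a,A_c),\qquad
  T\to(x^1y^1,x^2)(T_1,A_b),
\]
\[
  U_1\to(y^1x^1,x^2)(T,A_a),\qquad
  U_2\to(x^1y^1,x^2)(U_1,A_b),\qquad
  T\to(x^1,x^2y^1)(U_2,A_c).
\]
The second and third groups turn \((a^nb^n,c^n)\) into \((a^{n+1}b^n,c^n)\), then \((a^{n+1}b^{n+1},c^n)\), then \((a^{n+1}b^{n+1},c^{n+1})\). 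To finish, add a fan-out-one nonterminal \(W\) with \(W\to(x^1x^2)\)-style composition. Since unary nonidentity rules are not allowed in working form, we instead make the final \(c\)-attachment produce a fan-out-one nonterminal: \(W\to(x^1x^2y^1)(U_2,A_c)\), and \(W\to(x^1x^2)\ldots\) for the base case via \(W\to(x^1y^1x^2)(T_1,A_b)\). We then add the start rule \(S\to W\).

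The proof is a simultaneous induction on derivation height. We show
\[
\begin{aligned}
  L_T&=\{(a^nb^n,c^n)\},& L_{T_1}&=\{(a,c)\},\\
  L_{U_1}&=\{(a^{n+1}b^n,c^n)\},& L_{U_2}&=\{(a^{n+1}b^{n+1},c^n)\},
\end{aligned}
\]
and hence \(L_W=\{(a^nb^nc^n)\}\) and \(L(G)=L_3\). The inclusion \(\supseteq\) follows by building derivations for each \(n\). The inclusion \(\subseteq\) follows because every rule maps tuples of the stated shapes to tuples of the stated shapes. The induction hypothesis covers the children, since each child's tuple language is of the stated form.

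Next we verify the syntactic constraints.
\begin{itemize}
\item Every template is linear and nondeleting: each variable appears exactly once.
\item Every fan-out is at most two.
\item \(S\) occurs only in the start rule \(S\to W\), and \(\mu(W)=1\).
\item Terminal rules have non-start left-hand sides.
\item Each binary rule has at least one lexical child \(A_a\), \(A_b\) or \(A_c\), so the grammar is single-spine by Definition~\ref{def:single-spine-working-mcfg}.
\end{itemize}

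For reducedness, every nonterminal is productive, as witnessed by the \(n=1\) derivations. Every nonterminal is also reachable from \(S\), via \(S\to W\to T_1\) and \(W\to U_2\to U_1\to T\).

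The main obstacle is the careful bookkeeping of the \(\subseteq\) direction: we must confirm that no rule combination yields a tuple outside the invariant. This is where it matters that \(U_1\) and \(U_2\) are distinct nonterminals, so that the phases cannot be skipped or repeated. A further point to double-check is that the termination rules for \(W\) use the correct component concatenation order.
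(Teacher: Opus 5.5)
Your grammar is correct, and your route is the same as the paper's: write down an explicit grammar, determine the tuple languages by induction, then check the working-form, single-spine and reducedness conditions. The cycle $T\to U_1\to U_2\to T$ gives $L_T=\{(a^nb^n,c^n):n\ge1\}$, the two $W$-rules give exactly $L_3$, and all syntactic conditions hold.

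One thing to fix: your motivating claim that each increment of $n$ needs three binary steps is false. Working-form templates may contain explicit terminals. The paper uses this to take $L_A=\{(a^n,b^nc^{n-1})\}$ with the single recursive rule $A\to(ax^1,bx^2y^1)(A,A_c)$ and the closing rule $T\to(x^1x^2y^1)(A,A_c)$. This avoids your intermediate nonterminals $U_1,U_2$ and the separate base-case rule for the fan-out-one symbol.

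Also tidy up two presentation slips, since neither belongs to your final grammar: the stray $T_2$, and the phrase about a $W\to(x^1x^2)$-style composition.
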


\begin{proof}
Let \(V=\{S,T,A,A_a,A_b,A_c\}\), let \(\mu(A)=2\), and let all other
nonterminals have fan-out \(1\).
Use
\[
  S\to T,\quad
  A_a\to(a),\quad
  A_b\to(b),\quad
  A_c\to(c),\quad
  A\to(x^1,y^1)(A_a,A_b),
\]
\[
  A\to(ax^1,bx^2y^1)(A,A_c),\qquad
  T\to(x^1x^2y^1)(A,A_c).
\]

A direct induction gives
\[
  L_A=\{(a^n,b^nc^{n-1})\mid n\ge1\}.
\]
Hence the top rule
\(T\to(x^1x^2y^1)(A,A_c)\),
together with \(S\to T\), generates exactly \(L_3\).

Moreover, \(A_a,A_b,A_c\) are lexical.
In the rule
\(A\to(x^1,y^1)(A_a,A_b)\)
both children are lexical, while each of the other two binary rules has only
one nonlexical child, namely \(A\).
Thus every binary rule has at most one nonlexical child occurrence, and the
presentation is single-spine in the sense of
Definition~\ref{def:single-spine-working-mcfg}.

Finally, every nonterminal is reachable and productive, and every rule is
linear and nondeleting.
Therefore the grammar is a reduced single-spine working binary linear
nondeleting MCFG of fan-out two.
\end{proof}

\begin{corollary}[The three-block example belongs to the fixed-observation class]
\label{cor:l3-in-fixed-observation-class}
For every \(f\ge2\), \(L_3\in\Cmcf{f}{h_{P_3}}\).
\end{corollary}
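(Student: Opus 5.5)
The plan is to check the two conditions of Definition~\ref{def:class-cmcf} separately. Membership in \(f\)-MCFL is immediate: Proposition~\ref{prop:l3-running-example} gives a fan-out-two grammar for \(L_3\), and every \(2\)-MCFG is an \(f\)-MCFG for \(f\ge2\). The real content is \((f,h_{P_3})\)-tuple substitutability. I would get it from Proposition~\ref{prop:regular-abelian-balance} rather than by direct case analysis.

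For that, write \(L_3=R\cap\varphi^{-1}(H)\) with the following data.
\begin{itemize}
\item \(R:=P_3=a^+b^+c^+\), recognized through \(h_R:=h_{P_3}=h_3^+\). Since \(D_3^+\) recognizes \(P_3\), \(R=h_3^{+\,-1}(P)\), where \(P\) is the set of transition-monoid elements sending state \(0\) to the accepting state \(3\).
\item \(A:=\mathbb Z^2\) and \(\varphi(w):=(|w|_a-|w|_b,\ |w|_b-|w|_c)\), which is a monoid homomorphism by additivity of letter counts, as in Example~\ref{ex:mix}.
\item \(H:=\{(0,0)\}\).
\end{itemize}
The identity \(L_3=R\cap\varphi^{-1}(H)\) holds because a word in \(a^+b^+c^+\) has the form \(a^ib^jc^k\) with \(i,j,k\ge1\), and \(\varphi\) vanishes exactly when \(i=j=k\). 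Proposition~\ref{prop:regular-abelian-balance} then yields \((f,h_{P_3})\)-tuple substitutability for every \(f\ge1\), and combining this with the MCFL membership finishes the proof.

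The only step needing care is that Proposition~\ref{prop:regular-abelian-balance} requires \(R=h_R^{-1}(P)\) for a subset \(P\) of the monoid, with \(h_R\) the very morphism \(h_{P_3}\). This is the standard fact that a transition morphism recognizes the language of its automaton: \(w\in P_3\) iff the transformation \(h_3^+(w)\) maps \(0\) to \(3\), which depends only on \(h_3^+(w)\). I expect no genuine obstacle beyond recording this. The non-context-free synchronization is absorbed by the abelian balance part, and block order is absorbed by the regular control.
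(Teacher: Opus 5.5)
Your proposal is correct and follows essentially the same route as the paper: the paper's proof takes MCFL membership from Proposition~\ref{prop:l3-running-example}, writes \(L_3=P_3\cap\varphi^{-1}(\{(0,0)\})\) with the same \(\varphi(w)=(|w|_a-|w|_b,\ |w|_b-|w|_c)\), and applies Proposition~\ref{prop:regular-abelian-balance}. Your extra checks (that \(P_3=(h_3^+)^{-1}(P)\) for \(P\) the set of transformations sending state \(0\) to state \(3\), and that fan-out two suffices for every \(f\ge2\)) simply make explicit details the paper leaves implicit.
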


\begin{proof}
The presentation is given by Proposition~\ref{prop:l3-running-example}.
For the regular control \(P_3=a^+b^+c^+\), define
\(\varphi(w):=(|w|_a-|w|_b,\ |w|_b-|w|_c)\in\mathbb Z^2\).
Then \(L_3=P_3\cap\varphi^{-1}(\{(0,0)\})\), so
Proposition~\ref{prop:regular-abelian-balance} gives
\((f,h_{P_3})\)-tuple substitutability.
\end{proof}

\begin{example}[The cross-serial two-parameter language]
\label{ex:cross-serial-language}
Let
\(L_{\times}:=\{a^n b^m c^n d^m\mid n,m\ge1\}\) and
\(Q:=a^+b^+c^+d^+\), and put \(h_Q:=h_4^+\).
The language has two independent agreement parameters, although the working
presentation below realizes them along one nonlexical derivation spine.
\end{example}

\begin{proposition}[A single-spine working presentation for \(L_{\times}\)]
\label{prop:cross-serial-running-example}
The language \(L_{\times}\) has a reduced single-spine working binary linear
nondeleting MCFG presentation of fan-out two, in the sense of
Definition~\ref{def:single-spine-working-mcfg}.
\end{proposition}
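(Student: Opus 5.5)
We will build an explicit grammar modelled on the one for \(L_3\) in Proposition~\ref{prop:l3-running-example}. The idea is to generate the \(a\)/\(c\) agreement along the spine first, and then wrap it inside the \(b\)/\(d\) agreement, keeping one fan-out-two nonterminal whose two components are the gaps to be filled.

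Concretely, take nonterminals \(S,T,A,B\), with \(\mu(A)=\mu(B)=2\) and all others of fan-out one, together with lexical nonterminals \(A_a,A_b,A_c,A_d\) carrying rules \(A_x\to(x)\). The inner part uses the following rules:
\[
  A\to(x^1,y^1)(A_a,A_c),\qquad
  A\to(ax^1,\,x^2y^1)(A,A_c).
\]
An induction gives \(L_A=\{(a^n,c^n)\mid n\ge1\}\): the base rule derives \((a,c)\), and each recursive step prepends \(a\) to the first component and appends \(c\) to the second.

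Next we attach one \(b\) and one \(d\) around the \(A\)-pair, using
\[
  B\to(x^1y^1,\,x^2)(A,A_b)
\]
together with a rule that appends \(d\) to the second component. A rank-two rule can carry only one lexical child, so we need an auxiliary fan-out-two nonterminal \(B'\):
\[
  B'\to(x^1y^1,\,x^2)(A,A_b),\qquad
  B\to(x^1,\,x^2y^1)(B',A_d),
\]
which gives \(L_B\ni(a^nb,c^nd)\). We then iterate \(m\) with
\[
  B''\to(x^1y^1,\,x^2)(B,A_b),\qquad
  B\to(x^1,\,x^2y^1)(B'',A_d),
\]
and an induction on \(m\) yields
\[
  L_B=\{(a^nb^m,\,c^nd^m)\mid n,m\ge1\}.
\]
The top rule is \(T\to(x^1x^2)(B,\cdot)\), but this has rank one, and working binary rules must be binary. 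To avoid that, we fold the final lexical symbol into the top rule: introduce \(B_0\) for the state that still lacks its last \(d\) (namely \(B_0:=B'\) or \(B''\)), and add the rule \(T\to(x^1x^2y^1)(B_0,A_d)\) for each such \(B_0\), together with \(S\to T\). This mirrors the treatment of \(L_3\). Then \(L(G)=\{a^nb^mc^nd^m\}\).

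It remains to check the structural conditions. Every binary rule has at most one nonlexical child (\(A\), \(B\), \(B'\), or \(B''\)), so the grammar is single-spine. Every rule is linear and nondeleting, the grammar is start-separated, and all nonterminals are reachable and productive, so it is reduced.

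The main obstacle is the bookkeeping of the two inductions so that the characterization of \(L_B\) is exact, in particular that no derivation can interleave extra \(a\)'s after a \(b\) has been placed. This holds because \(A\) never occurs below \(B\) except at the single \(B'\) rule, so each derivation performs all \(a\)/\(c\) steps before any \(b\)/\(d\) steps. The exact language identity then follows by induction on derivation height.
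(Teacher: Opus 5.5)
Your grammar is correct. It yields $L_A=\{(a^n,c^n)\}$, $L_{B'}=\{(a^nb,c^n)\}$, $L_B=\{(a^nb^m,c^nd^m)\}$ and $L_{B''}=\{(a^nb^{m+1},c^nd^m)\}$. The two top rules $T\to(x^1x^2y^1)(B',A_d)$ and $T\to(x^1x^2y^1)(B'',A_d)$ then give exactly the cases $m=1$ and $m\ge2$ of $L_{\times}$. Every binary rule has at most one nonlexical child, and all symbols are reachable and productive.

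The paper reaches the same conclusion with a more compact grammar. It uses a single fan-out-two nonterminal $A$ with $L_A=\{(a^nb^m,c^nd^{m-1})\}$, built from three rules:
\[
  A\to(x^1y^1,c)(A_a,A_b),\qquad
  A\to(y^1x^1,cx^2)(A,A_a),\qquad
  A\to(x^1b,x^2y^1)(A,A_d).
\]
The two counters can be incremented in any interleaved order. Each increment costs only one binary step, because the partner letter ($c$ or $b$) is written as an explicit terminal in the template. The top rule $T\to(x^1x^2y^1)(A,A_d)$ then folds in the final $d$, exactly as your construction does.

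Your phase-separated design puts all $a/c$ steps below all $b/d$ steps. This makes each inductive characterization immediate, but it costs three extra fan-out-two nonterminals. The paper's grammar is smaller, and its correctness rests on the two increment rules commuting at the tuple level.

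Your stated reason for $B'$ and $B''$, that a rank-two rule ``can carry only one lexical child,'' does not justify them. Your own rule $A\to(ax^1,x^2y^1)(A,A_c)$ already uses an explicit template terminal. The same device works for the $b/d$ phase: for instance $C\to(x^1y^1,x^2)(A,A_b)$, $C\to(x^1y^1,x^2d)(C,A_b)$ and $T\to(x^1x^2y^1)(C,A_d)$. So the auxiliaries are a convenience, not a necessity.
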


\begin{proof}
Let \(V=\{S,T,A,A_a,A_b,A_d\}\), let \(\mu(A)=2\), and let all other
nonterminals have fan-out \(1\).
Use the lexical rules
\[
  A_a\to(a),\qquad A_b\to(b),\qquad A_d\to(d),
\]
together with
\[
  A\to(x^1y^1,c)(A_a,A_b),\qquad
  A\to(y^1x^1,cx^2)(A,A_a),\qquad
  A\to(x^1b,x^2y^1)(A,A_d),
\]
and the top rules
\[
  T\to(x^1x^2y^1)(A,A_d),\qquad S\to T.
\]
A direct induction gives
\[
  L_A=\{(a^nb^m,c^nd^{m-1})\mid n,m\ge1\}.
\]
The first rule gives \((ab,c)\), the second increments \(n\), and the third
increments \(m\).  The top rules therefore generate exactly
\(L_{\times}\).

The nonterminals \(A_a,A_b,A_d\) are lexical.
Every binary rule has at most one nonlexical child, namely \(A\), so the
presentation is single-spine.
It is also reduced, linear, nondeleting, and of fan-out two.
\end{proof}

\begin{corollary}[The cross-serial example belongs to the fixed-observation class]
\label{cor:cross-serial-in-fixed-observation-class}
For every \(f\ge2\), \(L_{\times}\in\Cmcf{f}{h_Q}\).
\end{corollary}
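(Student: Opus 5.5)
The plan mirrors the proof of Corollary~\ref{cor:l3-in-fixed-observation-class}: show separately that \(L_{\times}\) is an \(f\)-MCFL and that it is \((f,h_Q)\)-tuple-substitutable.

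\emph{Membership in \(f\)-MCFL.} Proposition~\ref{prop:cross-serial-running-example} gives a working presentation of fan-out two. A working MCFG is in particular a standard MCFG in the sense of Definition~\ref{def:standard-mcfg}, so \(L_{\times}\) is a \(2\)-MCFL. Since \(f\ge2\), it is therefore an \(f\)-MCFL.

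\emph{Substitutability.} I will express \(L_{\times}\) as a regular language intersected with the preimage of a subgroup, and then apply Proposition~\ref{prop:regular-abelian-balance}.

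\begin{itemize}
\item Take the regular control \(R:=Q=a^+b^+c^+d^+\). The morphism \(h_Q=h_4^+\) is the transition morphism of \(D_4^+\), which recognizes \(P_4=Q\) under the letter renaming \(a_1,a_2,a_3,a_4\mapsto a,b,c,d\). Hence \(Q=h_Q^{-1}(P)\), where \(P\) is the set of transformations sending the initial state \(0\) to the accepting state \(4\).
\item Define \(\varphi\colon\Sigma^*\to\mathbb Z^2\) by \(\varphi(w):=(|w|_a-|w|_c,\ |w|_b-|w|_d)\). By additivity of letter counts, \(\varphi\) is a monoid homomorphism. Put \(H:=\{(0,0)\}\).
\item Verify that \(L_{\times}=Q\cap\varphi^{-1}(H)\). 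A word of \(Q\) has the form \(a^nb^mc^pd^q\) with \(n,m,p,q\ge1\). Its \(\varphi\)-image is \((n-p,\,m-q)\), which is zero exactly when \(p=n\) and \(q=m\), i.e.\ exactly when the word lies in \(L_{\times}\).
\item Proposition~\ref{prop:regular-abelian-balance} then yields \((f,h_Q)\)-tuple substitutability for every \(f\ge1\).
\end{itemize}

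Combining the two parts with Definition~\ref{def:class-cmcf} gives \(L_{\times}\in\Cmcf{f}{h_Q}\) for every \(f\ge2\).

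The only step that needs any care is the identification \(Q=h_Q^{-1}(P)\). This is the standard fact that a language recognized by a DFA is the preimage, under its transition morphism, of the set of transformations mapping the initial state into an accepting state. It requires nothing beyond the renaming of the alphabet \(\Sigma_4\) to \(\{a,b,c,d\}\).
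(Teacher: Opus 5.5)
Your proposal is correct and follows essentially the same route as the paper: it takes membership in \(2\text{-}\mathrm{MCFL}\) from Proposition~\ref{prop:cross-serial-running-example}, then writes \(L_{\times}=Q\cap\varphi^{-1}(\{(0,0)\})\) with \(\varphi(w)=(|w|_a-|w|_c,\,|w|_b-|w|_d)\) and applies Proposition~\ref{prop:regular-abelian-balance}. Your explicit checks that \(Q=h_Q^{-1}(P)\) and that \(L_{\times}=Q\cap\varphi^{-1}(H)\) simply spell out steps the paper leaves implicit.
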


\begin{proof}
The presentation is Proposition~\ref{prop:cross-serial-running-example}.
For \(Q=a^+b^+c^+d^+\), define
\(\varphi(w):=(|w|_a-|w|_c,\ |w|_b-|w|_d)\in\mathbb Z^2\).
Then \(L_{\times}=Q\cap\varphi^{-1}(\{(0,0)\})\), and
Proposition~\ref{prop:regular-abelian-balance} applies.
\end{proof}

\section{The Canonical Learner}
\label{sec:learner}

The learner uses only the fixed fan-out bound \(f\), the supplied observation
\(h\), and the finite positive sample available at the current stage.  A target
MCFG is used only on the proof side to establish the existence of a finite
characteristic sample.  The target presentation is never supplied to the
learner.

For the qualitative theorem, let \(G\) be any finite nondeleting standard MCFG
of fan-out at most \(f\).  In the final exact-reconstruction theorem such a
presentation is obtained, without changing fan-out or branching rank, from an
arbitrary standard MCFG presentation by the normalization recalled in
Section~\ref{sec:prelim}.

The proof architecture is worth making explicit.  Starting from an arbitrary
standard \(f\)-MCFG presentation of the target language, we first pass on the
proof side to an equivalent nondeleting presentation without increasing
fan-out or branching rank.  We then refine nonterminals by their finite
observation output types and eliminate genuinely silent child occurrences by
compression.  The resulting positive-rank rules are exposed directly, without
binarization.  Finally, although their ranks are not bounded uniformly over
the target class, every rule needed for reconstruction has a positive sample
exposure whose length bounds that particular rank.  Thus the general learner
can enumerate all required witnesses using only its sample-dependent rank cap.

\subsection{Output-Type Refinement}
\label{sec:refinement}

\begin{definition}[Template output type]
\label{def:output-map}
Let
\(\rho:A\to(\alpha_1,\ldots,\alpha_e)(B_1,\ldots,B_r)\)
be a nondeleting rule, put \(e=\mu(A)\) and
\(d_j:=\mu(B_j)\), and choose
\(\mathbf q_j\in M^{d_j}\) for each \(1\le j\le r\).

On the template alphabet
\(
  \Gamma_\rho
  :=\Sigma\uplus\biguplus_{j=1}^r
  \{x_j^1,\ldots,x_j^{d_j}\}
\),
define
\(\theta_{\mathbf q_1,\ldots,\mathbf q_r}\colon\Gamma_\rho\to M\)
by
\(\theta_{\mathbf q_1,\ldots,\mathbf q_r}(a):=h(a)\)
for \(a\in\Sigma\), and
\(\theta_{\mathbf q_1,\ldots,\mathbf q_r}(x_j^k):=(\mathbf q_j)_k\).

Let
\(
  \widehat\theta_{\mathbf q_1,\ldots,\mathbf q_r}
  \colon\Gamma_\rho^*\to M
\)
be its unique extension to a monoid homomorphism.
Define
\[
  \out_\rho(\mathbf q_1,\ldots,\mathbf q_r)
  :=
  \bigl(
    \widehat\theta_{\mathbf q_1,\ldots,\mathbf q_r}(\alpha_1),
    \ldots,
    \widehat\theta_{\mathbf q_1,\ldots,\mathbf q_r}(\alpha_e)
  \bigr)
  \in M^e.
\]
For rank zero, \(\out_\rho\) is simply
\((h(\alpha_1),\ldots,h(\alpha_e))\), the componentwise \(h\)-type of the
constant template.
\end{definition}

\begin{definition}[Output-type refinement]
\label{def:output-refinement}
Given a finite nondeleting MCFG \(G=(V,\Sigma,P,S,\mu)\) and a finite
observation \(h\), the \emph{full output-type refinement} \(G^h\) has a fresh
start symbol \(\widetilde S\) and typed nonterminals
\(A_{\mathbf p}\) for every \(A\in V\) and
\(\mathbf p\in M^{\mu(A)}\).
Set
\(\mu(\widetilde S):=1\) and
\(\mu(A_{\mathbf p}):=\mu(A)\).

For each \(p\in M\), add the start rule
\(\widetilde S\to S_{(p)}\).
For each rank-zero rule \(A\to\tuple c\), add
\(A_{h^{(\mu(A))}(\tuple c)}\to\tuple c\).

For every positive-rank rule
\(\rho:A\to(\alpha_1,\ldots,\alpha_e)(B_1,\ldots,B_r)\)
and every choice
\(\mathbf q_j\in M^{\mu(B_j)}\) (\(1\le j\le r\)), add
\[
  A_{\out_\rho(\mathbf q_1,\ldots,\mathbf q_r)}
  \to
  \rho(B_{1,\mathbf q_1},\ldots,B_{r,\mathbf q_r}).
\]
Here \(B_{j,\mathbf q_j}\) denotes the typed copy of \(B_j\) indexed by
\(\mathbf q_j\).

Finally, the \emph{trimmed refinement} \(\widetilde G_0\) is obtained by
retaining only the typed nonterminals and typed rules that actually occur in
some successful derivation from \(\widetilde S\).
\end{definition}

\begin{proposition}[Output-type invariants]
\label{prop:output-invariants}
For \(G^h\) and \(\widetilde G_0\):
\begin{enumerate}[label=(\roman*),leftmargin=*]
\item if \(A_{\mathbf p}\Rightarrow^*\tuple u\), then
\(h^{(\mu(A))}(\tuple u)=\mathbf p\), and erasing the type indices gives a
valid \(G\)-derivation of \(\tuple u\);
\item every fixed \(G\)-derivation tree lifts uniquely to \(G^h\) by labeling
each node with the componentwise \(h\)-type of the tuple derived at that node;
\item \(L(\widetilde G_0)=L(G^h)=L(G)\).
\end{enumerate}
\end{proposition}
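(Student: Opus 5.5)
The proof goes by induction on derivations, using the fact that \(\widehat\theta\) is a monoid homomorphism that commutes with substitution.

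\emph{Key identity.} Let \(\rho:A\to(\alpha_1,\ldots,\alpha_e)(B_1,\ldots,B_r)\) be a rule and let \(\tuple u_j\) be child tuples with \(h^{(\mu(B_j))}(\tuple u_j)=\mathbf q_j\). Then
\[
  h^{(e)}\bigl(\rho(\tuple u_1,\ldots,\tuple u_r)\bigr)=\out_\rho(\mathbf q_1,\ldots,\mathbf q_r).
\]
To see this, let \(s\colon\Gamma_\rho^*\to\Sigma^*\) be the substitution homomorphism fixing each \(a\in\Sigma\) and sending \(x_j^k\mapsto u_{j,k}\). On generators, \(h\circ s\) and \(\widehat\theta_{\mathbf q_1,\ldots,\mathbf q_r}\) both send \(a\mapsto h(a)\) and \(x_j^k\mapsto(\mathbf q_j)_k\). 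Since both are monoid homomorphisms, they agree on all of \(\Gamma_\rho^*\). Apply this to each \(\alpha_i\). The rank-zero case is immediate.

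\emph{(i).} Induct on the height of a \(G^h\)-derivation of \(\tuple u\) from \(A_{\mathbf p}\), and look at its top rule.

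If the top rule is a typed constant rule \(A_{h(\tuple c)}\to\tuple c\), then \(\mathbf p=h^{(\mu(A))}(\tuple c)\), and \(A\to\tuple c\) is a rule of \(G\).

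If it is \(A_{\out_\rho(\mathbf q)}\to\rho(B_{1,\mathbf q_1},\ldots)\), the induction hypothesis gives \(h(\tuple u_j)=\mathbf q_j\) for each child, and the key identity then gives \(h(\tuple u)=\mathbf p\). Erasing the indices turns each typed rule into its untyped original \(\rho\), so the erased tree is a \(G\)-derivation. The start rules \(\widetilde S\to S_{(p)}\) erase to nothing, i.e. to the root \(S\).

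\emph{(ii).} Label each node by the \(h\)-type of the tuple it derives. The key identity shows that the label of an internal node equals \(\out_\rho\) of its children's labels, so each labelled local tree is a rule of \(G^h\). At the root, add \(\widetilde S\to S_{(h(w))}\).

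For uniqueness, any lift must carry these labels by (i), applied to every subtree. The typed rule at each node is then forced by the untyped rule and the child types.

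\emph{(iii).} By (i), erasure maps derivations of \(w\) from \(\widetilde S\) to \(G\)-derivations, so \(L(G^h)\subseteq L(G)\). By (ii) the reverse inclusion holds, giving \(L(G^h)=L(G)\).

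For \(\widetilde G_0\): any successful derivation of \(G^h\) from \(\widetilde S\) uses only nonterminals and rules occurring in some successful derivation, namely itself. So it is a derivation of \(\widetilde G_0\), and \(L(G^h)\subseteq L(\widetilde G_0)\). The reverse inclusion holds because \(\widetilde G_0\) is a subgrammar of \(G^h\).

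\emph{Main obstacle.} The only delicate point is the homomorphism argument behind the key identity. The rules are linear and nondeleting, so each variable appears once. However, the argument needs only that the substitution is a homomorphism, and this holds regardless of permutation or empty components. Everything else is bookkeeping.
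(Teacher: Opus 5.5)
Your proposal is correct and follows essentially the same route as the paper: the template-evaluation identity at each rule application, induction on derivation height for (i), labelling by componentwise $h$-types for the unique lift in (ii), and erasure, lifting, and trimming for (iii). The only difference is that you spell out what the paper leaves implicit, notably deriving the key identity from the fact that $h\circ s$ and $\widehat\theta_{\mathbf q_1,\ldots,\mathbf q_r}$ are monoid homomorphisms agreeing on the generators of $\Gamma_\rho^*$.
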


\begin{proof}
At a rule application, homomorphic evaluation of the rule template gives
\[
  h^{(\mu(A))}
  \bigl(\rho(\tuple u_1,\ldots,\tuple u_r)\bigr)
  =
  \out_\rho
  \bigl(h^{(\mu(B_1))}(\tuple u_1),\ldots,
        h^{(\mu(B_r))}(\tuple u_r)\bigr).
\]
The rank-zero case is immediate.  Induction on derivation height proves~(i),
and the same identity gives the unique lifting in~(ii).  Erasing types gives
one language inclusion, lifting gives the other, and trimming removes only
objects absent from successful derivations.
\end{proof}

For a surviving typed nonterminal \(X=A_{\mathbf p}\), put
\(L_X:=\{\tuple u:X\Rightarrow_{\widetilde G_0}^*\tuple u\}\).
It is nonempty by trimming.

\begin{lemma}[Concrete exposing contexts]
\label{lem:exposing-contexts}
Let \(X=A_{\mathbf p}\) survive in \(\widetilde G_0\), with arity
\(d=\mu(A)\).  There exists an arity-\(d\) sentence context \(E_X\) such that
\(E_X[\tuple u]\in L(G)\) for every \(\tuple u\in L_X\).
If \(\widetilde S\to X\) is a typed start rule, one may choose
\(E_X=\blank_1\).
\end{lemma}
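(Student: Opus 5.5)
Since \(X=A_{\mathbf p}\) survives trimming, it occurs at some node of a successful derivation tree \(\tau\) of \(\widetilde G_0\) from \(\widetilde S\). The plan is to cut \(\tau\) at that node and record the tuple of strings produced by everything outside the subtree as an arity-\(d\) sentence context. That context is then filled with arbitrary tuples \(\tuple u\in L_X\).

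\textbf{Step 1 (one-hole tuple context).} Take a path \(\widetilde S=Y_0,Y_1,\ldots,Y_k=X\) in \(\tau\), each step being a rule occurrence \(\rho_i\) of \(\widetilde G_0\). Replace every sibling subtree off the path by the concrete tuple it derives; these tuples exist because \(\tau\) is a successful derivation. We then show, by induction on \(k-i\), that the composed map
\[
  \Phi_i\colon(\Sigma^*)^d\to(\Sigma^*)^{\mu(Y_i)},\qquad
  \Phi_i(\tuple u)=\rho_{i+1}\bigl(\ldots,\Phi_{i+1}(\tuple u),\ldots\bigr),\qquad \Phi_k=\mathrm{id},
\]
sends every \(\tuple u\in L_X\) into \(L_{Y_i}\). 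This holds because the rule \(\rho_{i+1}\) is a rule of \(\widetilde G_0\) with the typed child \(Y_{i+1}\) in the same position, and the siblings are derivable from their own typed nonterminals. For \(i=0\) we obtain \(\Phi_0(\tuple u)\in L_{\widetilde S}\). By Proposition~\ref{prop:output-invariants}(iii), \(L_{\widetilde S}\) consists of \(1\)-tuples of words of \(L(G)\). Note that no type compatibility between \(\tuple u\) and the original subtree's yield needs checking beyond \(\tuple u\in L_X\): the rules are typed by \(X\) alone.

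\textbf{Step 2 (shape of \(\Phi_0\)).} Each template is linear and nondeleting. The normalization makes \(G\) nondeleting, and typing preserves templates. So each variable \(x^k\) of the hole occurs exactly once in the single output component of \(\Phi_0\), and all other material consists of fixed terminal words. Writing \(\Phi_0(\tuple u)=u_0u_{\sigma(1)}u_1\cdots u_{\sigma(d)}u_d\) then defines \(E_X=u_0\blank_{\sigma(1)}u_1\cdots\blank_{\sigma(d)}u_d\), which is a legitimate sentence context in the sense of Definition~\ref{def:sentence-context}. We formalize this by tracking, along the path, each component of \(\Phi_i\) as a word over \(\Sigma\cup\{\blank_1,\ldots,\blank_d\}\) in which each hole appears exactly once overall. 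This is a routine induction using linearity and nondeletion of \(\rho_{i+1}\).

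\textbf{Step 3 (start rule case).} If \(\widetilde S\to X\) is a start rule, then \(d=1\) and \(\Phi_0\) is the identity, so \(E_X=\blank_1\).

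\textbf{Main obstacle.} The main point of care is Step 2: nondeletion is what guarantees every hole survives exactly once. Had deleting rules been allowed, \(E_X\) would not be a sentence context with each hole occurring once. I would therefore make explicit that \(\widetilde G_0\) inherits nondeletion from \(G\), and that templates are unchanged by type indexing.
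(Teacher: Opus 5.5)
Your proposal is correct and takes essentially the same route as the paper. Both arguments cut a successful typed derivation at an occurrence of \(X\), freeze the off-path subderivations, use linearity and nondeletion to see that each named hole reaches the single root component exactly once, and splice arbitrary \(X\)-derivations back in; your split into a membership induction (Step 1) and a shape induction (Step 2) is simply a more explicit version of the paper's one-paragraph proof.
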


\begin{proof}
Choose a successful typed derivation containing an occurrence of \(X\).  Delete
the subtree at that occurrence and replace the \(d\) output components by the
named holes \(\blank_1,\ldots,\blank_d\), keeping all other subderivations
fixed.  Nondeletion and linearity propagate each hole exactly once to the
root, so the remaining object is a named sentence context.  Splicing any
\(X\)-derivation back into the deleted position proves the claim.  The start
case has no material outside the selected subtree.
\end{proof}

Fix one such context and write it \(\chi(X)\).%
\footnote{If desired, this choice can be made canonical.  Fix any total order
on \(\Sigma\cup\{\blank_1,\ldots,\blank_d\}\), encode named sentence
contexts by their natural finite strings over this extended alphabet, and order
those encodings by shortlex order: first by length and then lexicographically.
Lemma~\ref{lem:exposing-contexts} guarantees that the set of sentence contexts
exposing \(X\) is nonempty, so it has a unique least element in this order; take
that element as \(\chi(X)\).  This choice is used only on the proof side and is
not supplied to the learner.}
For every surviving typed nonterminal reached directly from the typed start
symbol, we fix the canonical choice \(\chi(X)=\blank_1\).

\subsection{Silent typed children and the characteristic sample}

\begin{definition}[Silent typed nonterminal and positive anchor]
\label{def:silent-typed}
A surviving typed nonterminal \(X\) of arity \(d\) is \emph{silent} if
\(L_X=\{(\eps,\ldots,\eps)\}\).

If \(X\) is not silent, put
\(L_X^+:=\{\tuple u\in L_X:\|\tuple u\|_1>0\}\).
Define \(\omega(X)\) to be an element of \(L_X^+\) of minimum total length
\(\|\tuple u\|_1\), choosing, among all minimum-total-length candidates, the
least one with respect to a fixed order.%
\footnote{The choice can be made uniquely canonical.  Fix a total order on the
finite alphabet \(\Sigma\).  For each \(d\ge1\), use a separator symbol
\(\#\notin\Sigma\) to encode a tuple
\(\tuple u=(u_1,\ldots,u_d)\) as
\(u_1\#u_2\#\cdots\#u_d\).  Compare two tuples first by total length
\(\|\tuple u\|_1\), and, when the total lengths agree, lexicographically by
these encodings under a fixed extension of the alphabet order.  For each fixed
total length there are only finitely many candidates, so whenever
\(L_X^+\neq\emptyset\) this order has a unique least element.  This
canonicalization merely makes an auxiliary proof-side choice unique and does
not provide additional input to the learner.}
If \(X\) is silent, put \(\omega(X):=(\eps,\ldots,\eps)\).
With these conventions, \(\omega(X)\) is well-defined.%
\footnote{If \(X\) is not silent, then \(L_X^+\ne\emptyset\).  Indeed,
\(L_X\ne\emptyset\) for every surviving typed nonterminal, and if every
\(\tuple u\in L_X\) had total length zero, then every component would be empty,
so \(L_X=\{(\eps,\ldots,\eps)\}\), contradicting nonsilence.}
\end{definition}

\begin{lemma}[Distributional equivalence of the empty tuple and the positive anchor]
\label{lem:mixed-empty-positive}
Let \(X\) be a surviving typed nonterminal of arity \(d\).  If \(X\) is not
silent and \((\eps,\ldots,\eps)\in L_X\), then
\(\omega(X)\equiv_L^d(\eps,\ldots,\eps)\).
\end{lemma}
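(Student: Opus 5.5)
Both tuples lie in \(L_X\), so the plan is to verify the two hypotheses of Lemma~\ref{lem:shared-context} for the pair \(\omega(X)\) and \((\eps,\ldots,\eps)\), taking \(L=L(G)\). Throughout I assume, as in the surrounding argument, that \(L(G)\) is \((f,h)\)-tuple-substitutable and that \(d\le f\); the latter holds because \(G\) has fan-out at most \(f\).

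\textbf{Same type.} Write \(X=A_{\mathbf p}\). Proposition~\ref{prop:output-invariants}(i) says every tuple derived from \(A_{\mathbf p}\) in \(\widetilde G_0\) has componentwise \(h\)-type \(\mathbf p\). Since \((\eps,\ldots,\eps)\in L_X\), we get \(\mathbf p=(1_M,\ldots,1_M)\). Since \(\omega(X)\in L_X^+\subseteq L_X\), it also has type \(\mathbf p\). Hence \(h^{(d)}(\omega(X))=h^{(d)}(\eps,\ldots,\eps)\).

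\textbf{Shared accepting context.} Take the exposing context \(E_X=\chi(X)\) from Lemma~\ref{lem:exposing-contexts}. It satisfies \(E_X[\tuple u]\in L(G)\) for every \(\tuple u\in L_X\). Apply this once with \(\tuple u=\omega(X)\) and once with \(\tuple u=(\eps,\ldots,\eps)\). Both \(E_X[\omega(X)]\) and \(E_X[(\eps,\ldots,\eps)]\) then lie in \(L(G)\).

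Lemma~\ref{lem:shared-context} now gives \(\omega(X)\equiv_L^d(\eps,\ldots,\eps)\).

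\textbf{Main obstacle.} The step most likely to fail is bookkeeping rather than mathematics. One must check that \(X\) is nonsilent, so that \(\omega(X)\) is a genuine element of \(L_X^+\) and not the silent convention; this is exactly the hypothesis of the statement. One must also check that membership of a tuple in \(L_X\) refers to derivations in the trimmed grammar \(\widetilde G_0\), so that Lemma~\ref{lem:exposing-contexts} applies to both tuples. If the substitutability promise is not already fixed at this point, I would state it explicitly as a hypothesis: \(L=L(G)\in\Cmcf{f}{h}\).
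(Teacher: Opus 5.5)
Your proposal is correct and matches the paper's proof: equal componentwise type comes from Proposition~\ref{prop:output-invariants}, the shared accepting context \(\chi(X)\) comes from Lemma~\ref{lem:exposing-contexts}, and Lemma~\ref{lem:shared-context} then finishes the argument. You are also right to flag the \((f,h)\)-tuple-substitutability of \(L=L(G)\) as a needed hypothesis, which the paper leaves implicit from the surrounding completeness setting.
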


\begin{proof}
Write \(X=A_{\mathbf p}\).  By definition \(\omega(X)\in L_X\), and by
assumption \((\eps,\ldots,\eps)\in L_X\).  Proposition~\ref{prop:output-invariants}
therefore implies that the two tuples have the same componentwise \(h\)-type
\(\mathbf p\).

Moreover, the exposing context \(\chi(X)\) accepts every tuple in \(L_X\), so
\(\chi(X)[\omega(X)]\in L\) and
\(\chi(X)[(\eps,\ldots,\eps)]\in L\).  Thus the two tuples have the same
\(h\)-type and share the same accepting context \(\chi(X)\).
Lemma~\ref{lem:shared-context} yields
\(\omega(X)\equiv_L^d(\eps,\ldots,\eps)\).
\end{proof}

\begin{definition}[Silent-child compression]
\label{def:silent-compression}
Keep every fresh start link \(\widetilde S\to S_{(p)}\) unchanged.  For every
other surviving typed rule
\(R:X\to\rho(Y_1,\ldots,Y_r)\), with \(X=A_{\mathbf p}\), delete from the
body every child occurrence \(Y_j\) that is silent and delete from the
template every variable belonging to that child.

Let \(Y_{j_1},\ldots,Y_{j_s}\) be precisely the remaining non-silent child
occurrences, in their original order, and denote the resulting rule by
\(R^+:X\to\rho^+(Y_{j_1},\ldots,Y_{j_s})\).  If all children are silent,
then \(s=0\), so \(R^+\) has rank zero.  Let \(\widetilde G_0^+\) be the
grammar obtained by applying this operation to every surviving typed rule and
then removing duplicate rules.
\end{definition}

\begin{lemma}[Silent-child compression preserves all typed tuple languages]
\label{lem:silent-compression}
The grammar \(\widetilde G_0^+\) is a finite linear nondeleting MCFG with the
same nonterminals and the same fan-outs as \(\widetilde G_0\).  Moreover, for
every surviving typed nonterminal \(X\),
\(L_X(\widetilde G_0^+)=L_X(\widetilde G_0)\).  In particular,
\(L(\widetilde G_0^+)=L(G)\).

If \(X=A_{\mathbf p}\) is the left-hand side of a compressed rule, every tuple
produced by that rule has componentwise \(h\)-type \(\mathbf p\).
\end{lemma}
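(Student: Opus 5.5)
The proof has three parts: a structural check of the compressed rules, two inclusions between tuple languages, and the type statement.

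\emph{Structural properties.} Deleting a silent child occurrence together with all its variables from a linear nondeleting template leaves a template over the surviving children. In it, each remaining variable still occurs exactly once, so the compressed rule \(R^+\) is linear and nondeleting. The left-hand side \(X\) and its arity \(e\) are unchanged, and no nonterminal is added or removed. Only finitely many rules arise, and removing duplicates keeps the grammar finite. If \(s=0\), the compressed template contains terminals only, so \(R^+\) is a constant rule.

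\emph{The key identity.} For every tuple of children \(\tuple u_1,\ldots,\tuple u_r\) with \(\tuple u_j=(\eps,\ldots,\eps)\) for each silent \(Y_j\),
\[
  \rho(\tuple u_1,\ldots,\tuple u_r)=\rho^+(\tuple u_{j_1},\ldots,\tuple u_{j_s}).
\]
This holds because substituting \(\eps\) for a variable has the same effect as deleting that variable from the template. The start links are unchanged, so the same identity covers them trivially.

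\emph{Inclusion \(L_X(\widetilde G_0)\subseteq L_X(\widetilde G_0^+)\).} I argue by induction on derivation height in \(\widetilde G_0\). Every surviving nonterminal in a derivation is a trimmed typed nonterminal. If it is silent, its only derivable tuple is \((\eps,\ldots,\eps)\) by Definition~\ref{def:silent-typed}. So in a derivation ending with \(R\), every silent child contributes the empty tuple. The identity then rewrites the root step as an application of \(R^+\) to the non-silent children, and the induction hypothesis supplies derivations of those children in \(\widetilde G_0^+\).

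\emph{Inclusion \(L_X(\widetilde G_0^+)\subseteq L_X(\widetilde G_0)\).} Again by induction, now on derivation height in \(\widetilde G_0^+\). A root step by \(R^+\) applied to tuples \(\tuple u_{j_k}\in L_{Y_{j_k}}(\widetilde G_0)\) (induction hypothesis) is reversed as follows. Each silent child \(Y_j\) survives trimming, so \(L_{Y_j}(\widetilde G_0)\) is nonempty and hence equals \(\{(\eps,\ldots,\eps)\}\). Choose that derivation of the empty tuple for every silent child and apply the original rule \(R\). The identity shows the result is the same tuple.

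The step most in need of care is the claim that every silent child contributes exactly \((\eps,\ldots,\eps)\) in both directions. This needs both nonemptiness (from trimming) and the definition of silence; the rest is bookkeeping. Applying the equality to the start links gives \(L(\widetilde G_0^+)=L(\widetilde G_0)\), and Proposition~\ref{prop:output-invariants}(iii) gives \(L(\widetilde G_0)=L(G)\).

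\emph{Type statement.} Any tuple produced by a compressed rule \(R^+\) with left-hand side \(X=A_{\mathbf p}\) belongs to \(L_X(\widetilde G_0^+)=L_X(\widetilde G_0)\). By Proposition~\ref{prop:output-invariants}(i), every tuple derived from \(A_{\mathbf p}\) in \(\widetilde G_0\) has componentwise \(h\)-type \(\mathbf p\). This applies to derivations in the trimmed grammar \(\widetilde G_0\) because its derivations are also derivations of \(G^h\).
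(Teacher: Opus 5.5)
Your proposal is correct and follows essentially the same route as the paper. It rests on the identity that instantiating a silent child's variables by \(\eps\) equals deleting them, and then argues by induction on derivation height in both directions, using nonemptiness of \(L_Y\) (from trimming) to reinsert all-empty derivations for deleted children. The only minor difference is the type claim: you obtain it from the language equality plus Proposition~\ref{prop:output-invariants}(i), whereas the paper evaluates the compressed template using the all-identity types \((1_M,\ldots,1_M)\) of the silent children; both arguments are valid.
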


\begin{proof}
Fix a deleted silent child \(Y=B_{\mathbf q}\).  By definition,
\(L_Y=\{(\eps,\ldots,\eps)\}\).  Proposition~\ref{prop:output-invariants}
therefore says that the componentwise \(h\)-type of this tuple is
\(\mathbf q\).  Since \(h(\eps)=1_M\), every component of \(\mathbf q\)
must be \(1_M\).

Consequently, in every ground application of the original rule, the variables
belonging to \(Y\) are necessarily instantiated by \(\eps\).  Removing those
variables from the template therefore leaves the ground parent tuple itself
unchanged, and in particular leaves all of its componentwise \(h\)-values
unchanged.

We prove, in both directions by induction on derivation height, that the
original and compressed grammars derive the same tuples at every typed
nonterminal.  From the original grammar to the compressed grammar, discard the
subderivations rooted at deleted silent children and apply the induction
hypothesis to every retained child.  Since every deleted child contributes only
empty strings, the corresponding compressed rule produces exactly the same
parent tuple.

Conversely, for each application of a compressed rule, insert at every deleted
child occurrence a derivation tree of its unique all-empty tuple in the
original grammar.  If the same typed nonterminal occurs more than once in the
body, use an independent copy of such a derivation tree for each occurrence.
Together with the induction hypotheses for the retained children, this
reconstructs an application of the original rule with the same parent tuple.

The original rule is linear and nondeleting.  After deleting the complete
variable sets belonging to silent children, every remaining child variable
still occurs exactly once in the full template, so the compressed rule remains
linear and nondeleting.  The fresh start links are unchanged.  Finiteness and
fan-out preservation follow immediately from the construction, and preservation
of the output type follows from the original output-type invariant together
with the all-identity \(h\)-types of the deleted silent children.
\end{proof}

\begin{definition}[Presentation-relative characteristic sample]
\label{def:cs}
For each surviving typed nonterminal \(X\), include its fixed anchor exposure
\(\chi(X)[\omega(X)]\).

For every positive-rank compressed rule with typed left-hand side
\(X=A_{\mathbf p}\),
\(R^+:X\to\rho^+(Y_1,\ldots,Y_r)\), include the rule exposure
\(\chi(X)[\rho^+(\omega(Y_1),\ldots,\omega(Y_r))]\).

For every compressed rank-zero rule \(R^+:X\to\tuple c\) with typed
left-hand side \(X\), include \(\chi(X)[\tuple c]\).

The finite set of all words selected in this way is denoted
\(\CS(\widetilde G_0)\).  The notation suppresses the dependence on the
auxiliary compression \(\widetilde G_0^+\).
\end{definition}

\begin{lemma}[The characteristic sample is finite and positive]
\label{lem:cs-positive}
If \(L(G)\neq\emptyset\), then \(\CS(\widetilde G_0)\) is a finite nonempty
subset of \(L(G)\).
\end{lemma}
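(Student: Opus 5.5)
The plan is to check three properties separately: finiteness, membership in \(L(G)\), and nonemptiness. Each follows from the construction in Definition~\ref{def:cs} together with Lemmas~\ref{lem:exposing-contexts} and~\ref{lem:silent-compression}.

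\emph{Finiteness.} The refined grammar \(G^h\) has finitely many typed nonterminals, since \(V\) and \(M\) are finite and each \(A\) yields at most \(|M|^{\mu(A)}\) typed copies. It also has finitely many typed rules, since each rule of \(G\) yields finitely many typings. Trimming can only remove objects, so \(\widetilde G_0\) is finite. Compression sends each surviving rule to exactly one compressed rule, so \(\widetilde G_0^+\) is finite as well. Definition~\ref{def:cs} selects one word per surviving typed nonterminal and one word per compressed rule, so \(\CS(\widetilde G_0)\) is finite.

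\emph{Membership.} Write \(L=L(G)\). The key fact is Lemma~\ref{lem:exposing-contexts}: \(\chi(X)[\tuple u]\in L\) for every \(\tuple u\in L_X=L_X(\widetilde G_0)\). It therefore suffices to show that each plugged-in tuple lies in the relevant \(L_X\).
\begin{enumerate}[label=(\alph*),leftmargin=*]
\item For anchors, \(\omega(X)\in L_X\) by Definition~\ref{def:silent-typed}. If \(X\) is not silent, \(\omega(X)\) is chosen from \(L_X^+\subseteq L_X\). If \(X\) is silent, \(\omega(X)\) is the unique all-empty tuple of \(L_X\).
\item For a positive-rank compressed rule \(R^+:X\to\rho^+(Y_1,\ldots,Y_r)\), each \(\omega(Y_j)\in L_{Y_j}\) by (a). By Lemma~\ref{lem:silent-compression}, \(L_{Y_j}(\widetilde G_0)=L_{Y_j}(\widetilde G_0^+)\). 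Applying \(R^+\) in \(\widetilde G_0^+\) then gives \(\rho^+(\omega(Y_1),\ldots,\omega(Y_r))\in L_X(\widetilde G_0^+)=L_X(\widetilde G_0)\).
\item For a compressed rank-zero rule \(X\to\tuple c\), the same lemma gives \(\tuple c\in L_X(\widetilde G_0^+)=L_X(\widetilde G_0)\).
\end{enumerate}
One point needs care: the fresh start links \(\widetilde S\to S_{(p)}\) are excluded from compression. However, \(\widetilde S\) has no exposing context and is not listed as contributing rule exposures. So every rule exposure in Definition~\ref{def:cs} comes from a rule with left-hand side of the form \(A_{\mathbf p}\), for which \(\chi\) is defined.

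\emph{Nonemptiness.} Assume \(L(G)\ne\emptyset\). By Proposition~\ref{prop:output-invariants}(iii), \(L(\widetilde G_0)\ne\emptyset\), so some successful derivation from \(\widetilde S\) exists. Its first step uses a link \(\widetilde S\to S_{(p)}\), so \(S_{(p)}\) survives trimming. Its anchor exposure \(\chi(S_{(p)})[\omega(S_{(p)})]\) is therefore selected, and by the membership argument it lies in \(L\).

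I expect the only delicate step to be the bookkeeping in (b), namely using the compressed grammar's rules while the exposing context is stated for \(\widetilde G_0\). Lemma~\ref{lem:silent-compression}'s equality of typed tuple languages resolves this directly.
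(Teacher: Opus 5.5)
Your proof is correct and follows the paper's argument essentially step by step. Finiteness comes from the finitely many surviving typed nonterminals and compressed rules, nonemptiness from a surviving typed start child whose anchor exposure is selected, and membership from Lemma~\ref{lem:exposing-contexts} together with the tuple-language equality of Lemma~\ref{lem:silent-compression}; your extra bookkeeping (that anchors lie in \(L_X\), and that the fresh start links contribute no rule exposures) only makes explicit what the paper leaves implicit.
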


\begin{proof}
There are only finitely many surviving typed nonterminals and compressed rules,
so \(\CS(\widetilde G_0)\) is finite.

If \(L(G)\neq\emptyset\), there is a successful derivation containing at least
one surviving typed start child.  Hence at least one anchor exposure
\(\chi(X)[\omega(X)]\) belongs to the characteristic sample, and therefore
\(\CS(\widetilde G_0)\neq\emptyset\).

By Lemma~\ref{lem:exposing-contexts}, each exposing context \(\chi(X)\) accepts
every tuple in \(L_X\).  By Lemma~\ref{lem:silent-compression}, applying a
compressed rule to the fixed child anchors
\(\omega(Y_1),\ldots,\omega(Y_r)\) produces a parent tuple in \(L_X\); the
same statement applies to the right-hand tuple of a rank-zero rule.  Therefore
every word selected in Definition~\ref{def:cs} belongs to \(L(G)\), and hence
\(\CS(\widetilde G_0)\subseteq L(G)\).
\end{proof}

\subsection{Observed tuples and arbitrary-rank witnesses}

Following Definition~\ref{def:tuple-occurrence}, enumerate all tuple
occurrences of arity at most \(f\) in the sample \(K\).
A tuple \(\tuple x\in(\Sigma^*)^d\), \(1\le d\le f\), is called
\emph{observed in \(K\)} if it is the tuple value of one of these occurrences.
The learner uses one nonterminal \([\tuple x]\) for each observed tuple,
together with a fresh start symbol \(\widehat S\).

\begin{definition}[Positive-support composition witness]
\label{def:composition-witness}
Let \(r\ge1\).
A \emph{positive-support composition witness of rank \(r\)} in \(K\)
consists of a parent occurrence \((E_P,\tuple z)\) of arity \(e\le f\)
and an ordered \(r\)-tuple of child occurrences
\(\tuple x_j=(x_{j,1},\ldots,x_{j,d_j})\)
(\(1\le d_j\le f\)), with repetitions allowed, such that
\(\|\tuple x_j\|_1>0\) for every \(j\).

Each parent component is additionally partitioned, in order, into terminal
gaps and intervals labelled by
\(X_j:=\{x_j^1,\ldots,x_j^{d_j}\}\).
Across all parent components, every variable in
\(X_1\uplus\cdots\uplus X_r\) is used exactly once, and the interval labelled
\(x_j^k\) spells exactly the child component \(x_{j,k}\).
For empty intervals occupying the same position, their local order is also
recorded.

Keeping the terminal gaps and replacing each labelled interval by its
corresponding variable determines a unique linear nondeleting template
\(\rho_{\mathfrak b}\) such that
\(\tuple z=\rho_{\mathfrak b}(\tuple x_1,\ldots,\tuple x_r)\).
\end{definition}

\begin{example}
For instance, let the parent tuple be
\(\tuple z=(ab,cd)\), and let the two child tuples be
\(\tuple x_1=(a,c)\) and \(\tuple x_2=(b,d)\).
Partitioning each parent component according to the child components yields
the template
\(\rho_{\mathfrak b}=(x_1^1x_2^1,x_1^2x_2^2)\), and hence
\(\rho_{\mathfrak b}(\tuple x_1,\tuple x_2)=(ab,cd)=\tuple z\).
This is the simplest rank-two positive-support composition witness.
\end{example}

\begin{lemma}[Sample-bounded witness rank]
\label{lem:sample-bounded-rank}
If the parent tuple \(\tuple z\) of a positive-support rank-\(r\) witness
occurs in a sample word \(w\), then
\(r\le\|\tuple z\|_1\le|w|\le\|K\|_+\).
\end{lemma}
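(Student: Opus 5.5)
The chain \(r\le\|\tuple z\|_1\le|w|\le\|K\|_+\) is proved one inequality at a time, directly from the definitions of a positive-support composition witness (Definition~\ref{def:composition-witness}), of a tuple occurrence (Definition~\ref{def:tuple-occurrence}), and of positive sample size.

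\emph{First inequality.} The witness gives \(\tuple z=\rho_{\mathfrak b}(\tuple x_1,\ldots,\tuple x_r)\) with a linear nondeleting template. Each parent component is partitioned into terminal gaps and labelled intervals, and every child variable labels exactly one interval, which spells the corresponding child component. The child intervals are therefore pairwise disjoint pieces of the parent components. This gives
\[
  \|\tuple z\|_1\;\ge\;\sum_{j=1}^r\|\tuple x_j\|_1 ,
\]
with the terminal gaps accounting for any remaining length. Positive support says \(\|\tuple x_j\|_1\ge1\) for every \(j\), so the sum is at least \(r\).

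\emph{Second inequality.} Since \(\tuple z\) occurs in \(w\), Definition~\ref{def:tuple-occurrence} gives intervals \((\ell_i,r_i)\) that are ordered by \(\sigma\) without overlap inside \([0,|w|]\). Empty intervals may share a cut, but they contribute length zero. Hence
\[
  \|\tuple z\|_1=\sum_i (r_i-\ell_i)\le|w| .
\]

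\emph{Third inequality.} Because \(w\in K\), we have \(|w|\le\max(1,|w|)\le\|K\|_+\).

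The only step needing care is the first one, specifically the disjointness of the labelled intervals. That disjointness comes from reading the partition in Definition~\ref{def:composition-witness} literally: the terminal gaps and labelled intervals together partition each parent component, so no two labelled intervals can overlap. The empty-interval ordering data do not affect lengths, so they play no role in the bound.
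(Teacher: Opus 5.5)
Your proof is correct and follows essentially the same route as the paper. Length additivity under the linear nondeleting witness template, together with positive support of every child, gives \(r\le\|\tuple z\|_1\), and disjointness of the occurrence intervals gives \(\|\tuple z\|_1\le|w|\); you also spell out the final step \(|w|\le\max(1,|w|)\le\|K\|_+\), which the paper leaves implicit.
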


\begin{proof}
The witness template is nondeleting, so every component of every child tuple
occurs exactly once in the parent tuple.  Consequently the total parent length
is the sum of the child total lengths plus the number of explicit terminal
occurrences in the template.  Hence
\(\|\tuple z\|_1\ge\sum_{j=1}^r\|\tuple x_j\|_1\ge r\),
where the last inequality uses the positive total length of every child.  Since the components of the
parent occurrence occupy pairwise disjoint intervals of \(w\) (apart from
zero-length ties), \(\|\tuple z\|_1\le|w|\).
\end{proof}

\begin{lemma}[Enumeration of arbitrary-rank witnesses]
\label{lem:general-witness-enumeration}
Fix \(f\) and a finite sample \(K\), and put \(n:=\|K\|_+\).  All
positive-support composition witnesses in \(K\), together with their induced
canonical rule encodings, can be effectively enumerated.  For every fixed
branching cap \(b\), all witnesses of rank at most \(b\) are enumerable in
\(n^{O(fb)}\) time and space.
\end{lemma}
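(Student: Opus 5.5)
The plan is to make the enumeration completely explicit and then count. First, list all tuple occurrences of arity at most \(f\) in every word of \(K\). For a word \(w\in K\) with \(|w|\le n\), an arity-\(d\) occurrence is determined by a permutation \(\sigma\) of \(\{1,\ldots,d\}\) together with \(2d\) cut positions in \(\{0,\ldots,|w|\}\). Hence there are at most \(d!\,(n+1)^{2d}\) occurrences per word. Summing over \(d\le f\) and over the at most \(n\) words of \(K\) gives \(n^{O(f)}\) occurrences in total, since \(f\) is fixed. Each occurrence can be written out, as its context and tuple, in \(O(n)\) time.

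Next, fix a parent occurrence \((E_P,\tuple z)\) and a rank \(r\). By Lemma~\ref{lem:sample-bounded-rank}, we may restrict to \(r\le\|\tuple z\|_1\le n\), so for the unrestricted statement there are only finitely many ranks to try. For each \(r\), a witness is determined by the following data: a segmentation of the parent components into at most \(\sum_j d_j\le rf\) labelled intervals plus terminal gaps, an assignment of these intervals to variables \(x_j^k\), and the local order of any coinciding empty intervals.

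We encode a witness by choosing, for each of the \(\le rf\) variables, a start and end cut inside the parent tuple; this gives at most \((n+1)^{2rf}\) choices. We then choose the arities \(d_j\le f\) and the tie-breaking orders, which contribute at most \(f^r\,(rf)!\) further choices, a constant once \(f\) and \(r\le b\) are fixed. For each candidate we check in polynomial time that the following hold:
\begin{itemize}
\item the labelled intervals are pairwise disjoint and lie in the correct order inside each component;
\item every variable is used exactly once;
\item each child tuple \(\tuple x_j\) read off from its intervals has \(\|\tuple x_j\|_1>0\).
\end{itemize}
We also confirm that each \(\tuple x_j\) is observed in \(K\), which holds automatically, because its components are disjoint intervals of \(w\) and therefore form an occurrence. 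The template \(\rho_{\mathfrak b}\) is obtained by keeping the gaps and substituting the variables, and its canonical encoding has length \(O(n+rf)\).

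Multiplying the number of parent occurrences, \(n^{O(f)}\), by the number of candidates per parent, \(n^{O(rf)}\), and summing over \(r\le b\) gives \(n^{O(fb)}\) time and space. For the general claim we let \(r\) range up to \(n\), which yields a finite though superpolynomial effective enumeration.

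The one delicate point is the bookkeeping for empty intervals. Several zero-length labelled intervals, and zero-length parent components, may sit at the same cut, and their relative order is part of the witness. I would handle this by enumerating a total order on the variables sharing each cut, which costs at most \((rf)!\) per candidate, so the bound is unaffected. With that in place, uniqueness of the induced template follows directly.
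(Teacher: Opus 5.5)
Your proof is correct and follows essentially the paper's argument. Like the paper, you enumerate the $n^{O(f)}$ occurrences, fix a parent, and give each of the at most $rf$ variables a parent component and two endpoints. Make that component choice explicit, since a cut of $w$ shared by adjacent or empty parent components does not determine it. You then enumerate the local order of coincident empty intervals, discard invalid candidates, sum over $r\le b$, and use Lemma~\ref{lem:sample-bounded-rank} for the uncapped case.

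The one deviation is that you read the child tuples off the parent segmentation, whereas the paper chooses an ordered $r$-tuple of child occurrences from $K$ and checks value consistency. Your version yields exactly the same induced rules, and your observation that each child is automatically an occurrence inside the parent word is right. The definition of a witness, however, lets child occurrences lie in other sample words, as in Lemma~\ref{lem:exposed-rule-enumerated}. To enumerate witnesses literally as defined, add one loop over matching occurrences, which costs only another $n^{O(fr)}$ factor.
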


\begin{proof}
If \(K=\emptyset\), there are no witnesses.  Assume \(n\ge1\).  For a word
of length \(m\), an arity-\(d\) occurrence is specified by a permutation and
\(2d\) cut positions, so the total number of occurrences of arity at most
\(f\) is \(n^{O(f)}\).  For fixed rank \(r\), choose one parent
occurrence and an ordered \(r\)-tuple of child occurrences (with repetition
allowed), then assign to each
of at most \(fr\) normalized variables an output component and two endpoints
inside that parent component.  Enumerate the local order of coincident empty
intervals, reject overlapping or value-inconsistent assignments, and serialize
the uniquely determined terminal gaps and template.  The number of candidates
is \(n^{O(fr)}\).  Summing over \(1\le r\le b\) gives the displayed fixed-\(b\)
bound.  Without a fixed cap, Lemma~\ref{lem:sample-bounded-rank} restricts the
search to \(1\le r\le n\), so the enumeration is still finite and effective.
\end{proof}

\begin{corollary}[Rank-two witness enumeration]
\label{cor:binary-witness-enumeration}
For fixed \(f\), all positive-support witnesses of rank at most two are
enumerable in \(\|K\|_+^{O(f)}\) time and space.
\end{corollary}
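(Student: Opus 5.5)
This follows from Lemma~\ref{lem:general-witness-enumeration} with the branching cap fixed at \(b=2\). Because \(f\) is fixed as well, the bound \(n^{O(fb)}\) becomes \(n^{O(2f)}=\|K\|_+^{O(f)}\). The only thing to check is that the hidden constant in the exponent depends on \(f\) and \(b\) linearly, with no further hidden dependence. I would therefore recount the enumeration explicitly instead of quoting the lemma as a black box.

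First, count occurrences. An arity-\(d\) occurrence in a word \(w\) of length \(m\) is fixed by a permutation \(\sigma\), of which there are \(d!\le f!\) choices, and \(2d\) cut positions in \(\{0,\ldots,m\}\). So each word has at most \(f\cdot f!\,(m+1)^{2f}\) occurrences of arity at most \(f\). Summing over \(w\in K\), with \(m+1\le 2\max(1,|w|)\le 2n\) and \(|K|\le n\), gives \(n^{O(f)}\) occurrences in total, the constant depending only on \(f\).

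Next, count rank-\(r\) witnesses for \(r\in\{1,2\}\). We choose one parent occurrence and \(r\le2\) child occurrences, which gives \(n^{O(f)}\) choices. We then assign each of the at most \(2f\) child variables a parent component (at most \(f\) choices) and two endpoints inside it (at most \((n+1)^2\) choices). We also enumerate the local orders of coincident empty intervals, of which there are at most \((2f)!\), a constant. This yields \(n^{O(f)}\) candidates. For each candidate, checking non-overlap and agreement between labelled intervals and child components takes \(O(n)\) time, and serializing the template takes \(O(n)\) time and space. Summing over \(r=1,2\), the total time and space are \(\|K\|_+^{O(f)}\). If \(K=\emptyset\) the enumeration is trivially empty.

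I do not expect a real obstacle. The one step needing care is the dependence of the exponent on \(b\) in the lemma. Fixing \(b=2\) only multiplies the \(O(f)\) exponent by a constant, so the corollary is the \(b=2\) instance of Lemma~\ref{lem:general-witness-enumeration}.
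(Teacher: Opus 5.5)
Your proposal is correct and matches the paper: the corollary is exactly the \(b=2\) instance of Lemma~\ref{lem:general-witness-enumeration}, turning \(n^{O(fb)}\) into \(\|K\|_+^{O(f)}\). Your explicit recount of occurrences, parent and child choices, variable placements, and tie orders for \(r\le 2\) is a faithful specialization of that lemma's own proof, so it confirms rather than changes the argument.
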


\begin{lemma}[Exposed compressed rules occur in the enumeration]
\label{lem:exposed-rule-enumerated}
Let
\(R^+:X\to\rho(Y_1,\ldots,Y_r)\)
be a positive-rank rule of \(\widetilde G_0^+\).  If the sample contains the
anchor exposures of its children and its rule exposure, then the witness
enumeration contains a rank-\(r\) witness whose child tuples are
\(\omega(Y_1),\ldots,\omega(Y_r)\), whose parent tuple is
\(\rho(\omega(Y_1),\ldots,\omega(Y_r))\),
and whose induced template is exactly \(\rho\).
\end{lemma}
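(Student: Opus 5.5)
The plan is to exhibit the required witness directly inside the rule exposure word and then check each clause of Definition~\ref{def:composition-witness}. Write \(w_R:=\chi(X)[\rho(\omega(Y_1),\ldots,\omega(Y_r))]\). By Definition~\ref{def:cs}, \(w_R\) is in the sample. The context \(\chi(X)=u_0\blank_{\sigma(1)}u_1\cdots\blank_{\sigma(e)}u_e\) fixes the cut positions of the \(e\) parent components inside \(w_R\), so these form an arity-\(e\) tuple occurrence in the sense of Definition~\ref{def:tuple-occurrence}. Its sentence context is \(\chi(X)\) and its tuple value is \(\tuple z:=\rho(\omega(Y_1),\ldots,\omega(Y_r))\). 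Since \(\mu(A)\le f\), this occurrence is among those the learner enumerates. This gives the parent occurrence \((E_P,\tuple z)=(\chi(X),\tuple z)\).

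Next come the child tuples. Each \(Y_j\) is non-silent, because silent children were removed by the compression of Definition~\ref{def:silent-compression}. Hence \(\omega(Y_j)\in L_{Y_j}^+\) and \(\|\omega(Y_j)\|_1>0\), which is the positive-support requirement. The hypothesis that the anchor exposures \(\chi(Y_j)[\omega(Y_j)]\) lie in the sample shows that each \(\omega(Y_j)\) is an observed tuple of arity \(\mu(Y_j)\le f\). So the nonterminals \([\omega(Y_j)]\) exist, and the child occurrences required by the witness definition are available. Repetitions among the \(Y_j\) are allowed.

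The main step is producing the labelled partition. Because \(\rho\) is linear and nondeleting, evaluating \(\rho\) on \(\omega(Y_1),\ldots,\omega(Y_r)\) does the work. Each parent component \(\alpha_i\) is a word over terminals and variables. Substituting the anchors decomposes \(z_i\) left to right into a terminal gap for each terminal symbol of \(\alpha_i\), grouped into maximal gaps, and an interval for each occurrence of a variable \(x_j^k\) spelling \(\omega(Y_j)_k\). Every variable of \(X_1\uplus\cdots\uplus X_r\) occurs exactly once across the template, so every variable labels exactly one interval. Where several empty intervals, or empty anchor components, fall at the same cut, their local order is taken to be the order of the corresponding variables in \(\alpha_i\). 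This is precisely the data that Definition~\ref{def:composition-witness} records.

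Replacing the labelled intervals by their variables and keeping the gaps then returns \(\alpha_i\) exactly. By the uniqueness clause, the induced template \(\rho_{\mathfrak b}\) equals \(\rho\).

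Finally, Lemma~\ref{lem:general-witness-enumeration} enumerates every positive-support witness in \(K\) of every rank, with no rank cap, or up to the cap \(b\) when \(r\le b\). Lemma~\ref{lem:sample-bounded-rank} guarantees that \(r\le\|K\|_+\) lies in the searched range. So this witness and its canonical rule encoding appear in the enumeration.

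The only delicate point I expect is the bookkeeping for empty intervals and empty terminal gaps. The enumeration must range over all local orders of coincident empty intervals, so that the one matching \(\rho\) is included.
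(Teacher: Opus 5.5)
Your proposal is correct and follows essentially the same route as the paper. The rule exposure supplies the parent occurrence \((\chi(X),\rho(\omega(Y_1),\ldots,\omega(Y_r)))\), and the anchor exposures supply the child occurrences, with non-silence after compression giving positive support. Evaluating the linear nondeleting template \(\rho\), with tie order at coincident empty cuts taken from the template, gives a segmentation whose induced template is exactly \(\rho\), and the completeness of the witness enumeration then emits it.
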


\begin{proof}
Put
\(\tuple z_R:=\rho(\omega(Y_1),\ldots,\omega(Y_r))\).
The rule exposure is the sample word \(\chi(X)[\tuple z_R]\), so the pair
\((\chi(X),\tuple z_R)\) supplies the required parent tuple occurrence.  For
each body occurrence \(Y_j\), its anchor exposure
\(\chi(Y_j)[\omega(Y_j)]\) supplies a child occurrence with tuple value
\(\omega(Y_j)\).  If the same typed nonterminal occurs more than once in the
body, the same anchor occurrence may be selected repeatedly, as permitted by
Definition~\ref{def:composition-witness}.

Every child of the compressed rule is non-silent, so each selected anchor has
positive total length.  In the ground application of \(\rho\) to these anchors,
each normalized child variable determines a specific (possibly empty) interval
inside one component of \(\tuple z_R\).  Linearity and nondeletion make these
intervals disjoint except for zero-length coincidences and place every variable
exactly once.  Their order in the template supplies the tie order at coincident
empty cuts; the substrings between consecutive variable intervals are uniquely
the explicit terminal gaps of \(\rho\).  Thus the recorded parent and child
occurrences together with this segmentation form a positive-support witness
whose induced template is exactly \(\rho\).  Completeness of
Lemma~\ref{lem:general-witness-enumeration} emits it.
\end{proof}

\begin{definition}[Capped and general canonical hypotheses]
\label{def:learner}
For \(b\ge1\), the capped hypothesis \(\widehat G_{\le b}(K)\) has start symbol
\(\widehat S\), nonterminals \([\tuple x]\) for all observed tuples of arity
at most \(f\), and the following standard MCFG rules:
\begin{enumerate}[label=(\roman*),leftmargin=*]
\item for every \(w\in K\), the start rule
      \(\widehat S\to[(w)]\);
\item for every observed tuple \(\tuple x\), the rank-zero constant rule
      \([\tuple x]\to\tuple x\);
\item for every positive-support witness of rank \(1\le r\le b\), with parent
      \(\tuple z\), child tuples \(\tuple x_1,\ldots,\tuple x_r\), and induced
      template \(\rho\), the rule
      \[
        [\tuple z]\to
        \rho([\tuple x_1],\ldots,[\tuple x_r]);
      \]
\item for observed tuples \(\tuple x,\tuple y\) of the same arity, the unary
      identity-template rule \([\tuple x]\to[\tuple y]\) whenever
      \(h^{(d)}(\tuple x)=h^{(d)}(\tuple y)\) and some concrete arity-\(d\)
      context \(E\) satisfies
      \(E[\tuple x],E[\tuple y]\in K\).
\end{enumerate}
If \(K\ne\emptyset\), define the \emph{general canonical hypothesis} by
\(\widehat G(K):=\widehat G_{\le\|K\|_+}(K)\).
For \(K=\emptyset\), \(\widehat G(K)\) is the grammar with no productive start
rule.  Fixed canonical encoding orders make both constructions deterministic
and set-driven.  The learner receives only \(K,f,h\).
\end{definition}

\begin{remark}[Compressing semantic unary rules]
The definition may add semantic unary rules for many observed pairs.  An
implementation may first partition observed tuples of each arity and
\(h\)-type by the shared-context relation generated from the sample and choose
one representative per resulting class, connecting other members only to that
representative.  This reduces redundant unary rules without changing the
language generated by the hypothesis or any complexity claim below.
\end{remark}

\begin{lemma}[Finiteness and effectiveness of the canonical hypotheses]
\label{lem:learner-effective}
For every finite \(K\) and every \(b\ge1\), both
\(\widehat G_{\le b}(K)\) and \(\widehat G(K)\) are finite standard MCFGs of
fan-out at most \(f\).  They are effectively constructible from \(K,f,h\);
when \(K\ne\emptyset\), every rule of \(\widehat G(K)\) has rank at most
\(\|K\|_+\).
\end{lemma}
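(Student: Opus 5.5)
The plan is to verify the claims clause by clause directly from Definition~\ref{def:learner}, using Lemma~\ref{lem:general-witness-enumeration} and Lemma~\ref{lem:sample-bounded-rank} for the witness rules.

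\textbf{Finiteness.} The number of tuple occurrences of arity at most \(f\) in \(K\) is finite; by the counting in the proof of Lemma~\ref{lem:general-witness-enumeration} it is \(n^{O(f)}\) with \(n=\|K\|_+\). Hence the set of observed tuples, and so the nonterminal set \(\{[\tuple x]\}\cup\{\widehat S\}\), is finite. The rules of the four kinds are then finite in number:
\begin{itemize}[leftmargin=*]
\item[(i)] one start rule per \(w\in K\);
\item[(ii)] one constant rule per observed tuple;
\item[(iii)] one rule per witness of rank at most \(b\), and these are finitely many by Lemma~\ref{lem:general-witness-enumeration};
\item[(iv)] at most one unary rule per ordered pair of observed tuples of equal arity.
\end{itemize}

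\textbf{Well-formedness as standard MCFGs of fan-out at most \(f\).} Each \([\tuple x]\) receives fan-out equal to the arity of \(\tuple x\), which is at most \(f\), and \(\mu(\widehat S)=1\). Start rules \(\widehat S\to[(w)]\) link fan-out-one symbols by the identity template. Constant rules have rank zero with templates over \(\Sigma^*\) of the correct arity. Witness rules carry the linear nondeleting template \(\rho_{\mathfrak b}\) of Definition~\ref{def:composition-witness}, whose arity matches the parent \(e\le f\) and whose variables match the child fan-outs \(d_j\). Rules of kind (iv) are identity templates between equal-arity symbols. Thus every rule conforms to Definition~\ref{def:standard-mcfg}.

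\textbf{Effectiveness.} From \(K\) we enumerate occurrences. Each \(h\)-value is computed using the explicit multiplication table. Witnesses are obtained from Lemma~\ref{lem:general-witness-enumeration}. For kind (iv), we test shared contexts by checking whether there exist occurrences \((E,\tuple x)\) and \((E,\tuple y)\) in sample words with the same context \(E\); this is a finite comparison among the enumerated occurrences. Using canonical encoding orders makes the output deterministic. For \(K\ne\emptyset\), setting \(b=\|K\|_+\) gives \(\widehat G(K)\). The empty case is the trivial grammar.

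\textbf{Rank bound.} The rank bound for \(\widehat G(K)\) follows because rules of kinds (i), (ii), and (iv) have rank at most one, while witness rules have rank \(r\le b=\|K\|_+\); in any case Lemma~\ref{lem:sample-bounded-rank} gives \(r\le\|K\|_+\) independently of the cap.

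No step is a real obstacle. The only point needing care is that the shared-context test in (iv) is decidable from \(K\) alone, which holds because the context \(E\) must literally reappear as the context of occurrences in words of \(K\).
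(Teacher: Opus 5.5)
Your proposal is correct and follows essentially the same route as the paper. Finiteness of occurrences bounds the observed tuples and the unary-rule candidates, Lemma~\ref{lem:general-witness-enumeration} supplies the witness rules, and Lemma~\ref{lem:sample-bounded-rank} justifies the cap \(\|K\|_+\); the fan-out and linear-nondeleting checks then give standard MCFG rules. Your explicit observation that the shared-context test in (iv) amounts to comparing the contexts of enumerated occurrences is a useful detail that the paper leaves implicit.
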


\begin{proof}
A finite sample has only finitely many tuple occurrences of every arity at most
\(f\), hence only finitely many observed tuple values and semantic unary-rule
candidates.  Lemma~\ref{lem:general-witness-enumeration} effectively enumerates
all witness rules below any fixed cap, and
Lemma~\ref{lem:sample-bounded-rank} makes the sample-dependent cap
\(\|K\|_+\) sufficient for the general hypothesis.  The start and constant
rules are finite and effective by inspection.  Every nonterminal
\([\tuple x]\) has fan-out equal to the arity of \(\tuple x\), at most \(f\),
and every emitted composition template is linear and nondeleting, hence a
standard MCFG rule.
\end{proof}

\section{Soundness and Exact Reconstruction}
\label{sec:exact}

\subsection{Soundness}
\label{sec:soundness}

\begin{lemma}[Induced child context]
\label{lem:filling-identity}
Let \(\rho\) be a rank-\(r\) linear nondeleting template, let \(E\) be a
sentence context for its output tuple, and fix compatible tuples for all
children except child \(j\).  Replacing every variable of child \(j\) by its
named hole and every other child variable by the corresponding fixed component
produces a well-formed sentence context \(E_j\) satisfying
\[
  E_j[\tuple u_j]
  =E[\rho(\tuple u_1,\ldots,\tuple u_r)].
\]
The identity remains valid after arbitrary replacements of the fixed sibling
tuples.
\end{lemma}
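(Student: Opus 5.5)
The plan is to make the construction of \(E_j\) explicit and then verify the identity by a direct string computation. Write \(E=u_0\blank_{\sigma(1)}u_1\cdots\blank_{\sigma(e)}u_e\), where \(e\) is the arity of the output of \(\rho\), and write \(\rho=(\alpha_1,\ldots,\alpha_e)\). First substitute \(\alpha_i\) for each hole \(\blank_i\) of \(E\). This gives a single word
\[
  W:=u_0\alpha_{\sigma(1)}u_1\cdots\alpha_{\sigma(e)}u_e
\]
over \(\Sigma\uplus X_1\uplus\cdots\uplus X_r\). By linearity and nondeletion of \(\rho\), every variable \(x_m^k\) occurs exactly once in \(W\). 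Moreover, simultaneously substituting \(u_{m,k}\) for every \(x_m^k\) in \(W\) yields \(E[\rho(\tuple u_1,\ldots,\tuple u_r)]\). The reason is that substitution into the template and substitution into the context commute: both are homomorphic replacements on disjoint symbol sets.

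Next define \(E_j\). Start from \(W\), replace each \(x_j^k\) by the named hole \(\blank_k\) for \(1\le k\le d_j\), and replace each \(x_m^k\) with \(m\ne j\) by the fixed sibling component \(u_{m,k}\). The result lies in \((\Sigma\cup\{\blank_1,\ldots,\blank_{d_j}\})^*\), and each \(\blank_k\) occurs exactly once because \(x_j^k\) occurs exactly once in \(W\). The holes therefore appear in some left-to-right order, which defines a permutation \(\sigma_j\). Cutting \(E_j\) at the hole positions presents it in the normal form \(v_0\blank_{\sigma_j(1)}v_1\cdots\blank_{\sigma_j(d_j)}v_{d_j}\) of Definition~\ref{def:sentence-context}. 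If some child components are empty and several holes become adjacent, the order is still the one inherited from \(W\), so the context is well formed.

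The identity then follows because substituting \(u_{j,k}\) for \(\blank_k\) in \(E_j\) is the same as performing all the variable substitutions in \(W\) at once. We split the total substitution into the \(j\)-part and the non-\(j\)-part; since these act on disjoint symbols, the order in which they are done does not matter. The final claim, that the identity survives arbitrary replacement of the sibling tuples, needs no extra work. The construction of \(E_j\) and the verification are uniform in the chosen sibling values, so the same argument applies verbatim to any replacement, and the resulting \(E_j\) still fills correctly.

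The only delicate step is well-formedness. Each hole of child \(j\) must occur exactly once, which relies on nondeletion together with linearity: without nondeletion a hole could be missing, and without linearity it could be duplicated. Tie orders among empty intervals are handled automatically, since the positions come from the positions in \(W\).
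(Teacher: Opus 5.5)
Your proof is correct and takes the same route as the paper. Linearity and nondeletion guarantee that each hole of child \(j\) occurs exactly once, and the identity is just the composition of the template substitution, the context filling, and the child substitution; your intermediate word \(W\) (obtained by filling \(E\) with the template components first) makes that composition and the inherited order of adjacent holes explicit.
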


\begin{proof}
Linearity and nondeletion make every variable of child \(j\) occur exactly
once, so every named hole occurs exactly once.  Filling the holes is simply the
composition of the child substitution with the template substitution and then
with \(E\).
\end{proof}

\begin{lemma}[Witnessed arbitrary-rank composition preserves equivalence]
\label{lem:witnessed-composition}
Let \(L\) be \((f,h)\)-tuple-substitutable and let \(\rho\) be a rank-\(r\)
linear nondeleting template whose output and child arities are at most \(f\).
Suppose
\[
  E[\rho(\tuple x_1,\ldots,\tuple x_r)]\in L
\]
and
\[
  \tuple u_j\equiv_L\tuple x_j
  \qquad(1\le j\le r).
\]
Then
\[
  \rho(\tuple u_1,\ldots,\tuple u_r)
  \equiv_L
  \rho(\tuple x_1,\ldots,\tuple x_r).
\]
\end{lemma}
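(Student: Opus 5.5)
We argue by replacing the children one at a time, a hybrid argument along the sequence
\[
  \tuple z^{(k)}:=\rho(\tuple u_1,\ldots,\tuple u_k,\tuple x_{k+1},\ldots,\tuple x_r),
  \qquad 0\le k\le r,
\]
so that \(\tuple z^{(0)}=\rho(\tuple x_1,\ldots,\tuple x_r)\) and \(\tuple z^{(r)}=\rho(\tuple u_1,\ldots,\tuple u_r)\). Since \(\equiv_L\) is an equivalence relation, it suffices to prove \(\tuple z^{(k-1)}\equiv_L\tuple z^{(k)}\) for each \(k\). Note that the hypothesis \(E[\tuple z^{(0)}]\in L\) is not even needed for this argument. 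It only guarantees that the distributions are nonempty, which is convenient but irrelevant to equality.

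\textbf{Type equality.} The \(h\)-types agree because of the homomorphic evaluation identity from the proof of Proposition~\ref{prop:output-invariants}. It gives
\[
  h^{(e)}(\rho(\tuple v_1,\ldots,\tuple v_r))=\out_\rho\bigl(h^{(d_1)}(\tuple v_1),\ldots,h^{(d_r)}(\tuple v_r)\bigr).
\]
Since \(\tuple u_j\equiv_L\tuple x_j\) includes \(h^{(d_j)}(\tuple u_j)=h^{(d_j)}(\tuple x_j)\), all the \(\tuple z^{(k)}\) have the same componentwise type. This step needs no intermediate hybrids.

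\textbf{Distribution equality.} Fix \(k\) and an arbitrary arity-\(e\) sentence context \(F\). Apply Lemma~\ref{lem:filling-identity} to \(F\), the template \(\rho\), and child \(k\), with the siblings fixed to \(\tuple u_1,\ldots,\tuple u_{k-1},\tuple x_{k+1},\ldots,\tuple x_r\). This yields an arity-\(d_k\) sentence context \(F_k\) with
\[
  F_k[\tuple x_k]=F[\tuple z^{(k-1)}],\qquad F_k[\tuple u_k]=F[\tuple z^{(k)}].
\]
Because \(\D_L(\tuple u_k)=\D_L(\tuple x_k)\), we have \(F_k\in\D_L(\tuple x_k)\) if and only if \(F_k\in\D_L(\tuple u_k)\). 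Hence \(F\in\D_L(\tuple z^{(k-1)})\) if and only if \(F\in\D_L(\tuple z^{(k)})\). As \(F\) was arbitrary, the two distributions coincide. Combining this with the type equality gives \(\tuple z^{(k-1)}\equiv_L\tuple z^{(k)}\), and chaining over \(k\) proves the lemma.

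The only delicate point is checking that \(F_k\) is a legitimate named sentence context of arity \(d_k\le f\). We need each hole to occur exactly once, and the holes may be permuted and interleaved with terminal material coming from \(F\), from \(\rho\), and from the fixed siblings. Linearity and nondeletion of \(\rho\), together with \(F\) being a named context, ensure this, and that is exactly what Lemma~\ref{lem:filling-identity} records. Substitutability of \(L\) itself is not needed here. It enters only where Lemma~\ref{lem:shared-context} is used to establish the hypotheses \(\tuple u_j\equiv_L\tuple x_j\).
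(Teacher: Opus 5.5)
Your proof is correct, but it takes a different route from the paper. The paper runs the child-by-child replacement only inside the single witness context \(E\): at stage \(j\) the induced context \(E_j\) accepts \(\tuple x_j\), hence accepts \(\tuple u_j\), so after \(r\) stages \(E[\rho(\tuple u_1,\ldots,\tuple u_r)]\in L\). It then checks that the parent \(h\)-types agree and invokes Lemma~\ref{lem:shared-context}. That is, it applies \((f,h)\)-tuple substitutability at the parent arity \(e\le f\) to upgrade this one shared accepting context to equality of full distributions.

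You instead run the hybrid in an arbitrary context \(F\). This yields distribution equality directly and shows that \(\equiv_L\) is a congruence for linear nondeleting templates for every language \(L\). As you observe, the witness hypothesis \(E[\rho(\tuple x_1,\ldots,\tuple x_r)]\in L\), the substitutability of \(L\), and the bound \(e\le f\) on the output arity are then superfluous for this lemma.

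Your version buys generality and a sharper accounting of hypotheses. In the soundness proof (Proposition~\ref{prop:learner-soundness}), substitutability is genuinely needed only for the semantic unary rules, while witness rules are sound for purely congruence-theoretic reasons. The paper's version costs no extra effort and keeps the argument phrased around the concrete accepting context supplied by the positive sample, matching the shared-context mechanism used elsewhere. However, it relies on a stronger hypothesis than the conclusion actually requires.
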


\begin{proof}
Replace the children one at a time.  At stage \(j\), use
Lemma~\ref{lem:filling-identity} with the already replaced children fixed to
\(\tuple u_i\) and the not-yet-replaced children fixed to \(\tuple x_i\).
The resulting induced context accepts \(\tuple x_j\); equality of tuple
distributions therefore makes it accept \(\tuple u_j\).  After \(r\) stages,
\[
  E[\rho(\tuple u_1,\ldots,\tuple u_r)]\in L.
\]
Homomorphic template evaluation and equality of all child \(h\)-types give
equality of the two parent \(h\)-types.  The original and updated parent tuples
thus share the accepting context \(E\), so Lemma~\ref{lem:shared-context}
applies.
\end{proof}

\begin{proposition}[Soundness of every capped hypothesis]
\label{prop:learner-soundness}
Let \(L\) be \((f,h)\)-tuple-substitutable and \(K\subseteq L\).  For every
\(b\ge1\), if
\([\tuple x]\Rightarrow^*\tuple u\) in \(\widehat G_{\le b}(K)\), then
\(\tuple u\equiv_L\tuple x\).  Consequently,
\[
  L(\widehat G_{\le b}(K))\subseteq L,
  \qquad
  L(\widehat G(K))\subseteq L.
\]
\end{proposition}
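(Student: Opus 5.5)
The plan is an induction on the height of a derivation \([\tuple x]\Rightarrow^*\tuple u\) in \(\widehat G_{\le b}(K)\), with one case for each of the four rule types of Definition~\ref{def:learner}. The inductive claim is that \(\tuple u\equiv_L\tuple x\), i.e.\ equal componentwise \(h\)-type and equal tuple distribution. Note that \(\equiv_L^d\) is an equivalence relation, since it is defined by two equalities. This means transitivity may be used freely.

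The three non-start rule types go as follows.

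\textbf{Rank-zero rule \([\tuple x]\to\tuple x\).} Here \(\tuple u=\tuple x\), so the claim holds by reflexivity.

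\textbf{Semantic unary rule \([\tuple x]\to[\tuple y]\).} The derivation continues as \([\tuple y]\Rightarrow^*\tuple u\) with smaller height, so the induction hypothesis gives \(\tuple u\equiv_L\tuple y\). The rule was added only when \(h^{(d)}(\tuple x)=h^{(d)}(\tuple y)\) and some context \(E\) has \(E[\tuple x],E[\tuple y]\in K\subseteq L\). Since \(d\le f\), Lemma~\ref{lem:shared-context} gives \(\tuple x\equiv_L\tuple y\), and transitivity finishes the case.

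\textbf{Witness rule \([\tuple z]\to\rho([\tuple x_1],\ldots,[\tuple x_r])\).} Then \(\tuple u=\rho(\tuple u_1,\ldots,\tuple u_r)\) with \([\tuple x_j]\Rightarrow^*\tuple u_j\) of smaller height, so \(\tuple u_j\equiv_L\tuple x_j\) by induction. The witness supplies a parent occurrence \((E_P,\tuple z)\) in some sample word, so \(E_P[\rho(\tuple x_1,\ldots,\tuple x_r)]=E_P[\tuple z]\in K\subseteq L\). The template \(\rho\) is linear and nondeleting, with all arities at most \(f\). Lemma~\ref{lem:witnessed-composition} therefore gives \(\tuple u\equiv_L\tuple z\).

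For the language inclusion, any \(w\in L(\widehat G_{\le b}(K))\) comes from a start rule \(\widehat S\to[(v)]\) with \(v\in K\) and \([(v)]\Rightarrow^*(w)\). By the claim, \((w)\equiv_L^1(v)\). The trivial context \(\blank_1\) lies in \(\D_L((v))\) because \(v\in L\), hence it lies in \(\D_L((w))\), so \(w\in L\). The statement for \(\widehat G(K)\) is the special case \(b=\|K\|_+\), and it is trivial when \(K=\emptyset\) since there is then no productive start rule.

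The main obstacle is organizational rather than technical. The induction must be on derivation height, not on nonterminals, because semantic unary rules may form cycles among the \([\tuple x]\). One must also check that the witness's parent context \(E_P\) is genuinely a context accepting \(\tuple z\) in \(L\), which holds because it comes from an occurrence in a sample word of \(K\subseteq L\). This is exactly the hypothesis Lemma~\ref{lem:witnessed-composition} requires.
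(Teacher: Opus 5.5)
Your proposal is correct and follows the paper's proof essentially step for step. The paper's proof is likewise an induction on the derivation, with the constant-rule case, the semantic-unary case (Lemma~\ref{lem:shared-context} plus symmetry and transitivity), and the witness-rule case (Lemma~\ref{lem:witnessed-composition} applied via the sample-supplied parent occurrence), and it closes with the start rule and the trivial context \(\blank_1\); your explicit choice of derivation height as the induction measure and your separate treatment of \(K=\emptyset\) are sound refinements of that same argument.
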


\begin{proof}
Induct on the derivation.  A constant rule derives its own observed tuple.  A
semantic unary rule is sound by Lemma~\ref{lem:shared-context}, followed by
symmetry and transitivity of \(\equiv_L\).  For a witness rule, the witness
supplies a parent occurrence \((E,\tuple x)\) in \(K\), hence an accepted
parent application in \(L\); apply the induction hypotheses to all children
and Lemma~\ref{lem:witnessed-composition}.  Finally, every start derivation
begins with \(\widehat S\to[(w_0)]\) for some \(w_0\in K\); equivalence of the
derived unary tuple with \((w_0)\), together with the empty unary context,
places the derived word in \(L\).
\end{proof}

\subsection{Completeness and exact reconstruction}
\label{sec:completeness}

\begin{proposition}[Completeness for a fixed nondeleting presentation]
\label{prop:completeness}
Let \(G\) be a finite nondeleting MCFG of fan-out at most \(f\), let
\(L=L(G)\) be \((f,h)\)-tuple-substitutable, and construct
\(\widetilde G_0^+\) and \(\CS(\widetilde G_0)\) as above.  Let \(b_G\) be the
maximum rank of a compressed rule.  If
\[
  \CS(\widetilde G_0)\subseteq K\subseteq L,
\]
then for every \(b\ge\max\{1,b_G\}\),
\[
  L\subseteq L(\widehat G_{\le b}(K)).
\]
Moreover,
\[
  L\subseteq L(\widehat G(K)).
\]
\end{proposition}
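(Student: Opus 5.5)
The plan is to prove, by induction on derivation height in \(\widetilde G_0^+\), the following claim. For every surviving typed nonterminal \(X\) and every \(\tuple u\in L_X(\widetilde G_0^+)\), the hypothesis \(\widehat G_{\le b}(K)\) satisfies \([\omega(X)]\Rightarrow^*\tuple u\). Note that \(\omega(X)\) is an observed tuple, because the anchor exposure \(\chi(X)[\omega(X)]\) lies in \(\CS(\widetilde G_0)\subseteq K\). Granting the claim, take \(w\in L\). By Lemma~\ref{lem:silent-compression} and Proposition~\ref{prop:output-invariants}, \(w\) is derived in \(\widetilde G_0^+\) through some start link \(\widetilde S\to X\) with \(X=S_{(p)}\), and \(\chi(X)=\blank_1\). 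Hence the word \(\chi(X)[\omega(X)]=\omega(X)\), viewed as a unary tuple, is in \(K\), so \(\widehat S\to[\omega(X)]\) is a start rule. The claim then gives \([\omega(X)]\Rightarrow^*(w)\), and therefore \(w\in L(\widehat G_{\le b}(K))\). The general hypothesis \(\widehat G(K)\) is handled the same way: every rule of \(\widetilde G_0^+\) is exposed in \(K\), so Lemma~\ref{lem:sample-bounded-rank} gives \(b_G\le\|K\|_+\), and \(\widehat G(K)=\widehat G_{\le\|K\|_+}(K)\) contains \(\widehat G_{\le b}(K)\) for \(b=\max\{1,b_G\}\).

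For the induction step, let the last rule be a compressed rule \(R^+:X\to\rho^+(Y_1,\ldots,Y_r)\) with \(\tuple u=\rho^+(\tuple u_1,\ldots,\tuple u_r)\).

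\emph{Case \(r=0\).} Here \(\tuple u=\tuple c\). The word \(\chi(X)[\tuple c]\) is in \(K\), so \(\tuple c\) is observed and the constant rule \([\tuple c]\to\tuple c\) is present. Both \(\tuple c\) and \(\omega(X)\) lie in \(L_X\), so they have the same type \(\mathbf p\) (Proposition~\ref{prop:output-invariants}). Both are accepted by \(\chi(X)\), and both exposures are in \(K\). Rule (iv) of Definition~\ref{def:learner} therefore supplies \([\omega(X)]\to[\tuple c]\).

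\emph{Case \(r\ge1\).} Each \(Y_j\) is non-silent, and its anchor exposure is in \(K\). Put \(\tuple z_R:=\rho^+(\omega(Y_1),\ldots,\omega(Y_r))\). Lemma~\ref{lem:exposed-rule-enumerated} yields a witness of rank \(r\le b_G\le b\), and hence the rule \([\tuple z_R]\to\rho^+([\omega(Y_1)],\ldots,[\omega(Y_r)])\). Both \(\tuple z_R\) and \(\omega(X)\) lie in \(L_X\), so they have type \(\mathbf p\). They share the context \(\chi(X)\) with both exposures in \(K\), so rule (iv) gives \([\omega(X)]\to[\tuple z_R]\). Applying the induction hypotheses \([\omega(Y_j)]\Rightarrow^*\tuple u_j\) then completes the derivation \([\omega(X)]\Rightarrow^*\tuple u\).

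The main obstacle I expect is the semantic-unary step. Rule (iv) requires the two tuples to have literally equal \(h\)-types and to share a concrete context whose \emph{both} fillings occur in \(K\), not merely in \(L\). Using the same fixed exposing context \(\chi(X)\) for the anchor, the rule exposure and the constant exposure is exactly what makes this work, and this is why Definition~\ref{def:cs} is built around \(\chi(X)\). A secondary point is that the base case needs the exposure of the rank-zero rules. When \(\omega(X)\) is itself a constant there is nothing to do, and when \(X\) is silent its only tuple is the all-empty tuple, which equals \(\omega(X)\) and is produced by its constant rule. Silent children never appear, because compression removed them (Lemma~\ref{lem:silent-compression}), so no witness with an empty-support child is ever required.
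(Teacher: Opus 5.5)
Your proposal is correct and follows the paper's proof: both map each surviving typed nonterminal \(X\) to \([\omega(X)]\) and use the shared exposing context \(\chi(X)\) to obtain the semantic unary rules \([\omega(X)]\to[\tuple c]\) and \([\omega(X)]\to[\tuple z_R]\). Both also take the witness rule from Lemma~\ref{lem:exposed-rule-enumerated}, handle the start via \(\chi(X)=\blank_1\), and cover the uncapped learner with Lemma~\ref{lem:sample-bounded-rank}; your explicit claim \([\omega(X)]\Rightarrow^*\tuple u\) for every \(\tuple u\in L_X(\widetilde G_0^+)\) simply spells out the induction the paper states in one line.
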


\begin{proof}
Put \(\iota(X):=[\omega(X)]\).  If \(\widetilde S\to X\) survives, then
\(\chi(X)=\blank_1\), so its anchor exposure is itself a sample word and the
start rule \(\widehat S\to\iota(X)\) is present.

For a compressed rank-zero rule \(R:X\to\tuple c\), the anchor exposure and
rule exposure share \(\chi(X)\).  By Lemma~\ref{lem:silent-compression},
\(\tuple c\in L_X\), and Proposition~\ref{prop:output-invariants} therefore
gives the same output type.  Hence the semantic unary rule
\(\iota(X)\to[\tuple c]\) is present, followed by the constant rule
\([\tuple c]\to\tuple c\).  In particular, if \(X\) is non-silent while
\(\tuple c=(\eps,\ldots,\eps)\), this is precisely the mixed case isolated
in Lemma~\ref{lem:mixed-empty-positive}; such a state is not compressed merely
because it also admits an all-empty derivation.

For a positive-rank compressed rule
\[
  R:X\to\rho(Y_1,\ldots,Y_r),
\]
put
\[
  \tuple z_R:=\rho(\omega(Y_1),\ldots,\omega(Y_r)).
\]
By Lemma~\ref{lem:silent-compression}, \(\tuple z_R\in L_X\).  Hence the
anchor exposure and rule exposure share \(\chi(X)\) and have the output type
of \(X\), so \(\iota(X)\to[\tuple z_R]\) is present.  By
Lemma~\ref{lem:exposed-rule-enumerated}, the learner also contains
\[
  [\tuple z_R]\to\rho(\iota(Y_1),\ldots,\iota(Y_r))
\]
whenever its cap is at least \(r\).  An induction on a successful
\(\widetilde G_0^+\)-derivation now simulates the entire derivation.
Lemma~\ref{lem:silent-compression} gives
\(L(\widetilde G_0^+)=L\), proving the capped claim.

For the uncapped general learner, every required witness has a rule exposure
word \(w_R\in K\).  Lemma~\ref{lem:sample-bounded-rank} gives
\(r\le|w_R|\le\|K\|_+\), so every required witness lies below the learner's
sample-dependent cap.
\end{proof}

\begin{theorem}[Exact reconstruction for fixed-observation MCFGs]
\label{thm:exact-reconstruction}
Fix \(f\ge1\) and an explicit finite homomorphism \(h:\Sigma^*\to M\).  Let
\[
  L\in\Cmcf{f}{h}.
\]
Then there exists a finite characteristic sample \(S_L\subseteq L\) such that,
for every finite
\[
  S_L\subseteq K\subseteq L,
\]
the general canonical learner satisfies
\[
  L(\widehat G(K))=L.
\]
If \(L\ne\emptyset\), the sample can be chosen as
\(S_L=\CS(\widetilde G_0)\) for a nondeleting proof-side presentation of
\(L\).  For \(L=\emptyset\), take \(S_L=\emptyset\).
\end{theorem}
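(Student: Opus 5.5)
The plan is to assemble the theorem from soundness (Proposition~\ref{prop:learner-soundness}) and completeness (Proposition~\ref{prop:completeness}), after passing on the proof side to a suitable presentation. I would split into two cases according to whether \(L\) is empty.

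\emph{Case \(L=\emptyset\).} Take \(S_L=\emptyset\). Then the only finite \(K\) with \(S_L\subseteq K\subseteq L\) is \(K=\emptyset\). By Definition~\ref{def:learner}, \(\widehat G(\emptyset)\) has no productive start rule, so \(L(\widehat G(\emptyset))=\emptyset=L\).

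\emph{Case \(L\neq\emptyset\).}
\begin{enumerate}[label=(\arabic*),leftmargin=*]
\item Since \(L\in f\text{-}\mathrm{MCFL}\), fix a standard \(f\)-MCFG generating \(L\). Apply the fan-out-preserving nondeleting normalization recalled in Section~\ref{sec:prelim} \cite{Kanazawa2019Ogden}. This yields a finite nondeleting MCFG \(G\) of fan-out at most \(f\) with \(L(G)=L\). If the original presentation has arbitrary finite rank, apply the normalization with \(b\) equal to its maximum rank; this is finite, so nothing further is needed.
\item Form \(G^h\), its trimmed refinement \(\widetilde G_0\), and the compression \(\widetilde G_0^+\). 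Set \(S_L:=\CS(\widetilde G_0)\). By Lemma~\ref{lem:cs-positive}, this is a finite nonempty subset of \(L\).
\item Now let \(K\) be finite with \(S_L\subseteq K\subseteq L\). Then \(K\neq\emptyset\), so \(\widehat G(K)=\widehat G_{\le\|K\|_+}(K)\) is well defined.
\item Since \(L\) is \((f,h)\)-tuple-substitutable and \(K\subseteq L\), Proposition~\ref{prop:learner-soundness} gives \(L(\widehat G(K))\subseteq L\).
\item The uncapped part of Proposition~\ref{prop:completeness} gives \(L\subseteq L(\widehat G(K))\). This is where the sample-dependent cap matters: each required witness has its rule exposure word in \(K\), so Lemma~\ref{lem:sample-bounded-rank} bounds its rank by \(\|K\|_+\). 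Combining steps (4) and (5) gives equality.
\end{enumerate}

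The main obstacle is mostly bookkeeping, since the earlier results are stated in exactly the needed form. The point I would check most carefully is that the normalization step is legitimate for an arbitrary standard presentation. It must preserve fan-out at most \(f\) and must not require a binarization (which could raise fan-out), and it must produce a \emph{nondeleting} grammar, because nondeletion is what Lemma~\ref{lem:exposing-contexts} and the silent-compression argument rely on. The cited result supplies this for any finite branching bound.

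A secondary point is the hypothesis \(K\neq\emptyset\) in step (3). It holds because Lemma~\ref{lem:cs-positive} makes \(S_L\) nonempty whenever \(L\neq\emptyset\). This guarantees that the general learner uses the capped construction with cap \(\|K\|_+\ge1\), as Proposition~\ref{prop:completeness} requires.
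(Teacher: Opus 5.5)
Your proposal is correct and follows the paper's proof essentially step for step. The paper also handles the empty target with \(K=\emptyset\) and, otherwise, normalizes to a nondeleting presentation with the same fan-out and branching bound. It then applies the refinement, compression, and characteristic-sample construction, and combines Proposition~\ref{prop:completeness} with Proposition~\ref{prop:learner-soundness}. Your explicit remark that Lemma~\ref{lem:cs-positive} forces \(K\ne\emptyset\) is a detail the paper leaves implicit.
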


\begin{proof}
If \(L=\emptyset\), then \(K=\emptyset\) is the only finite set satisfying
\(K\subseteq L\), and \(\widehat G(\emptyset)\) has empty language by
definition.  Assume \(L\ne\emptyset\).  By Definition~\ref{def:class-cmcf}, choose a finite
standard \(f\)-MCFG \(G\) with \(L(G)=L\).  Since \(G\) is finite it has a
finite branching bound.  By the standard normalization theorem
\cite{Kanazawa2019Ogden}, there is an equivalent MCFG \(G_{\mathrm{nd}}\) with
the same fan-out bound and the same branching bound whose rules are
nondeleting.  Apply the refinement, compression, and characteristic-sample
construction above to \(G_{\mathrm{nd}}\).  Proposition~\ref{prop:completeness}
gives
\[
  L\subseteq L(\widehat G(K)),
\]
while Proposition~\ref{prop:learner-soundness} gives the reverse inclusion.
\end{proof}

\begin{corollary}[Semantic identification in the limit]
\label{cor:semantic-identification}
For every fixed \(f\) and fixed explicit \(h\), the full semantic slice
\(\Cmcf{f}{h}\) is semantically identifiable in the limit from positive data
by the set-driven learner
\[
  K\longmapsto\widehat G(K).
\]
No binary, start-separated, reduced, or working-presentation promise is part of
the target class.
\end{corollary}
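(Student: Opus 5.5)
The corollary follows by combining Theorem~\ref{thm:exact-reconstruction} with the set-driven nature of the learner and the definition of TxtBc identification. Fix \(L\in\Cmcf{f}{h}\) and an arbitrary text \(t\) for \(L\). Let \(K_n\) denote the content of the length-\(n\) prefix of \(t\). Since the learner is set-driven (Definition~\ref{def:learner}), its hypothesis at stage \(n\) is \(\widehat G(K_n)\). It depends only on \(K_n\), so pause symbols and repetitions are irrelevant. Each \(K_n\subseteq L\) because texts list only elements of \(L\cup\{\#\}\), and \(\#\) is excluded from the content. Effectiveness of \(K\mapsto\widehat G(K)\) is Lemma~\ref{lem:learner-effective}, so the map is a legitimate computable learner receiving only \(K,f,h\).

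Next I take the finite characteristic sample \(S_L\subseteq L\) supplied by Theorem~\ref{thm:exact-reconstruction}. Every word of \(S_L\) appears somewhere in \(t\), and \(S_L\) is finite, so there is a stage \(n_0\) by which all of them have appeared. The contents are monotone in \(n\), so \(S_L\subseteq K_n\subseteq L\) holds for every \(n\ge n_0\). The theorem then gives \(L(\widehat G(K_n))=L\) for all \(n\ge n_0\), which is exactly behaviorally correct identification.

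The empty language is handled separately. Here \(S_L=\emptyset\), every \(K_n=\emptyset\), and \(\widehat G(\emptyset)\) has empty language by definition, so one may take \(n_0=0\).

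For the final sentence of the corollary, I will check that no presentation promise is assumed. The target class in Definition~\ref{def:class-cmcf} is purely semantic. The proof of Theorem~\ref{thm:exact-reconstruction} starts from an arbitrary standard \(f\)-MCFG and normalizes only on the proof side, to a nondeleting presentation. Binarity, start separation, reducedness, and the working form are never invoked: the trimming in \(\widetilde G_0\) is done internally, and Proposition~\ref{prop:completeness} uses the uncapped learner via Lemma~\ref{lem:sample-bounded-rank}, so no common rank bound is needed either.

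I expect no real obstacle. The only point needing care is the bookkeeping that turns "every text" into a stage after which the content contains \(S_L\) and stays inside \(L\). This relies on finiteness of \(S_L\) and on the pause-symbol convention for \(\emptyset\).
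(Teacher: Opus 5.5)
Your proposal is correct and follows the paper's own argument: the content of every text eventually contains the finite characteristic sample of Theorem~\ref{thm:exact-reconstruction} while staying inside \(L\), after which every hypothesis has language \(L\), and the empty target is handled by the all-pause observation that the content is always empty. Your additional bookkeeping (monotonicity of contents, computability via Lemma~\ref{lem:learner-effective}, and the check that no presentation promise enters the proof) makes explicit what the paper's two-sentence proof leaves implicit.
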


\begin{proof}
For every nonempty target, the content of every text eventually contains the
finite characteristic sample from Theorem~\ref{thm:exact-reconstruction},
after which all hypothesis languages equal the target.  For the empty target,
every text consists entirely of pauses, so its content is always empty and the
learner always returns the empty language.
\end{proof}

\begin{corollary}[Explanatory identification by a conservative wrapper]
\label{cor:gold-explanatory}
For every fixed \(f\) and fixed explicit \(h\), the full semantic slice
\(\Cmcf{f}{h}\) is identifiable from positive data in the explanatory sense:
there is a computable learner whose hypothesis grammar itself is eventually
constant and generates the target language.
\end{corollary}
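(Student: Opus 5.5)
The plan is a standard conservative (``lazy'') wrapper around the set-driven learner \(K\mapsto\widehat G(K)\) of Corollary~\ref{cor:semantic-identification}. The wrapper keeps its current grammar and changes it only when the current content is no longer generated. Concretely, on a prefix with content \(K_n\), let \(G_{n-1}\) be the previous output, with \(G_{-1}:=\widehat G(\emptyset)\). If \(K_n\subseteq L(G_{n-1})\), output \(G_n:=G_{n-1}\); otherwise output \(G_n:=\widehat G(K_n)\).

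\textbf{Computability.} The test \(K_n\subseteq L(G_{n-1})\) is a finite collection of uniform membership questions for a finite standard MCFG. Membership is decidable for such grammars, for instance by CKY-style dynamic programming over tuple occurrences of the input word. This needs some care because the hypotheses contain unary identity rules and constant rules producing \(\eps\)-components, but a least-fixpoint computation over the finitely many occurrence-indexed items handles both. Hence the wrapper is computable.

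\textbf{Convergence.} Fix \(L\in\Cmcf{f}{h}\) and a text for it. If \(L=\emptyset\), the content stays empty, so the wrapper never changes and outputs \(\widehat G(\emptyset)\), whose language is empty. Otherwise, let \(S_L\) be the characteristic sample from Theorem~\ref{thm:exact-reconstruction}, and take \(n_0\) such that \(S_L\subseteq K_n\) for all \(n\ge n_0\). From \(n_0\) on, two facts hold:
\begin{itemize}
\item every hypothesis \(\widehat G(K_n)\) has language \(L\), by Theorem~\ref{thm:exact-reconstruction};
\item every stored grammar is either such a hypothesis or one retained from earlier.
\end{itemize}
There are two cases. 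If the wrapper changes at some stage \(m\ge n_0\), it adopts \(G_m=\widehat G(K_m)\) with \(L(G_m)=L\). Since every later content satisfies \(K_n\subseteq L\), the test always succeeds afterwards, so the output stays \(G_m\) forever. If the wrapper never changes from \(n_0\) on, then the fixed grammar \(G_{n_0-1}\) satisfies \(K_n\subseteq L(G_{n_0-1})\) for all \(n\). Every element of \(L\) appears in the text, so \(L\subseteq L(G_{n_0-1})\). Equality does not follow automatically, however: the stored grammar could overgeneralize.

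\textbf{Main obstacle: excluding overgeneralization.} Proposition~\ref{prop:learner-soundness} settles this. Every stored grammar other than \(\widehat G(\emptyset)\) is \(\widehat G(K_m)\) for some content \(K_m\subseteq L\), so its language is contained in \(L\). The grammar \(\widehat G(\emptyset)\) has empty language, and it cannot persist forever on a nonempty target, since the first nonempty content fails the test. Therefore \(L(G_{n_0-1})\subseteq L\), which together with \(L\subseteq L(G_{n_0-1})\) gives \(L(G_{n_0-1})=L\). In both cases the output grammar is eventually constant and generates \(L\), which is explanatory identification.
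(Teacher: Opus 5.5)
Your proposal is correct and is essentially the paper's proof: a conservative wrapper that recomputes $\widehat G(K)$ only when the current grammar fails to generate the data. Proposition~\ref{prop:learner-soundness} rules out overgeneralization, and Theorem~\ref{thm:exact-reconstruction} ensures that any recomputation made after the characteristic sample has appeared is correct and final. Testing the whole content rather than only the newest example (as the paper does) is an inessential variation, since each recomputed $\widehat G(K)$ already generates $K$ through its start and constant rules.
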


\begin{proof}
Run a conservative wrapper around the canonical learner.  Initialize the
current hypothesis as \(H:=\widehat G(\emptyset)\).  Pauses are ignored.  When a positive example \(w\)
is received, keep \(H\) unchanged if \(w\in L(H)\); otherwise recompute
\(H:=\widehat G(K)\) from the current finite content \(K\).  Membership in a
fixed finite MCFG is decidable, so this wrapper is computable.

For a target \(L\in\Cmcf{f}{h}\), every recomputed canonical hypothesis is
sound by Proposition~\ref{prop:learner-soundness}, hence its language is a
subset of \(L\).  If the current hypothesis language is a proper subset of
\(L\), some missing target word eventually appears on every text and forces a
recomputation.  Once the finite characteristic sample of
Theorem~\ref{thm:exact-reconstruction} is contained in the observed content,
the next recomputation, if any, produces a grammar whose language is exactly
\(L\).  From then on every positive example is already accepted and the
wrapper never changes that grammar again.  If a correct grammar is reached
earlier, stabilization occurs earlier.  The empty target is immediate.
\end{proof}

\begin{corollary}[Bounded-branching specialization]
\label{cor:bounded-rank-reconstruction}
Let \(b\ge1\).  Suppose a target \(L\in\Cmcf{f}{h}\) has a
\(b\)-branching \(f\)-MCFG presentation.  Then it has a characteristic sample such that every
\(K\) containing that sample is reconstructed already by the capped learner
\(\widehat G_{\le b}(K)\).
\end{corollary}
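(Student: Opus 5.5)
The plan is to rerun the proof of Theorem~\ref{thm:exact-reconstruction}, replacing the general learner by the capped learner \(\widehat G_{\le b}\) and using Proposition~\ref{prop:completeness} in its capped form. The empty target is handled as before: \(K=\emptyset\), and \(\widehat G_{\le b}(\emptyset)\) has no start rules, hence empty language. So assume \(L\ne\emptyset\), and fix a \(b\)-branching \(f\)-MCFG \(G\) with \(L(G)=L\).

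First, I apply the normalization of \cite{Kanazawa2019Ogden} to obtain an equivalent nondeleting \(G_{\mathrm{nd}}\) with the same fan-out bound \(f\) and the same branching bound \(b\). Next I build the output-type refinement \(G_{\mathrm{nd}}^h\), its trimming \(\widetilde G_0\), and the silent-child compression \(\widetilde G_0^+\). The key quantitative observation is that neither step increases rank. Each typed rule of Definition~\ref{def:output-refinement} has the same body length as the rule of \(G_{\mathrm{nd}}\) it types, or has rank one for the fresh start links. Compression (Definition~\ref{def:silent-compression}) only deletes child occurrences. Hence the maximal compressed rank satisfies \(b_G\le\max\{1,b\}=b\).

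Take \(S_L:=\CS(\widetilde G_0)\), which by Lemma~\ref{lem:cs-positive} is a finite subset of \(L\). For any finite \(K\) with \(S_L\subseteq K\subseteq L\), Proposition~\ref{prop:completeness} applies with cap \(b\ge\max\{1,b_G\}\) and gives \(L\subseteq L(\widehat G_{\le b}(K))\). Proposition~\ref{prop:learner-soundness} holds for every capped hypothesis and gives the reverse inclusion. Together these yield \(L(\widehat G_{\le b}(K))=L\).

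The only point needing care is the rank bookkeeping through refinement and compression. In particular, the typed start links \(\widetilde S\to S_{(p)}\) are unary and are untouched by compression, so they sit within the cap because \(b\ge1\). Once that is checked, the argument is just the capped case of the existing completeness proposition.
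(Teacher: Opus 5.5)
Your proposal is correct and matches the paper's proof: you normalize via \cite{Kanazawa2019Ogden} while preserving fan-out and branching, note that output typing and silent-child compression cannot raise rule rank (so \(b\ge\max\{1,b_G\}\)), and then combine the capped case of Proposition~\ref{prop:completeness} with Proposition~\ref{prop:learner-soundness}. Your check on the typed start links is harmless but unnecessary, since the learner simulates them with its start rules \(\widehat S\to[(w)]\) rather than with capped witness rules.
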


\begin{proof}
Normalize the witnessing presentation without increasing either fan-out or
branching rank \cite{Kanazawa2019Ogden}, then apply
Proposition~\ref{prop:completeness} and
Proposition~\ref{prop:learner-soundness}.  Silent-child compression can only
decrease rule rank.
\end{proof}

\section{Data Complexity under a Supplied Observation}
\label{sec:poly-time}

The qualitative learner allows arbitrary finite rule rank and is therefore not
claimed to run in polynomial time on a finite sample.  This section separates
that issue from identifiability.  If the composition rank is capped by a fixed
\(b\), the same witness construction is polynomial; the later single-spine
analysis uses the special case \(b=2\).  Characteristic-data size is a distinct
question and can remain exponential even when hypothesis construction is
polynomial.

\paragraph{Encoding convention and two input regimes.}
We use the grammar-size convention of Definition~\ref{def:poly-time-data} and
canonical finite encodings for all sample-derived objects.  An occurrence is
recorded by its sample-word index, component permutation, and cut positions;
tuple contexts and templates are recorded by arity together with delimited
terminal factors and ordered hole or variable labels.

For an explicitly supplied observation \(h:\Sigma^*\to M\), let
\(\|h\|_{\mathrm{enc}}\) denote the bit length of its multiplication table and
the letter images.  The first theorem below is slicewise in fixed \(f,b,h\).

\begin{theorem}[Slicewise-polynomial construction under a branching cap]
\label{thm:poly}
Fix a fan-out bound \(f\ge1\), a branching cap \(b\ge1\), and an explicit
finite-monoid homomorphism \(h:\Sigma^*\to M\).  From any finite positive
sample \(K\), the capped canonical hypothesis
\[
  \widehat G_{\le b}(K)
\]
can be constructed, including its complete output encoding, in time
\[
  \|K\|_+^{O(fb)}.
\]
Its fan-out is at most \(f\), and every composition-witness rule has rank at
most \(b\) (semantic unary rules and rank-zero constant rules are also
allowed).  The polynomial degree may depend on \(f\) and \(b\).
\end{theorem}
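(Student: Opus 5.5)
The plan is to bound separately the cost of each of the four rule families in Definition~\ref{def:learner}, and then to bound the total output size. Throughout, put \(n:=\|K\|_+\) and assume \(n\ge1\); the empty sample is trivial. Since \(f\), \(b\) and \(h\) are fixed, \(|M|\) and \(\|h\|_{\mathrm{enc}}\) are constants, and each evaluation of \(h\) on a word of length at most \(n\) costs \(O(n)\) monoid multiplications.

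\textbf{Step 1 (occurrences and nonterminals).} As in the proof of Lemma~\ref{lem:general-witness-enumeration}, a word of \(K\) has at most \(f!\,(n+1)^{2f}\) occurrences of arity at most \(f\). Hence there are \(n^{O(f)}\) occurrences in total, each encoded in \(O(f\log n)\) bits. I will list them all, compute each tuple value and its \(h\)-type, and deduplicate the tuple values by sorting their canonical encodings. This yields the nonterminals \([\tuple x]\), the start rules (i) and the constant rules (ii) in time \(n^{O(f)}\).

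\textbf{Step 2 (witness rules).} Apply Lemma~\ref{lem:general-witness-enumeration} with cap \(b\). It enumerates all positive-support witnesses of rank at most \(b\), together with their templates, in time \(n^{O(fb)}\). Each witness emits one rule (iii) of size \(O(fb+n)\). After deduplication there are still at most \(n^{O(fb)}\) rules of this kind.

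\textbf{Step 3 (semantic unary rules).} Group the occurrences by their sentence context \(E\) by sorting the encodings of the pairs \((E,\tuple x)\). Here \(E[\tuple x]\in K\) holds automatically, because every occurrence is taken inside a sample word. For each context class, and for every pair of tuples in that class with equal arity and equal \(h\)-type, emit \([\tuple x]\to[\tuple y]\). The number of pairs is at most the square of the number of occurrences, namely \(n^{O(f)}\), and each pair costs \(O(f)\) type comparisons. After deduplication this gives rules (iv) in time \(n^{O(f)}\).

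\textbf{Step 4 (bookkeeping).} Summing the three bounds and adding the serialization of all rules gives total time and output size \(n^{O(fb)}\), since \(fb\ge f\). The fan-out bound and the rank bound are immediate from the construction.

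The main obstacle is checking Step 2 carefully. For a fixed choice of parent occurrence and child occurrences, the number of ways to segment the parent must be polynomial and not exponential. I would argue this as follows. The segmentation is fixed by assigning each of the at most \(fb\) variables a component index and two endpoints, and by fixing the order of empty intervals that sit at the same cut. The tie orders are permutations of at most \(fb\) labels, so there are at most \((fb)!\) of them, which is a constant. Checking one candidate (no overlaps, and each interval spelling the right child component) takes time polynomial in \(n\). So this step also stays within \(n^{O(fb)}\).
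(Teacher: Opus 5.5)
Your proposal is correct and follows the paper's proof: you enumerate the $n^{O(f)}$ occurrences to obtain the nonterminals, start rules and constant rules, get the semantic unary rules by grouping occurrences by context encoding and comparing $h$-types, invoke Lemma~\ref{lem:general-witness-enumeration} for the $n^{O(fb)}$ witness rules, and then deduplicate by canonical sorting. Your explicit count of segmentations (a component index and two endpoints per variable, plus at most $(fb)!$ tie orders) is exactly the argument inside the proof of that lemma, so nothing is missing.
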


\begin{proof}
Put \(n:=\max(1,\|K\|_+)\).  Tuple occurrences of arity at most \(f\) are
enumerable in \(n^{O(f)}\) time.%
\footnote{An arity-\(d\) occurrence in a word of length \(m\) is specified by
two cut positions for each component together with the order of the named
components, so the number of candidates is at most
\((m+1)^{2d}d!\).  Moreover \(|K|\le\|K\|_+\le n\) and
\(|w|\le n\) for every \(w\in K\).  Hence all occurrences with
\(d\le f\) can be enumerated in \(n^{O(f)}\) time for fixed \(f\).}  All observed constant rules are read from
these tuples.  Semantic unary rules are obtained by grouping occurrences by
arity and concrete-context encoding and comparing componentwise \(h\)-types;
even exhaustive pairwise comparison remains within \(n^{O(f)}\).

By Lemma~\ref{lem:general-witness-enumeration}, all positive-support composition
witnesses of rank at most \(b\), together with their canonical template
encodings, are enumerable in \(n^{O(fb)}\) time and space.  Canonical sorting
removes duplicate nonterminals and rules within the same bound.  Thus the
number of output symbols, rules, and their total encoding length are all
\(n^{O(fb)}\).
\end{proof}

\begin{corollary}[Polynomial construction with an explicitly supplied observation]
\label{cor:poly-supplied-observation}
Fix \(f\) and \(b\).  If the finite observation
\(h:\Sigma^*\to M\) is supplied as part of the input by its multiplication
table and the images of the alphabet symbols, then
\(\widehat G_{\le b}(K)\) is constructible in time polynomial in
\[
  \|K\|_+
  \quad\text{and}\quad
  \|h\|_{\mathrm{enc}},
\]
including output size.
\end{corollary}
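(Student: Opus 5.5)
The plan is to rerun the cost accounting of Theorem~\ref{thm:poly} with \(h\) no longer fixed, tracking the cost of every step as a function of both \(n:=\max(1,\|K\|_+)\) and \(\|h\|_{\mathrm{enc}}\). The construction of \(\widehat G_{\le b}(K)\) in Definition~\ref{def:learner} touches \(h\) in exactly one place: the guard \(h^{(d)}(\tuple x)=h^{(d)}(\tuple y)\) of the semantic unary rules~(iv). Occurrence enumeration, constant rules, start rules, witness enumeration (Lemma~\ref{lem:general-witness-enumeration}) and the detection of shared concrete contexts in \(K\) are all independent of \(h\). For fixed \(f,b\) they cost \(n^{O(fb)}\), and they produce at most \(n^{O(f)}\) observed tuples and \(n^{O(fb)}\) rules, so the output size is also \(n^{O(fb)}\) and does not depend on \(h\).

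The one new ingredient is computing the \(h\)-type of every observed tuple. For each component \(x_i\), a substring of a sample word with \(|x_i|\le n\), we evaluate \(h(x_i)\) by a left-to-right fold: start from \(1_M\) and multiply by \(h(a)\) for each letter, using one table lookup per step. Before doing this we read off \(M\), its identity, and the letter images from the encoding; this takes time polynomial in \(\|h\|_{\mathrm{enc}}\). Each lookup costs polynomial time in \(\|h\|_{\mathrm{enc}}\), because an element index has \(O(\log|M|)\) bits and \(|M|^2\le\|h\|_{\mathrm{enc}}\). The types of all \(n^{O(f)}\) observed tuples can therefore be computed in time \(n^{O(f)}\cdot\mathrm{poly}(\|h\|_{\mathrm{enc}})\). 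We then compare types for each candidate pair of tuples sharing a context, which costs \(n^{O(f)}\) comparisons of \(O(f\log|M|)\)-bit vectors.

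Adding the \(h\)-independent part and the typing part gives a total of \(n^{O(fb)}+n^{O(f)}\mathrm{poly}(\|h\|_{\mathrm{enc}})\), which is polynomial in \(\|K\|_+\) and \(\|h\|_{\mathrm{enc}}\) for fixed \(f,b\). The output contains no monoid data, so including output size changes nothing.

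The only subtle point is to confirm that \(h\) enters the construction nowhere else. In particular, the witness rules~(iii) carry no type guard and never consult \(M\). We must also make sure that the alphabet, which was previously treated as a fixed parameter, is either fixed or, if it varies, has size bounded by \(\|h\|_{\mathrm{enc}}\) through the letter-image table. Once these checks are done, the bound is a routine sum.
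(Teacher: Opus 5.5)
Your proposal is correct and takes the same approach as the paper. In both, $h$ is used only to evaluate observed substrings by table lookups and to compare componentwise types in the guard of the semantic unary rules, at a cost polynomial in $\|h\|_{\mathrm{enc}}$, while the $h$-independent combinatorial enumeration of Theorem~\ref{thm:poly} remains $\|K\|_+^{O(fb)}$ for fixed $f,b$. Your explicit accounting (the left-to-right fold, the $n^{O(f)}$ typed tuples, and the absence of monoid data in the output) fills in details that the paper's short proof leaves implicit.
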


\begin{proof}
The preceding construction uses the observation only to evaluate finitely many
substrings and compare their monoid values.  With the multiplication table and
letter images explicitly encoded, these operations are polynomial in
\(\|h\|_{\mathrm{enc}}\); the combinatorial enumeration remains polynomial for
fixed \(f,b\).
\end{proof}

\begin{remark}[Qualitative versus polynomial construction]
The general learner \(\widehat G(K)\) uses the sample-dependent cap
\(\|K\|_+\), which is sufficient for exact reconstruction by
Lemma~\ref{lem:sample-bounded-rank} but is not covered by the preceding
polynomial-time statement.  If a target has a \(b\)-branching presentation,
Corollary~\ref{cor:bounded-rank-reconstruction} shows that the fixed capped
learner \(\widehat G_{\le b}\) already suffices.  In particular, all working
binary targets considered below use \(b=2\).
\end{remark}

\subsection{Exposure size and the polynomial-data boundary}
\label{subsec:exposure-boundary}

\begin{definition}[Exposure size]
\label{def:exposure-size}
Let \(G\) be a reduced working-form presentation witnessing the target
language, and let \(\widetilde{G}_0\) be its trimmed output-type refinement.
Let \(\CS(\widetilde{G}_0)\) be the characteristic sample of
Definition~\ref{def:cs}, based on the fixed anchor tuples \(\omega(X)\) and
exposing contexts \(\chi(X)\).

The \emph{exposure size} of \(G\) relative to \(h\) is
\[
  B_{\mathrm{exp}}(G,h)
  :=
  \max_{w\in\CS(\widetilde{G}_0)}\max\{1,\lvert w\rvert\}.
\]
Thus \(B_{\mathrm{exp}}(G,h)\) is the maximum positive-example size of a word
in the selected characteristic sample.  Since \(G\) is reduced and an anchor
is selected for every surviving typed nonterminal,
\(\CS(\widetilde{G}_0)\neq\emptyset\), so the maximum is defined.

The notation suppresses the dependence on the fixed choices of \(\omega(X)\)
and \(\chi(X)\).
\end{definition}

\begin{definition}[Polynomially exposable presentation class]
\label{def:polynomially-exposable}
Fix \(f\) and \(h\).  A class \(\mathcal G\) of reduced working binary
presentations is \emph{polynomially exposable} if there is a polynomial \(p\)
such that every \(G\in\mathcal G\) admits anchor and exposing-context choices
with
\[
  B_{\mathrm{exp}}(G,h)\le p(|G|).
\]
\end{definition}

\begin{proposition}[Characteristic-sample size under an exposure bound]
\label{prop:cs-exposure-bound}
Let \(G\) be a reduced working binary linear nondeleting MCFG presentation of
fan-out at most \(f\).  Let \(N_{\mathrm{NT}}\) be the number of surviving typed nonterminals and let
\(N_{\mathrm{rule}}\) be the number of surviving compressed typed rules other
than the fresh rules out of \(\widetilde S\).  Then
\[
  |\CS(\widetilde G_0)|\le N_{\mathrm{NT}}+N_{\mathrm{rule}},
  \qquad
  \|\CS(\widetilde G_0)\|_+
  \le (N_{\mathrm{NT}}+N_{\mathrm{rule}})B_{\mathrm{exp}}(G,h).
\]
For fixed \(f,h\), output typing creates only constantly many copies per
original symbol and rule, so
\(N_{\mathrm{NT}}+N_{\mathrm{rule}}=O(|G|)\).  Hence any presentation family
with \(B_{\mathrm{exp}}(G,h)\) polynomial in \(|G|\) has
presentation-relative polynomial characteristic samples.
\end{proposition}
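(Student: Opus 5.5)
The plan is to count the words selected in Definition~\ref{def:cs} and then bound each word by the exposure size.

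\emph{Cardinality.} By Definition~\ref{def:cs}, every word of \(\CS(\widetilde G_0)\) is chosen for one of two reasons. Either it is the anchor exposure \(\chi(X)[\omega(X)]\) of a surviving typed nonterminal \(X\), or it is the rule exposure of a compressed typed rule. A compressed rule here is one of positive rank or rank zero, and the fresh start links \(\widetilde S\to S_{(p)}\) are excluded because they are never exposed. So the map ``word \(\mapsto\) the object that selected it'' sends \(\CS(\widetilde G_0)\) into a set of size at most \(N_{\mathrm{NT}}+N_{\mathrm{rule}}\). Several objects may select the same word, but that only makes the set smaller. This gives \(|\CS(\widetilde G_0)|\le N_{\mathrm{NT}}+N_{\mathrm{rule}}\).

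\emph{Weight.} The weight \(\|\cdot\|_+\) sums \(\max(1,|w|)\) over the words, and each summand is at most \(B_{\mathrm{exp}}(G,h)\) by Definition~\ref{def:exposure-size}. Multiplying by the cardinality bound gives the second inequality.

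\emph{Counting typed objects.} For fixed \(f\) and \(h\), each \(A\in V\) has at most \(|M|^{\mu(A)}\le|M|^f\) typed copies \(A_{\mathbf p}\), and \(|M|^f\) is a constant. For rules, I use the working form of Definition~\ref{def:working-mcfg}, which has three kinds of rule.
\begin{itemize}
\item A start rule \(S\to A\) with \(\mu(A)=1\) produces at most \(|M|\) typed rules.
\item A terminal rule \(A\to(a)\) produces exactly one typed rule.
\item A binary rule \(\rho\) produces one typed rule per pair of child types \((\mathbf q_1,\mathbf q_2)\), so at most \(|M|^{2f}\) typed rules.
\end{itemize}
The one subtlety is that Definition~\ref{def:output-refinement} turns a start rule \(S\to A\) into the typed unary rules \(S_{(p)}\to A_{(p)}\), while the fresh link \(\widetilde S\to S_{(p)}\) is a separate rule. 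I will note that these typed start rules are counted in \(N_{\mathrm{rule}}\) and are covered by the \(|M|\) bound above.

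Trimming cannot increase any of these counts. Silent-child compression maps rules to rules and then removes duplicates, so it cannot increase the count either. Hence \(N_{\mathrm{NT}}+N_{\mathrm{rule}}\le |M|^{2f}(|V|+|P|)\). Every nonterminal and every rule contributes at least a constant to the encoding in Definition~\ref{def:poly-time-data}, so \(|V|+|P|=O(|G|)\), and therefore \(N_{\mathrm{NT}}+N_{\mathrm{rule}}=O(|G|)\).

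\emph{Conclusion.} If \(B_{\mathrm{exp}}(G,h)\le p(|G|)\), the weight bound gives \(\|\CS(\widetilde G_0)\|_+=O(|G|\,p(|G|))\), which is polynomial in \(|G|\). By Theorem~\ref{thm:exact-reconstruction}, \(\CS(\widetilde G_0)\) is a characteristic sample for the canonical learner. So we obtain presentation-relative polynomial characteristic samples.

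The main obstacle is the bookkeeping. It is small: check that the rules excluded from \(N_{\mathrm{rule}}\) (the fresh links out of \(\widetilde S\)) contribute no words, and that dropping duplicates in compression does not break the injective-counting argument. Both follow because each word is charged to at least one counted object.
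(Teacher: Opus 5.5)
Your proposal is correct and follows the paper's own argument: charge each selected word to a surviving typed nonterminal or compressed typed rule (the fresh links out of \(\widetilde S\) select nothing), bound each summand by \(B_{\mathrm{exp}}(G,h)\), and multiply; your explicit bound \(|M|^{2f}(|V|+|P|)\) spells out the constant-factor expansion that the paper only asserts in the statement. The one point left implicit is that the closing ``characteristic'' claim needs \(L(G)\in\Cmcf{f}{h}\); since a working presentation is already nondeleting, Propositions~\ref{prop:completeness} and~\ref{prop:learner-soundness} then apply to \(G\) directly.
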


\begin{proof}
Definition~\ref{def:cs} selects at most one word per surviving typed
nonterminal and per surviving typed rule.  Every selected word has size at
most \(B_{\mathrm{exp}}(G,h)\); multiplying the two bounds proves the claim.
\end{proof}

\begin{theorem}[Polynomial exposure implies polynomial data]
\label{thm:polynomial-exposure-implies-data}
For fixed \(f,h\), every polynomially exposable class of reduced working binary
presentations has presentation-relative polynomial characteristic data for the
canonical learner.
\end{theorem}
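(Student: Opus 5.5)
The plan is to assemble the pieces already proved, instantiating Definition~\ref{def:poly-time-data} with \(\mathcal G\) the given polynomially exposable class, the canonical learner taken as the capped learner \(K\mapsto\widehat G_{\le 2}(K)\), and \(f,h\) fixed external parameters. Two things must be checked: a characteristic sample of size polynomial in \(|G|\), and polynomial-time construction of the hypothesis on every finite sample.

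\emph{Data bound.} Fix \(G\in\mathcal G\) and put \(L=L(G)\).
\begin{itemize}[leftmargin=*]
\item \(G\) is reduced, so \(L\neq\emptyset\) (Remark~\ref{rem:working-mcfg-empty}).
\item A working binary presentation is already nondeleting, so no normalization step is needed.
\item Since \(L\in\Cmcf{f}{h}\), it is \((f,h)\)-tuple-substitutable. Proposition~\ref{prop:completeness} and Proposition~\ref{prop:learner-soundness} therefore show that \(S_G:=\CS(\widetilde G_0)\) is a characteristic sample.
\item This holds for \(\widehat G_{\le b}\) with any \(b\ge\max\{1,b_G\}\). Here \(b_G\le 2\) because silent-child compression never increases rank, so \(\widehat G_{\le 2}\) suffices, as in Corollary~\ref{cor:bounded-rank-reconstruction}. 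It also holds for the uncapped \(\widehat G\).
\end{itemize}
Proposition~\ref{prop:cs-exposure-bound} then gives
\[
\|S_G\|_+\le (N_{\mathrm{NT}}+N_{\mathrm{rule}})\,B_{\mathrm{exp}}(G,h)\le O(|G|)\cdot p(|G|),
\]
where \(p\) comes from Definition~\ref{def:polynomially-exposable}, using the anchor and context choices it provides. This bound is polynomial in \(|G|\).

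\emph{Time bound.} Theorem~\ref{thm:poly} with \(b=2\) gives construction time \(\|K\|_+^{O(f)}\), including output, which is polynomial for fixed \(f,h\).

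The main point needing care is the claim \(N_{\mathrm{NT}}+N_{\mathrm{rule}}=O(|G|)\) for fixed \(f,h\).
\begin{itemize}[leftmargin=*]
\item Each nonterminal yields at most \(|M|^{f}\) typed copies.
\item Each binary rule yields at most \(|M|^{2f}\) typed versions, since the parent type is determined by the child types through \(\out_\rho\).
\item Start and terminal rules yield \(O(1)\) typed versions each.
\item Compression does not increase the number of rules.
\end{itemize}
All of these are constants once \(f\) and \(h\) are fixed, and \(|V|+|P|\le|G|\) under the encoding of Definition~\ref{def:poly-time-data}.

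A second point is that the anchor and context choices used in \(\CS\) must be exactly those witnessing polynomial exposure. The completeness proof works for any fixed choice of \(\chi\) and \(\omega\) satisfying Lemma~\ref{lem:exposing-contexts} and Definition~\ref{def:silent-typed}, so this is harmless. If polynomial exposure allows non-minimal anchors, completeness still goes through, because it only uses \(\omega(X)\in L_X\), with positive length when \(X\) is nonsilent.
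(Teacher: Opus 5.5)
Your proposal is correct and follows the paper's route: the paper's proof combines Definition~\ref{def:polynomially-exposable} with Proposition~\ref{prop:cs-exposure-bound}, together with the observation that for fixed \(f,h\) output typing is only a constant-factor blow-up of \(|G|\). Your extra checks are sound elaborations of steps the paper leaves implicit: that \(\CS(\widetilde G_0)\) is characteristic via Propositions~\ref{prop:completeness} and~\ref{prop:learner-soundness}, that the exposure-witnessing anchor and context choices are admissible in the completeness argument, and the construction-time bound from Theorem~\ref{thm:poly}.
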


\begin{proof}
Combine Definition~\ref{def:polynomially-exposable} with
Proposition~\ref{prop:cs-exposure-bound}.  For fixed \(f,h\), the number of
surviving typed symbols and rules is only a constant-factor expansion of the
working binary presentation size.
\end{proof}

\begin{remark}[Relation to polynomial thickness]
Coste and Nicolas use polynomial thickness to isolate short positive witnesses
in a CFG reduction framework \cite{CosteNicolas2019}.  The present exposure
parameter serves a related purpose but is tailored to typed MCFG
reconstruction: it simultaneously controls anchor exposures for typed
nonterminals and concrete exposures of composition rules.
\end{remark}

The remaining issue is therefore the length of the selected anchor and
exposure words.  It is exponential in general, but polynomial for the
single-spine family below.

\subsection{An exponential obstruction for general binary presentations}
\label{subsec:exponential-exposure}

Polynomial hypothesis construction from a given sample does not imply
polynomial positive data.  The underlying long-singleton phenomenon is
classical in substitutability-based CFG learning: Clark and Eyraud already
pointed out that a compact grammar for the singleton \(\{a^{2^n}\}\) can
force an exponentially long characteristic string, and Coste and Nicolas
revisit essentially this family when motivating polynomial-time-and-thick-data
criteria \cite{ClarkEyraud2007,CosteNicolas2019}.  The formulation below is
used for a different purpose: it lifts the same compact-singleton mechanism to
our working MCFG/fixed-observation setting and states the obstruction
uniformly for arbitrary set-driven learners, rather than only for the
canonical learner.  Thus some structural or quantitative restriction
controlling positive examples is unavoidable for presentation-relative
polynomial data on the full binary family.  
\begin{lemma}[For singleton targets, the positive example itself is characteristic]
\label{lem:singleton-characteristic-samples}
Let \((w_n)_{n\ge0}\) be pairwise distinct nonempty words and put
\(L_n:=\{w_n\}\).  Let \(\mathcal A\) be a set-driven learner.

Each \(L_n\) has only two subsets, \(\emptyset\) and \(L_n\).  Among them,
the empty set can be a characteristic sample for \(L_n\) with respect to
\(\mathcal A\) for at most one index \(n\).

Furthermore, if \(\mathcal A\) identifies every \(L_n\) in the limit from
positive data, then \(L_n\) itself is a characteristic sample for every
\(n\).  Consequently, with at most one exception, the unique characteristic
sample for \(L_n\) is \(L_n=\{w_n\}\).
\end{lemma}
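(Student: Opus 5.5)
The argument works directly from the definition of a characteristic sample for a set-driven learner, which is the only tool needed. There are three claims, and I will handle them in order.

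\emph{Subsets.} The first claim is immediate: since \(L_n=\{w_n\}\) is a singleton, its only subsets are \(\emptyset\) and \(L_n\).

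\emph{The empty set is characteristic for at most one index.} If \(\emptyset\) is a characteristic sample for \(L_n\), then taking \(K=\emptyset\), which satisfies \(\emptyset\subseteq K\subseteq L_n\), gives \(L(\mathcal A(\emptyset))=L_n\). Suppose this held for two indices \(n\ne m\). Then \(\{w_n\}=L(\mathcal A(\emptyset))=\{w_m\}\), so \(w_n=w_m\), contradicting pairwise distinctness. The point is that the hypothesis on the empty set is a single fixed grammar, so its language can equal at most one \(L_n\).

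\emph{Identification forces \(L_n\) to be characteristic.} Assume \(\mathcal A\) identifies every \(L_n\). The only finite \(K\) with \(L_n\subseteq K\subseteq L_n\) is \(K=L_n\) itself, so it suffices to show \(L(\mathcal A(\{w_n\}))=L_n\). I will use the text \(w_n,w_n,w_n,\ldots\), which is a valid text for \(L_n\). Every prefix of it has content \(\{w_n\}\). Because \(\mathcal A\) is set-driven, the hypothesis at every stage is \(\mathcal A(\{w_n\})\). Identification, in the behaviourally correct sense, says that from some stage on every hypothesis has language \(L_n\); hence \(L(\mathcal A(\{w_n\}))=L_n\).

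\emph{Consequence.} By the previous step, \(L_n\) is characteristic for every \(n\). By the second step, \(\emptyset\) is characteristic for at most one \(n\). So for all but at most one index, the only characteristic sample is \(\{w_n\}\).

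The only point needing care is to invoke set-drivenness explicitly: it is what makes the hypothesis at every stage of the constant text equal to \(\mathcal A(\{w_n\})\), which is the step where convergence is turned into correctness on the finite set \(\{w_n\}\).
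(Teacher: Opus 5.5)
Your proposal is correct and follows the paper's proof step for step. The first part uses that $\mathcal A(\emptyset)$ is a single fixed hypothesis; the second uses the constant text $w_n,w_n,\ldots$ together with set-drivenness to force $L(\mathcal A(\{w_n\}))=L_n$, after which the only $K$ with $L_n\subseteq K\subseteq L_n$ is $L_n$ itself.
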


\begin{proof}
Since \(L_n=\{w_n\}\) is a singleton, its only subsets are \(\emptyset\)
and \(L_n\).  Hence these are the only possible characteristic samples.

Suppose that \(\emptyset\) is a characteristic sample for \(L_n\).  By the
definition of characteristic sample, every finite \(K\) satisfying
\(\emptyset\subseteq K\subseteq L_n\) must satisfy
\(L(\mathcal A(K))=L_n\).  Taking \(K=\emptyset\) gives
\(L(\mathcal A(\emptyset))=L_n\).

However, \(\mathcal A(\emptyset)\) is one fixed hypothesis independent of the
target index \(n\).  Its language therefore cannot be equal simultaneously
to two distinct singleton languages \(L_n\) and \(L_m\).  Thus the empty set
is characteristic for at most one \(L_n\).

Now suppose that \(\mathcal A\) identifies \(L_n\) in the limit from positive
data.  Consider the text
\(w_n,w_n,w_n,\ldots\).  After the first occurrence of \(w_n\), the set of
observed positive examples is permanently \(\{w_n\}=L_n\).  Since
\(\mathcal A\) is set-driven, every subsequent hypothesis is
\(\mathcal A(L_n)\).  Identification on this text therefore forces
\(L(\mathcal A(L_n))=L_n\).

If we take \(S=L_n\), the only finite \(K\) satisfying
\(S\subseteq K\subseteq L_n\) is \(K=L_n\).  The preceding equality is
therefore exactly the characteristic-sample condition for \(S=L_n\).
Thus \(L_n\) itself is characteristic.  Combining this with the first part
shows that, except for at most one empty-sample exception, the unique
characteristic sample for \(L_n\) is \(L_n=\{w_n\}\).
\end{proof}

\begin{proposition}[Exponential exposure and learner-uniform data obstruction]
\label{prop:general-exponential-exposure}
Fix any explicit finite monoid homomorphism
\(h\colon\{a\}^*\to M\) and any fan-out bound \(f\ge1\).
Then there exists a family \((G_n)_{n\ge0}\) of reduced working binary linear
nondeleting MCFG presentations of fan-out one such that, writing
\(L_n:=L(G_n)\), we have \(|G_n|=\Theta(n\log n)\),
\(L_n=\{a^{2^n}\}\), and \(L_n\in\Cmcf{f}{h}\) for every \(n\ge0\).

Every nonempty positive sample for \(L_n\) is \(L_n\) itself and therefore
has positive size \(2^n\).  In particular, the presentation-relative exposure
of Definition~\ref{def:exposure-size} satisfies
\(B_{\mathrm{exp}}(G_n,h)=2^n\).

Moreover, for every set-driven learner \(\mathcal A\), there is at most one
index \(n\) for which \(\emptyset\) is a characteristic sample for \(L_n\).
For every other \(n\), either \(L_n\) has no characteristic sample for
\(\mathcal A\), or its unique characteristic sample is
\(L_n=\{a^{2^n}\}\), of positive size \(2^n\).  Consequently, if
\(\mathcal A\) identifies all \(L_n\) in the limit from positive data, then,
with at most one exception, the unique characteristic sample for \(L_n\) has
positive size \(2^n\).

Therefore no set-driven learner identifying \(\Cmcf{f}{h}\) can have
characteristic samples uniformly bounded by a polynomial in presentation size
on the full class of reduced working binary presentations.  This obstruction
already occurs at fan-out one.
\end{proposition}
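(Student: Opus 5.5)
The plan has four parts: build a compact repeated-squaring presentation, check the encoding size, verify membership in \(\Cmcf{f}{h}\), and then invoke Lemma~\ref{lem:singleton-characteristic-samples} for the data obstruction.

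\textbf{Construction.} Take nonterminals \(S,A_0,A_1,\ldots,A_n\), all of fan-out one. The rules are
\[
  A_0\to(a),\qquad A_{i}\to(x^1y^1)(A_{i-1},A_{i-1})\ (1\le i\le n),\qquad S\to A_n .
\]
These are working binary linear nondeleting rules and the grammar is start-separated. A direct induction gives \(L_{A_i}=\{a^{2^i}\}\), so \(L_n=\{a^{2^n}\}\). Every \(A_i\) is reachable from \(S\) along the chain \(S,A_n,\ldots,A_0\), and every \(A_i\) is productive, so the presentation is reduced.

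\textbf{Size.} Under the encoding of Definition~\ref{def:poly-time-data}, \(|V|=n+2\) and \(|P|=n+2\). Each binary template has constant length, so \(|G_n|=\Theta(n+n\log n)=\Theta(n\log n)\). For \(n=0\) the constants are absorbed.

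\textbf{Class membership.} This step is the only semantic point. For any \(1\le d\le f\), suppose two tuples share an accepting context \(E\), so that \(E[\tuple x]=E[\tuple y]=a^{2^n}\). I would argue via Proposition~\ref{prop:regular-abelian-balance} rather than directly.

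Take \(R=\Sigma^*\) and let \(\varphi(w)=|w|-2^n\in\mathbb Z\). This \(\varphi\) is not a homomorphism, so the proposition does not apply as stated.

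The direct argument is easier and is the one I would use. Since \(\Sigma=\{a\}\), the word \(F[\tuple x]\) is determined by its length \(|F|+\|\tuple x\|_1\), where \(|F|\) counts the terminal letters of \(F\). A shared accepting context forces \(\|\tuple x\|_1=\|\tuple y\|_1\). Hence \(F[\tuple x]=F[\tuple y]\) for every \(F\), and the two distributions coincide; the \(h\)-type hypothesis is not even needed. The language \(L_n\) is a \(1\)-MCFL, hence an \(f\)-MCFL, so \(L_n\in\Cmcf{f}{h}\).

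\textbf{Data obstruction.} The only nonempty subset of \(L_n\) is \(L_n\) itself, of positive size \(2^n\). The characteristic sample \(\CS(\widetilde G_0)\) is nonempty by Lemma~\ref{lem:cs-positive} and is contained in \(L_n\), so it equals \(\{a^{2^n}\}\). Therefore \(B_{\mathrm{exp}}(G_n,h)=2^n\).

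The learner-uniform statements follow from Lemma~\ref{lem:singleton-characteristic-samples}, since the words \(a^{2^n}\) are pairwise distinct and nonempty. Apart from at most one index, the empty set is not characteristic. Any characteristic sample is then a subset of \(L_n\) other than \(\emptyset\), so it is \(L_n\) if one exists. If \(\mathcal A\) identifies every \(L_n\), existence is guaranteed by the same lemma.

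Since \(2^n\) is not bounded by any polynomial in \(\Theta(n\log n)\), no uniform polynomial bound on characteristic-sample size is possible. The construction has fan-out one, which gives the final sentence of the statement.

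I expect no step to be a real obstacle. The one point needing care is the membership argument: it must not rely on \(h\), because \(h\) is arbitrary.
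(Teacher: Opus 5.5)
Your proposal is correct and matches the paper's proof step for step: the same repeated-squaring grammar, the same $\Theta(n\log n)$ encoding count, the same unary-length argument for $(f,h)$-tuple substitutability that never uses $h$, and the same appeals to Lemma~\ref{lem:cs-positive} and Lemma~\ref{lem:singleton-characteristic-samples}. You can delete the detour through Proposition~\ref{prop:regular-abelian-balance}, since you discard it anyway.
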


\begin{proof}
For each \(n\ge0\), define \(G_n\) as follows.  Let
\(A_0,\ldots,A_n\) all have fan-out one, and let \(S\) be the start symbol.
Use the terminal rule \(A_0\to(a)\).  For every \(1\le i\le n\), use the
binary rule
\(A_i\to(x^1y^1)(A_{i-1},A_{i-1})\), and finally add the start rule
\(S\to A_n\).

An induction on \(i\) gives
\(L_{A_i}(G_n)=\{(a^{2^i})\}\).  The base case is immediate from
\(A_0\to(a)\).  If
\(L_{A_{i-1}}(G_n)=\{(a^{2^{i-1}})\}\), then the unique rule for \(A_i\)
concatenates the two child components and therefore has the unique output
\(a^{2^{i-1}}a^{2^{i-1}}=a^{2^i}\).  Hence
\(L(G_n)=\{a^{2^n}\}=L_n\).

Every nonterminal is reachable from the start symbol and productive, so
\(G_n\) is reduced.  All nonterminals have fan-out one, and every binary rule
is linear and nondeleting; hence \(G_n\) is a fan-out-one working binary
linear nondeleting MCFG.

There are \(\Theta(n)\) nonterminals and \(\Theta(n)\) rules.  Under the
encoding convention of Definition~\ref{def:poly-time-data}, a nonterminal
reference requires \(\Theta(\log n)\) bits and every rule template has
constant length.  Thus \(|G_n|=\Theta(n\log n)\).%
\footnote{More explicitly, \(|V|=\Theta(n)\), \(|P|=\Theta(n)\), and every
binary template has length \(O(1)\).  Hence the dominant term in the encoding
formula of Definition~\ref{def:poly-time-data} is
\(|P|\log|V|=\Theta(n\log n)\).}

It remains to show \(L_n\in\Cmcf{f}{h}\).  Put \(N:=2^n\).  The grammar
\(G_n\) has fan-out one, so \(L_n\) is a \(1\)-MCFL and hence an
\(f\)-MCFL for every \(f\ge1\).

We prove \((f,h)\)-tuple substitutability.  Let \(1\le d\le f\), and suppose
\(\tuple x,\tuple y\in(\{a\}^*)^d\) have the same \(h\)-type and share an
accepting named arity-\(d\) context \(E\), so that
\(E[\tuple x],E[\tuple y]\in L_n\).  Let \(c_E\) be the total length of the
fixed terminal material in \(E\).  Over the unary alphabet,
\(|E[\tuple z]|=c_E+\|\tuple z\|_1\) for every tuple \(\tuple z\).
Since both fillings lie in \(L_n=\{a^N\}\),
\(c_E+\|\tuple x\|_1=N\) and
\(c_E+\|\tuple y\|_1=N\).  Hence
\(\|\tuple x\|_1=\|\tuple y\|_1\).

Now let \(F\) be any named arity-\(d\) context and let \(c_F\) be the total
length of its fixed terminal material.  Since the alphabet is unary,
\(F[\tuple z]=a^{c_F+\|\tuple z\|_1}\) for every \(\tuple z\).  Therefore
\(F[\tuple x]\in L_n\) if and only if
\(c_F+\|\tuple x\|_1=N\), which, by equality of total tuple lengths, is
equivalent to \(c_F+\|\tuple y\|_1=N\), and hence to
\(F[\tuple y]\in L_n\).  Since \(F\) was arbitrary,
\(\D_{L_n}^{(d)}(\tuple x)=\D_{L_n}^{(d)}(\tuple y)\).
Thus \(L_n\) is \((f,h)\)-tuple-substitutable.  In fact, this argument does
not use the assumption
\(h^{(d)}(\tuple x)=h^{(d)}(\tuple y)\).  Consequently
\(L_n\in\Cmcf{f}{h}\).

Next consider presentation-relative exposure.  Since \(L_n\) is a singleton,
its only nonempty subset is \(L_n\) itself.  Hence every nonempty positive
sample \(K\subseteq L_n\) satisfies \(K=L_n\) and
\(\|K\|_+=|a^{2^n}|=2^n\).

The sample \(\CS(\widetilde G_0)\) of Definition~\ref{def:cs} is a finite
nonempty subset of \(L_n\) by Lemma~\ref{lem:cs-positive}.  It must therefore
be exactly \(L_n=\{a^{2^n}\}\).  Definition~\ref{def:exposure-size} now
gives \(B_{\mathrm{exp}}(G_n,h)=2^n\).

Finally apply Lemma~\ref{lem:singleton-characteristic-samples} with
\(w_n:=a^{2^n}\).  The words \(w_n\) are pairwise distinct.  Hence, for every
set-driven learner \(\mathcal A\), the empty set is characteristic for at
most one \(L_n\); for every other \(n\), if a characteristic sample exists,
it must be \(L_n=\{a^{2^n}\}\) and has positive size \(2^n\).

In particular, if \(\mathcal A\) identifies \(\Cmcf{f}{h}\) from positive
data, then it identifies every \(L_n\in\Cmcf{f}{h}\).  Thus, with at most one
exception, the unique characteristic sample for \(L_n\) has positive size
\(2^n\).

On the other hand, \(|G_n|=\Theta(n\log n)\).  Hence for every polynomial
\(p\), \(2^n>p(|G_n|)\) for all sufficiently large \(n\).  Therefore the
characteristic-sample sizes on this presentation family cannot be uniformly
bounded by any polynomial in presentation size.  It follows that no
set-driven learner identifying \(\Cmcf{f}{h}\) has presentation-relative
polynomial characteristic samples on the full class of reduced working binary
presentations.
\end{proof}

\subsection{Single-spine presentations}
\label{subsec:single-spine}

The preceding obstruction uses two nonlexical children to unfold a short
grammar into exponentially long witnesses.  Single-spine removes this branching
mechanism and reduces the witness-growth analysis to a path on which repeated
typed states can be removed by subtree splicing.

For this quantitative subsection only, we use the \emph{start-collapsed
working refinement}.  If \(G\) is a start-separated working grammar, begin
with the general output-type refinement of Definition~\ref{def:output-refinement}.
For every working start rule \(S\to A\), replace each successful two-step
typed start fragment
\[
  \widetilde S\to S_{(p)}\to A_{(p)}
\]
by the direct typed start rule
\[
  \widetilde S\to A_{(p)},
\]
and delete the typed copies of \(S\).  Because the working start symbol never
occurs as a child and has no other rules, this preserves the generated
language and all tuple languages of the remaining typed nonterminals.  In this
subsection we again denote the trimmed start-collapsed refinement by
\(\widetilde G_0\).  Its nonstart rules are exactly typed one-letter terminal
rules and typed binary linear nondeleting rules, so it is again a working
presentation in the sense needed by the single-spine arguments.  Moreover,
every surviving typed nonterminal has a positive-total-length tuple: every
productive working derivation contains a one-letter terminal leaf and
nondeletion preserves its contribution.  Thus no surviving typed nonterminal
is silent here, and the positive anchors of Definition~\ref{def:silent-typed}
may be chosen to be minimum-total-length anchors.

\begin{lemma}[Completeness is preserved by start collapse]
\label{lem:start-collapse-completeness}
For a reduced working presentation, the completeness argument of
Proposition~\ref{prop:completeness} remains valid when the ordinary typed
refinement is replaced by the start-collapsed refinement used in this
subsection.
\end{lemma}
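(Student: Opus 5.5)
The plan is to re-run the proof of Proposition~\ref{prop:completeness} verbatim for the start-collapsed refinement, and to check that each ingredient it uses still holds there. Those ingredients are: the output-type invariants (Proposition~\ref{prop:output-invariants}); exposing contexts, with \(\chi(X)=\blank_1\) for typed start children (Lemma~\ref{lem:exposing-contexts}); silent-child compression and its language preservation (Lemma~\ref{lem:silent-compression}); the characteristic sample (Definition~\ref{def:cs}); and witness enumeration (Lemma~\ref{lem:exposed-rule-enumerated}). The soundness half needs no change, because Proposition~\ref{prop:learner-soundness} does not refer to the proof-side presentation at all.

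The first step is to verify that start collapse preserves all tuple languages of the surviving typed nonterminals \(A_{\mathbf p}\) with \(A\ne S\). Each surviving typed start fragment \(\widetilde S\to S_{(p)}\to A_{(p)}\) uses the identity rule \(S\to A\) with \(\mu(A)=1\). By the output-type invariant, the type of \(S_{(p)}\) equals that of \(A_{(p)}\). Since \(S\) has no other rules and never occurs as a child, every successful derivation in the uncollapsed refinement passes through exactly one such fragment at its root. Replacing this fragment by \(\widetilde S\to A_{(p)}\) therefore gives a bijection of successful derivations that fixes every subtree below \(A_{(p)}\). As a consequence, the set of surviving typed nonterminals other than the \(S_{(p)}\) is unchanged, \(L(\widetilde G_0)=L(G)\), and Proposition~\ref{prop:output-invariants}(i)--(iii) continue to hold. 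Lemma~\ref{lem:exposing-contexts} also transfers: its cut-and-splice argument only needs linear nondeleting rules above the chosen occurrence, and a direct child \(X\) of \(\widetilde S\) again has \(\chi(X)=\blank_1\).

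The second step is silent-child compression. By the observation recorded just before the lemma, no surviving typed nonterminal is silent in the collapsed refinement, so compression is the identity and \(\widetilde G_0^+=\widetilde G_0\). All rules are then one-letter terminal rules, treated as compressed rank-zero rules, or typed binary rules with both children non-silent. This gives \(b_G\le 2\), and the anchors \(\omega(X)\) are minimum-total-length positive tuples.

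The final step is the simulation argument. For typed start children \(X\), the anchor exposure \(\omega(X)\) is itself a sample word, so the learner contains \(\widehat S\to[\omega(X)]\). Rank-zero and binary rules are handled exactly as in Proposition~\ref{prop:completeness}: a shared exposing context \(\chi(X)\) and equal output types license the semantic unary rule \(\iota(X)\to[\tuple z_R]\), and Lemma~\ref{lem:exposed-rule-enumerated} supplies the witness rule. Induction on collapsed derivations then yields \(L\subseteq L(\widehat G_{\le b}(K))\) for \(b\ge2\), and likewise for \(\widehat G(K)\) via Lemma~\ref{lem:sample-bounded-rank}.

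I expect the main obstacle to be the bookkeeping in the first step. Trimming must be shown to commute with collapse, so that exactly the same non-start typed nonterminals and rules survive. The step from the start fragment to the start link must also be shown to leave the characteristic-sample words for typed start children unchanged, with \(\chi=\blank_1\) still available. The rest of the argument is direct reuse of the earlier proof.
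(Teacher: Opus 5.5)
Your proposal is correct and follows the same route as the paper. Start collapse alters only the root fragments \(\widetilde S\to S_{(p)}\to A_{(p)}\) and leaves every nonstart rule and typed tuple language intact, and each direct start child receives the exposing context \(\blank_1\); hence the start exposure and all nonstart simulation steps of Proposition~\ref{prop:completeness} carry over unchanged. Your extra checks (the bijection of successful derivations, triviality of silent-child compression, and \(b_G\le 2\)) simply make the paper's brief argument more explicit.
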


\begin{proof}
The collapse changes only successful typed start fragments
\(\widetilde S\to S_{(p)}\to A_{(p)}\) into direct links
\(\widetilde S\to A_{(p)}\).  No nonstart rule or remaining nonterminal tuple
language changes.  For every surviving direct start child \(A_{(p)}\), the
empty unary context \(\blank_1\) is an exposing context, so its anchor itself
supplies the start exposure used in Proposition~\ref{prop:completeness}.  All
nonstart simulation steps are unchanged.
\end{proof}

\begin{lemma}[Output typing preserves the single-spine structure]
\label{lem:single-spine-refinement}
If \(G\) is single-spine, then its start-collapsed typed refinement
\(\widetilde G_0\) is single-spine after a typed nonterminal
\(A_{\mathbf p}\) is declared lexical whenever \(A\) is lexical.
\end{lemma}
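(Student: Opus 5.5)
The plan is to check the single-spine condition of Definition~\ref{def:single-spine-working-mcfg} rule by rule in \(\widetilde G_0\). Every nonstart rule of the start-collapsed refinement comes from exactly one rule of \(G\) by attaching type indices, so it suffices to compare each typed rule with the untyped rule it came from. Erasing type indices sends a typed nonterminal \(A_{\mathbf p}\) to \(A\). Under the declared convention, \(A_{\mathbf p}\) is lexical if and only if \(A\) is lexical. Hence a typed child occurrence is nonlexical exactly when its underlying child occurrence in \(G\) is nonlexical.

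The first step is to justify that this convention is consistent with Definition~\ref{def:lexical-nonterminal} inside \(\widetilde G_0\).

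\begin{itemize}
\item If \(A\) is lexical, then every rule of \(G\) with left-hand side \(A\) is a terminal rule \(A\to(a)\).
\item By Definition~\ref{def:output-refinement}, the only typed rules with left-hand side \(A_{\mathbf p}\) are therefore the typed terminal rules \(A_{(h(a))}\to(a)\).
\item Since trimming and start collapse only remove objects, \(A_{\mathbf p}\) is genuinely lexical in \(\widetilde G_0\) whenever it survives.
\end{itemize}

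A typed copy of a nonlexical \(A\) might also turn out to have only terminal rules after trimming. I would not rely on the converse here: the lemma's statement fixes the declaration, so treating such a copy as nonlexical only makes the single-spine condition harder to satisfy, and the bound below still holds.

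The second step handles each rule form. Typed start links \(\widetilde S\to A_{(p)}\) are not binary rules, so they impose nothing. Typed terminal rules have rank zero. For a typed binary rule
\[
  A_{\mathbf p}\to\rho(B_{\mathbf q_1},C_{\mathbf q_2}),
\]
the underlying rule \(A\to\rho(B,C)\) of \(G\) has at most one nonlexical child occurrence, by single-spine. By the first step, the same holds for the typed occurrences. The case of a repeated nonterminal, \(B=C\), needs no separate treatment: if \(B\) is nonlexical, the rule is already excluded in \(G\), and if \(B\) is lexical, both typed copies are lexical.

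The third step checks that \(\widetilde G_0\) is again a working binary linear nondeleting presentation. The start symbol \(\widetilde S\) has only start links to fan-out-one children and never occurs as a child. The start-collapse discussion just before Lemma~\ref{lem:start-collapse-completeness} already records that every nonstart rule is a typed terminal rule or a typed binary linear nondeleting rule.

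I expect the main obstacle to be bookkeeping rather than mathematics. One must confirm that start collapse introduces no new binary rules; it only replaces unary start fragments. One must also confirm that the "declared lexical" convention agrees with, or is harmlessly coarser than, intrinsic lexicality in \(\widetilde G_0\). Both reduce to the fact that typing, trimming, and collapse never create rule shapes absent from \(G\).
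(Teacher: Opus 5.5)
Your proposal is correct and follows the paper's argument: each typed binary rule of \(\widetilde G_0\) has the same two child occurrences as its underlying rule of \(G\), up to type indices, so under the declared lexicality convention it inherits the at-most-one-nonlexical-child property. Your extra checks go beyond the paper's two-line proof and are sound: typed copies of lexical nonterminals have only one-letter terminal rules, and the collapsed refinement is again in working form. One small inaccuracy is that start collapse does not only remove objects; it adds the direct links \(\widetilde S\to A_{(p)}\). This is harmless, because those links have left-hand side \(\widetilde S\) and are not binary.
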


\begin{proof}
Output typing changes only the finite type indices on nonterminals and the
compatible copies of rules.  It does not change either child occurrence of an
underlying binary rule.  Hence a typed binary rule has at most one nonlexical
child whenever its untyped rule does.
\end{proof}

For a binary rule
\(\rho:A\to(\alpha_1,\ldots,\alpha_e)(B,C)\), put
\[
  t(\rho):=\sum_{i=1}^e |\alpha_i|_{\Sigma},
\]
the number of explicit terminal occurrences in its template tuple.

\begin{lemma}[Additivity of total tuple length]
\label{lem:tuple-length-additivity}
For every binary linear nondeleting rule \(\rho\) and all compatible child
tuples \(\tuple u,\tuple v\),
\[
  \|\rho(\tuple u,\tuple v)\|_1
  =
  \|\tuple u\|_1+\|\tuple v\|_1+t(\rho).
\]
\end{lemma}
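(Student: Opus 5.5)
The plan is to unfold the definition of \(\rho(\tuple u,\tuple v)\) as a simultaneous substitution and count symbols. Write the template tuple as \((\alpha_1,\ldots,\alpha_e)\) over \(\Sigma\cup\{x^1,\ldots,x^{d_B},y^1,\ldots,y^{d_C}\}\). Let \(\sigma\) be the substitution that sends each terminal to itself, each \(x^k\) to \(u_k\), and each \(y^k\) to \(v_k\). Then the \(i\)-th component of \(\rho(\tuple u,\tuple v)\) is \(\sigma(\alpha_i)\). Its length is additive over the symbol occurrences of \(\alpha_i\): each terminal occurrence contributes \(1\), and each variable occurrence contributes the length of the component substituted for it.

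Summing over \(i=1,\ldots,e\), the total length \(\|\rho(\tuple u,\tuple v)\|_1\) splits into three parts. The first is the number of terminal occurrences across all \(\alpha_i\), which is exactly \(t(\rho)\) by definition. The second is the sum, over all occurrences of \(x\)-variables, of the corresponding \(|u_k|\). The third is the analogous sum of \(|v_k|\) over the occurrences of \(y\)-variables.

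The only substantive step is to invoke linearity together with nondeletion from Definition~\ref{def:working-mcfg}. These guarantee that every variable \(x^1,\ldots,x^{d_B},y^1,\ldots,y^{d_C}\) occurs exactly once in the whole template tuple. Hence the second part equals \(\sum_k|u_k|=\|\tuple u\|_1\) and the third equals \(\|\tuple v\|_1\), which gives the identity.

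There is no real obstacle. The one point to note is that empty components, which are allowed both in \(\tuple u,\tuple v\) and in the \(\alpha_i\), are harmless, since they simply contribute \(0\) to the counts.
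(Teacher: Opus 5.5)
Your proposal is correct and follows the paper's own argument: both count the terminal occurrences of the template, which contribute \(t(\rho)\), and use linearity together with nondeletion, so that each child variable occurs exactly once and each component of \(\tuple u\) and \(\tuple v\) is counted exactly once. Your explicit expansion via the substitution \(\sigma\) simply spells out what the paper states in one line.
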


\begin{proof}
Every child variable occurs exactly once in the complete template tuple.
Consequently, the components of \(\tuple u\), the components of \(\tuple v\),
and the explicit terminal occurrences of the rule contribute once each to the
total output length, with neither duplication nor deletion.
\end{proof}

For the remainder of this subsection, let \(G\) be reduced and single-spine.
For every surviving typed nonterminal \(X\), choose an anchor
\(\omega_{\min}(X)\in L_X\) of minimum total tuple length.  Among all
successful derivations containing an occurrence of \(X\), choose one whose
exposing context \(\chi_{\min}(X)\) has the minimum number
\(|\chi_{\min}(X)|_{\Sigma}\) of terminal letters outside its named holes.
Ties are broken by any fixed effective order.  These choices are made only in
the existence proof for the characteristic sample and are not supplied to the
learner.  For the remainder of this subsection, Definition~\ref{def:cs} and
\(B_{\mathrm{exp}}(G,h)\) are instantiated with
\(\omega(X)=\omega_{\min}(X)\) and \(\chi(X)=\chi_{\min}(X)\).

\begin{lemma}[No repetition on a shortest anchor spine]
\label{lem:no-repeat-shortest-anchor}
A minimum-total-length anchor for a surviving typed nonterminal \(X\) has a
derivation whose nonlexical spine contains no typed nonterminal more than once.
\end{lemma}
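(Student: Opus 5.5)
We argue by a standard cut-and-splice (pumping-down) argument on the single spine. Fix a surviving typed nonterminal \(X\) and a minimum-total-length anchor \(\omega_{\min}(X)\in L_X\). Among all derivations in \(\widetilde G_0\) from \(X\) yielding \(\omega_{\min}(X)\), choose one, \(\tau\), with the minimum number of nodes. By Lemma~\ref{lem:single-spine-refinement}, \(\widetilde G_0\) is single-spine. Hence every binary node of \(\tau\) has at most one nonlexical child, lexical children are leaves deriving one letter, and the nonlexical nodes of \(\tau\) form a single downward path \(X=Z_0,Z_1,\ldots,Z_m\). We show this path has no repeated typed nonterminal.

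Suppose instead \(Z_i=Z_j=Y\) with \(i<j\). Replace the subtree of \(\tau\) rooted at the occurrence \(Z_i\) by the subtree rooted at \(Z_j\). Both roots carry the same typed label \(Y\), so the result \(\tau'\) is again a valid \(\widetilde G_0\)-derivation from \(X\), and its output tuple \(\tuple u'\) lies in \(L_X\). It also has the correct output type, by Proposition~\ref{prop:output-invariants}.

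We compare lengths using Lemma~\ref{lem:tuple-length-additivity}, applied along the path. Each spine step \(Z_k\to\rho_k(\ldots)\) with \(i\le k<j\) adds \(t(\rho_k)\) plus the length \(1\) of each lexical child. Every such step is a binary rule with at least one lexical child, because \(Z_{k+1}\) is its unique nonlexical child. Therefore
\[
  \|\tuple u'\|_1\le\|\omega_{\min}(X)\|_1-(j-i)<\|\omega_{\min}(X)\|_1 .
\]
This inequality is the main point to verify: steps strictly inside the removed segment add positive length precisely because single-spine forces a lexical sibling of \(Z_{k+1}\).

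The pumped-down tuple still has positive total length, since every working derivation contains a terminal leaf and nondeletion preserves it, as noted for the start-collapsed refinement. So \(\tuple u'\in L_X^+\) and \(\|\tuple u'\|_1<\|\omega_{\min}(X)\|_1\), contradicting minimality. This cut-and-splice is the only genuine step; the rest is bookkeeping.

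The statement only asks that \(\omega_{\min}(X)\) \emph{have} some derivation with a repetition-free spine. So if ties in the canonical order mean \(\tau\) is not the derivation used elsewhere, it is enough to note that \(\tau\) itself is such a derivation. Node-minimality of \(\tau\) is not actually needed, since the length argument already yields a contradiction.
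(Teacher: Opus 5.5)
Your proposal is correct and follows essentially the same cut-and-splice argument as the paper. A repeated typed nonterminal on the spine lets you replace the upper subtree by the lower one, the single-spine condition forces a lexical sibling at each removed step, and length additivity carries the strict decrease up to the root, contradicting minimality; your node-minimal choice of \(\tau\) and the positivity remark are harmless extras the paper does not need.
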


\begin{proof}
Choose a derivation of a minimum-total-length anchor
\(\omega_{\min}(X)\).  Suppose that a typed nonterminal \(Y\) occurs twice on
its nonlexical spine.  Write \(Y^{\uparrow}\) for the upper occurrence and
\(Y^{\downarrow}\) for the lower occurrence.  Let
\[
  \tuple u^{\uparrow}\in L_Y
  \qquad\text{and}\qquad
  \tuple u^{\downarrow}\in L_Y
\]
be the tuples derived at these two occurrences.

Consider the nonempty spine segment from \(Y^{\uparrow}\) to
\(Y^{\downarrow}\).  At each binary rule on this segment, the child continuing
towards \(Y^{\downarrow}\) is nonlexical.  Since the presentation is
single-spine, the off-spine child is therefore lexical and derives a
one-letter tuple.  If the rules on the segment are
\(\rho_1,\ldots,\rho_k\), listed from bottom to top, repeated application of
Lemma~\ref{lem:tuple-length-additivity} gives
\[
  \|\tuple u^{\uparrow}\|_1
  =
  \|\tuple u^{\downarrow}\|_1
  +
  \sum_{i=1}^{k}\bigl(1+t(\rho_i)\bigr).
\]
Here the term \(1\) is the contribution of the lexical sibling at the
corresponding step.  Since the segment is nonempty,
\[
  \|\tuple u^{\uparrow}\|_1
  >
  \|\tuple u^{\downarrow}\|_1.
\]

Replace the subtree rooted at \(Y^{\uparrow}\) by the subtree rooted at
\(Y^{\downarrow}\).  Since \(Y^{\downarrow}\) is a descendant occurrence
of \(Y^{\uparrow}\) on the same spine and both occurrences have the same
typed label \(Y\), the lower subtree is itself a valid derivation rooted at
\(Y\).  Splicing it into the position of \(Y^{\uparrow}\) therefore
produces a well-typed derivation with the same root \(X\); no rule above
\(Y^{\uparrow}\) is changed.  Put
\[
  \delta
  :=
  \|\tuple u^{\uparrow}\|_1
  -
  \|\tuple u^{\downarrow}\|_1
  >0.
\]
At every ancestor of \(Y^{\uparrow}\), the surrounding rule is linear and
nondeleting.  Hence every component of the replaced child tuple occurs
exactly once in the parent output, and
Lemma~\ref{lem:tuple-length-additivity} shows that replacing the child
decreases the total parent length by exactly \(\delta\).  Inducting upwards
through the ancestors, the total length of the root tuple also decreases by
exactly \(\delta\).  In particular, the resulting root tuple is still an
element of \(L_X\), because the root label and every rule above the splice
point are unchanged.  The resulting root tuple is therefore a strictly shorter
anchor for \(X\), contradicting the minimality of
\(\omega_{\min}(X)\).
\end{proof}

\begin{proposition}[Polynomial anchor length]
\label{prop:single-spine-anchor-bound}
Fix \(f\) and \(h\).  There is a constant \(c_{f,h}\) such that, for every
reduced single-spine working presentation \(G\) of fan-out at most \(f\) and
every surviving typed nonterminal \(X\),
\[
  \|\omega_{\min}(X)\|_1
  \le c_{f,h}|G|^2.
\]
\end{proposition}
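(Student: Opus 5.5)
Fix a surviving typed nonterminal \(X\) and, by Lemma~\ref{lem:no-repeat-shortest-anchor}, choose a derivation \(\tau\) of \(\omega_{\min}(X)\) whose nonlexical spine contains no typed nonterminal twice. The bound then follows by measuring the total tuple length along this spine with Lemma~\ref{lem:tuple-length-additivity}. By Lemma~\ref{lem:single-spine-refinement} the typed refinement \(\widetilde G_0\) is single-spine. So in \(\tau\) every binary rule has at most one nonlexical child, and the nonlexical occurrences form a single path \(X=Z_0,Z_1,\ldots,Z_k\). The bottom node \(Z_k\) is nonlexical and is expanded by a binary rule both of whose children are lexical. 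Every off-path child is a lexical typed nonterminal deriving a one-letter tuple.

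Summing the additivity identity of Lemma~\ref{lem:tuple-length-additivity} down the path gives
\[
  \|\omega_{\min}(X)\|_1
  =\sum_{i=0}^{k}\bigl(1+t(\rho_i)\bigr)+1,
\]
where \(\rho_i\) is the rule applied at \(Z_i\). The final \(+1\) accounts for the second lexical child at the bottom. If \(X\) itself is lexical, the length is just \(1\). Each summand is at most \(1+T\), where \(T:=\max_\rho t(\rho)\le\max_\rho\lVert\rho\rVert_{\mathrm{temp}}\le|G|\) by the encoding convention of Definition~\ref{def:poly-time-data}. Hence \(\|\omega_{\min}(X)\|_1\le (k+2)(1+|G|)\).

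It remains to bound the path length \(k\). By the no-repetition property, \(k+1\) is at most the number of surviving typed nonterminals. The number of typed copies of a nonterminal \(A\) is \(|M|^{\mu(A)}\le|M|^f\), a constant for fixed \(f,h\). Since \(|V|=O(|G|)\), we get \(k+1\le|M|^f|V|\le c'_{f,h}|G|\). Combining the two bounds yields \(\|\omega_{\min}(X)\|_1\le c_{f,h}|G|^2\).

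The step needing the most care is the spine bookkeeping. First, Lemma~\ref{lem:no-repeat-shortest-anchor} must forbid repetition of \emph{typed} labels, so the count is over typed nonterminals; this is exactly why the constant depends on \(h\). Second, the bottom rule with two lexical children must be handled separately, since there the spine ends. Third, the sum of template terminals must be bounded by \(|G|\) per step, not merely by the total over rules. Alternatively one can bound \(\sum_i t(\rho_i)\) by \(|M|^{f}\sum_\rho t(\rho)\le c|G|\): each untyped rule appears with boundedly many typings and no typed nonterminal repeats. This gives an even tighter bound, but the crude \(O(|G|)\cdot O(|G|)\) estimate suffices for the stated \(|G|^2\).
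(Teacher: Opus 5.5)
Your argument is correct and is essentially the paper's proof: a repetition-free spine (Lemma~\ref{lem:no-repeat-shortest-anchor}), per-node cost $1+t(\rho)\le 1+O(|G|)$ by additivity, and spine length at most the number $N_{\mathrm{NT}}=O_{f,h}(|G|)$ of surviving typed nonterminals. The only slip is your claim that the spine must end in a binary rule with two lexical children: a nonlexical nonterminal may also have one-letter terminal rules (as $A$ does in $G_+$), so the spine can end at a terminal rule and your displayed equality is then an overestimate, which does not affect the bound; your closing remark, charging each untyped rule at most $|M|^f$ times, in fact sharpens the bound to $O_{f,h}(|G|)$.
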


\begin{proof}
Let \(N_{\mathrm{NT}}\) be the number of surviving typed nonterminals in
\(\widetilde G_0\).  By
Lemma~\ref{lem:no-repeat-shortest-anchor}, the nonlexical spine of the chosen
anchor derivation contains at most \(N_{\mathrm{NT}}\) typed nonterminal
occurrences.

Every binary step that continues along the nonlexical spine has one lexical
sibling, which contributes one terminal letter.  At the bottom of the spine,
a final binary rule may instead have two lexical children, contributing at
most two letters.  In addition, every binary rule \(\rho\) contributes
\(t(\rho)\le |G|\) explicit terminal occurrences.  Thus, using a harmless
uniform overestimate for all spine nodes,
\[
  \|\omega_{\min}(X)\|_1
  \le
  2+N_{\mathrm{NT}}(2+|G|).
\]
For fixed \(f\) and \(h\), the output-type refinement has
\[
  N_{\mathrm{NT}}=O_{f,h}(|G|)
\]
surviving typed nonterminals.  Consequently,
\[
  \|\omega_{\min}(X)\|_1
  =O_{f,h}(|G|^2).
\]
An anchor obtained directly from a terminal rule has total length one and
satisfies the same bound.
\end{proof}

\begin{lemma}[No repetition on a minimum exposure path]
\label{lem:no-repeat-minimum-exposure}
For every surviving typed nonterminal \(X\), a minimum-terminal-size exposing
derivation can be chosen so that the path from the typed start child to the
selected occurrence of \(X\) contains no repeated typed nonterminal.
\end{lemma}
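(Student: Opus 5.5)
The plan is a pumping-down (path-shortening) argument analogous to Lemma~\ref{lem:no-repeat-shortest-anchor}, but applied to the context above \(X\) rather than the subtree below it. Fix a successful derivation in the start-collapsed refinement \(\widetilde G_0\) containing an occurrence of \(X\), chosen so that the number \(|\chi(X)|_\Sigma\) of terminal letters outside the holes is minimal. Let \(\pi\) be the path from the typed start child (the direct child of \(\widetilde S\)) down to the chosen occurrence of \(X\). Suppose that some typed nonterminal \(Y\) occurs twice on \(\pi\), at an upper occurrence \(Y^{\uparrow}\) and a strictly lower occurrence \(Y^{\downarrow}\), with \(X\) at or below \(Y^{\downarrow}\).

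The surgery replaces the subtree rooted at \(Y^{\uparrow}\) by the subtree rooted at \(Y^{\downarrow}\). Both are derivations rooted at the same typed label \(Y\), so the result is again a valid successful typed derivation from \(\widetilde S\). It still contains the selected occurrence of \(X\), since that occurrence lies inside the subtree of \(Y^{\downarrow}\). The new path from the start child to \(X\) is strictly shorter, because the segment between \(Y^{\uparrow}\) and \(Y^{\downarrow}\) is removed.

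The key step is to show that the number of context terminals does not increase. I would argue compositionally. Write the context of \(X\) as the composition of three pieces: the outer context \(C_{\uparrow}\) of \(Y^{\uparrow}\) (material outside the \(Y^{\uparrow}\)-subtree), the multi-hole context \(D\) contributed by the segment from \(Y^{\uparrow}\) to \(Y^{\downarrow}\), and the context \(C_{\downarrow}\) from \(Y^{\downarrow}\) down to \(X\). By linearity and nondeletion, as in Lemma~\ref{lem:filling-identity}, each piece is a linear nondeleting context, and the terminal count outside the holes is additive:
\[
  |\chi|_\Sigma=|C_{\uparrow}|_\Sigma+|D|_\Sigma+|C_{\downarrow}|_\Sigma .
\]
After the splice the context is the composition of \(C_{\uparrow}\) with \(C_{\downarrow}\), so its terminal count is \(|\chi|_\Sigma-|D|_\Sigma\le|\chi|_\Sigma\). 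Hence the new exposing derivation is again of minimum terminal size. This bookkeeping also goes through with Lemma~\ref{lem:tuple-length-additivity}, treating sibling tuples as fixed: every sibling on the spine segment is nondeleting, so its letters are counted once.

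In the single-spine setting each step of the segment has a lexical sibling, so \(|D|_\Sigma\ge1\) and the inequality is strict. That contradicts minimality outright, so the minimal derivation already has no repetition. In general the inequality may be only weak, so I would instead iterate: among minimum-terminal-size exposing derivations, choose one whose path \(\pi\) is shortest, and apply the surgery to get a contradiction. Alternatively, observe that each splice strictly shortens \(\pi\), so the process terminates.

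The main obstacle is making the additivity of context terminals rigorous when holes may be permuted and components may be empty. I would handle this by viewing each intermediate object as a tuple context with named holes and composing substitutions. Linearity guarantees that each hole and each terminal occurrence is carried exactly once to the root, so the counts simply add, and the permutation or emptiness of components plays no role.
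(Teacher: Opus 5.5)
Your proof is correct and follows the paper's argument: splice the subtree at \(Y^{\uparrow}\) onto \(Y^{\downarrow}\), note that the selected occurrence of \(X\) survives, and use single-spine (each step of the removed segment has a lexical off-path sibling) to get a strict drop in context terminals, contradicting minimality. Your fallback for a merely weak inequality, minimizing path length among minimum-terminal-size derivations, is sound but not needed, since the lemma is stated under the single-spine hypothesis where the decrease is always strict.
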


\begin{proof}
Suppose that a typed nonterminal \(Y\) occurs twice on the selected path,
with an upper occurrence \(Y^{\uparrow}\) and a lower occurrence
\(Y^{\downarrow}\).  The lower occurrence lies on the path from
\(Y^{\uparrow}\) to the selected occurrence of \(X\), so the subtree rooted
at \(Y^{\downarrow}\) still contains that selected occurrence of \(X\).
Since the two roots have the same typed label \(Y\), replacing the subtree
rooted at \(Y^{\uparrow}\) by the one rooted at \(Y^{\downarrow}\) therefore
produces another valid successful typed derivation in which the same selected
occurrence of \(X\) survives.  Hence the resulting derivation still exposes
\(X\).

It remains to compare the terminal size of the exposing contexts.  A repeated
\(Y\) on a proper derivation path cannot be lexical, since a lexical
nonterminal has only one-letter terminal rules and therefore cannot properly
dominate another occurrence.  Thus, at every binary step of the removed path
segment, the child continuing toward \(Y^{\downarrow}\) is nonlexical.  By
the single-spine condition, the off-path child at each such step is lexical
and contributes exactly one terminal letter.  Moreover, every explicit
terminal occurrence in the corresponding rule template lies outside the
selected \(X\)-subtree and therefore belongs to the exposing context.

Consequently, if the removed rules are
\(\rho_1,\ldots,\rho_k\), the splice decreases the number of terminal
letters outside the named holes by
\[
  \sum_{i=1}^{k}\bigl(1+t(\rho_i)\bigr)>0.
\]
This contradicts the minimality of \(\chi_{\min}(X)\).
\end{proof}

\begin{proposition}[Polynomial exposing-context length]
\label{prop:single-spine-context-bound}
Fix \(f\) and \(h\).  There is a constant \(c'_{f,h}\) such that, for every
reduced single-spine working presentation \(G\) of fan-out at most \(f\) and
every surviving typed nonterminal \(X\),
\[
  |\chi_{\min}(X)|_{\Sigma}
  \le c'_{f,h}|G|^2.
\]
\end{proposition}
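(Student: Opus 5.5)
The plan is to mirror the proof of Proposition~\ref{prop:single-spine-anchor-bound}, with Lemma~\ref{lem:no-repeat-minimum-exposure} in place of Lemma~\ref{lem:no-repeat-shortest-anchor}. Fix a surviving typed nonterminal \(X\) and the minimum-terminal-size exposing derivation supplied by Lemma~\ref{lem:no-repeat-minimum-exposure}. In it, the path \(\pi\) from the typed start child to the selected occurrence of \(X\) has no repeated typed nonterminal. If \(X\) is itself a direct start child, then \(\chi_{\min}(X)=\blank_1\) and there is nothing to prove. Otherwise \(\pi\) has at most \(N_{\mathrm{NT}}\) node occurrences, hence at most \(N_{\mathrm{NT}}\) binary rule applications. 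Here \(N_{\mathrm{NT}}=O_{f,h}(|G|)\) by Proposition~\ref{prop:cs-exposure-bound}, because output typing multiplies nonterminals only by the constant \(|M|^{f}\).

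Next I account for the terminal letters of \(\chi_{\min}(X)\) outside its holes. Every such letter lies outside the selected \(X\)-subtree. Consider the rule applied at a node of \(\pi\) above \(X\). Its terminal contribution to the context consists of the explicit template terminals \(t(\rho)\le|G|\), together with the full yield of the off-path child. Every node strictly above \(X\) on \(\pi\) properly dominates \(X\), so it is nonlexical, and the on-path child is nonlexical as well (either it is \(X\) or it dominates \(X\)). By single-spine (Lemma~\ref{lem:single-spine-refinement}), the off-path child is therefore lexical and contributes exactly one letter. Nodes off \(\pi\) are either inside the \(X\)-subtree or are lexical siblings already counted, so no other material enters the context. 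Summing over the path gives
\[
  |\chi_{\min}(X)|_{\Sigma}\le N_{\mathrm{NT}}\,(1+|G|)=O_{f,h}(|G|^2),
\]
where the fresh start link contributes no terminals.

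The main obstacle is justifying that \emph{all} context material comes from the spine. This needs the structural fact that in a single-spine derivation every maximal subtree hanging off \(\pi\) is a single lexical leaf. That fact follows directly from the single-spine definition, because the on-path child is nonlexical at each step. A second point to check is that \(t(\rho)\le|G|\) under the encoding of Definition~\ref{def:poly-time-data}. This holds because the template length is a summand of \(|G|\), and typing does not change templates. Combining these two points gives the constant \(c'_{f,h}\).
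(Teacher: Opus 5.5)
There is a genuine gap: the claim that the on-path child is always nonlexical (``either it is \(X\) or it dominates \(X\)'') fails at the final binary step when \(X\) itself is lexical. Typed copies of lexical nonterminals do survive, and \(\CS(\widetilde G_0)\) needs anchor exposures for them. For such an \(X\), the single-spine condition places no constraint on the sibling of \(X\) at the last step. That sibling may be nonlexical, with a large subtree whose whole yield lies in \(\chi_{\min}(X)\).

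Take the \(L_3\) presentation and a typed copy of \(A_c\). Its exposure passes through \(T\to(x^1x^2y^1)(A,A_c)\) or \(A\to(ax^1,bx^2y^1)(A,A_c)\). In either rule the off-path child \(A\) is nonlexical and contributes at least two letters, not one. So your structural fact that every subtree hanging off \(\pi\) is a single lexical leaf is false. Your displayed bound \(|\chi_{\min}(X)|_\Sigma\le N_{\mathrm{NT}}(1+|G|)\) therefore does not follow from your argument: it omits the contribution of this sibling subtree.

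The rest of your accounting is right and matches the paper's treatment of the ordinary path steps. That is: the no-repetition path from Lemma~\ref{lem:no-repeat-minimum-exposure}, at most \(N_{\mathrm{NT}}\) binary steps, and at most \(1+|G|\) context letters per step with a lexical off-path child.

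The missing idea handles the one exceptional configuration. It can occur only at the step immediately above the selected occurrence, and only when \(X\) is lexical. Let \(Z\) be the nonlexical sibling there. Its subtree does not contain the selected occurrence of \(X\), so replacing it by any other derivation from \(Z\) gives another successful derivation that still exposes \(X\). Minimality of \(\chi_{\min}(X)\) then lets you take that subtree to derive a minimum anchor \(\omega_{\min}(Z)\). Proposition~\ref{prop:single-spine-anchor-bound} bounds its length by \(c_{f,h}|G|^2\). Adding this single extra term to your path sum still gives \(O_{f,h}(|G|^2)\); this is how the paper completes the proof.
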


\begin{proof}
By Lemma~\ref{lem:no-repeat-minimum-exposure}, the selected path has
\(O_{f,h}(|G|)\) typed nonterminal occurrences.  Every ordinary path step is
binary, has a lexical off-path child, and contributes at most \(1+|G|\)
terminal occurrences outside the named holes.  Thus their total contribution
is \(O_{f,h}(|G|^2)\).

There is only one configuration in which an off-path sibling need not be
lexical.  To see this, consider a binary node on the selected path and call
the child continuing along that path the path child.  If the path child is
not the selected occurrence of \(X\) itself, then it properly dominates
\(X\) and hence cannot be lexical.  The single-spine condition therefore
forces the off-path sibling to be lexical.

At the final binary step immediately above \(X\), the path child is \(X\)
itself.  If \(X\) is nonlexical, the off-path sibling is again forced to be
lexical.  Thus a nonlexical off-path sibling can occur only when the selected
occurrence \(X\) is lexical, and only at this final binary step.

Let \(Z\) be this possible nonlexical sibling.  The subtree rooted at \(Z\)
does not contain the selected occurrence of \(X\), so replacing it by any
derivation of the same typed nonterminal preserves the exposing context for
\(X\).  By minimality of the chosen exposing derivation, we may therefore
choose this subtree to derive a minimum anchor of \(Z\).
Proposition~\ref{prop:single-spine-anchor-bound} bounds its terminal
contribution by
\[
  O_{f,h}(|G|^2).
\]
If \(X\) is nonlexical, the exceptional final configuration cannot occur.

Combining this possible contribution with the
\(O_{f,h}(|G|^2)\) contribution of the ordinary path steps gives
\[
  |\chi_{\min}(X)|_{\Sigma}=O_{f,h}(|G|^2).
\]
\end{proof}

\begin{theorem}[Quantitative polynomial time and data for supplied observations]
\label{thm:single-spine-polynomial-exposure}
Fix a fan-out bound \(f\), let \(h:\Sigma^*\to M\) be an explicit finite
monoid homomorphism, and put \(m:=|M|\).  For every reduced single-spine working
binary linear nondeleting MCFG presentation \(G\) of fan-out at most \(f\), the
minimum anchors and exposing contexts selected above satisfy
\[
  B_{\mathrm{exp}}(G,h)=O_f(m^f|G|^2),
  \qquad
  \|\CS(\widetilde G_0)\|_+=O_f(m^{3f}|G|^3).
\]
Consequently, for every fixed fan-out bound \(f\), an explicitly supplied
observation gives two distinct polynomial guarantees.  Hypothesis construction
is polynomial jointly in \(\|K\|_+\) and \(\|h\|_{\mathrm{enc}}\), by
Corollary~\ref{cor:poly-supplied-observation} with the working branching cap
\(b=2\), while the
presentation-relative characteristic-data bound is polynomial in
\(m=|M|\) and \(|G|\) as displayed above.
\end{theorem}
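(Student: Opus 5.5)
The plan is to rerun the bounds of Propositions~\ref{prop:single-spine-anchor-bound} and~\ref{prop:single-spine-context-bound}, now keeping track of the dependence on \(m=|M|\) instead of absorbing it into \(c_{f,h}\), and then to feed them into Proposition~\ref{prop:cs-exposure-bound}. The first step is to count typed objects explicitly. A nonterminal \(A\) of fan-out at most \(f\) has at most \(m^{f}\) typed copies \(A_{\mathbf p}\), so \(N_{\mathrm{NT}}\le m^f|V|+O(1)=O(m^f|G|)\). A binary rule \(A\to\rho(B,C)\) has its typed copies indexed by the child types \((\mathbf q_B,\mathbf q_C)\), since the parent type is then determined by \(\out_\rho\). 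This gives at most \(m^{2f}\) copies per rule, so \(N_{\mathrm{rule}}=O(m^{2f}|G|)\), where the terminal and start rules contribute only \(O(m)\) each.

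Next I bound \(B_{\mathrm{exp}}\). Each word of \(\CS(\widetilde G_0)\) has the form \(\chi(X)[\tuple t]\), where \(\tuple t\) is either \(\omega(X)\) or \(\rho(\omega(Y),\omega(Z))\) for a single typed binary rule; the rank-zero case is a single letter. Its length is therefore at most \(|\chi_{\min}(X)|_\Sigma+2\max_Y\|\omega_{\min}(Y)\|_1+|G|\). By Lemma~\ref{lem:no-repeat-shortest-anchor}, the anchor spine has at most \(N_{\mathrm{NT}}\) nodes. Along the spine, each node contributes one lexical letter plus \(t(\rho)\) explicit terminals. The key refinement is this: I will not bound \(\sum t(\rho)\) by \(N_{\mathrm{NT}}|G|\). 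The spine visits distinct typed nonterminals, but the underlying untyped rules can repeat at most \(m^f\) times each, once per type of the spine child. Hence \(\sum_i t(\rho_i)\le m^f\sum_{\rho}t(\rho)\le m^f|G|\), which gives \(\|\omega_{\min}(X)\|_1\le 2+N_{\mathrm{NT}}+m^f|G|=O_f(m^f|G|)\). The same argument applied to the exposure path of Lemma~\ref{lem:no-repeat-minimum-exposure} yields \(O_f(m^f|G|)\). The exceptional nonlexical sibling of Proposition~\ref{prop:single-spine-context-bound} adds one more anchor, of size \(O_f(m^f|G|)\). Combining these gives \(B_{\mathrm{exp}}=O_f(m^f|G|)\), which is certainly \(O_f(m^f|G|^2)\) as claimed. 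If the refined counting fails, I will fall back on the cruder estimate \(N_{\mathrm{NT}}(2+|G|)=O(m^f|G|^2)\), which already matches the displayed bound.

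Finally, Proposition~\ref{prop:cs-exposure-bound} gives
\(\|\CS\|_+\le(N_{\mathrm{NT}}+N_{\mathrm{rule}})B_{\mathrm{exp}}=O(m^{2f}|G|)\cdot O(m^f|G|^2)=O_f(m^{3f}|G|^3)\).
Lemma~\ref{lem:start-collapse-completeness} together with Corollary~\ref{cor:bounded-rank-reconstruction} at \(b=2\) guarantees that this sample is characteristic for \(\widehat G_{\le2}\). The time statement is Corollary~\ref{cor:poly-supplied-observation} with \(b=2\).

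The main obstacle is the honest \(m\)-accounting for the rule count, and in particular checking that trimming and start collapse do not create more than \(m^{2f}\) typed copies per rule. I also need to confirm that the per-spine-step terminal contribution is not secretly multiplied by another \(m\)-factor. I will handle this by indexing typed rules by child types only.
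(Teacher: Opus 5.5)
Your proposal is correct and follows the paper's proof. You count typed nonterminals (\(O_f(m^f|G|)\)) and typed rules (\(O_f(m^{2f}|G|)\)), feed \(N_{\mathrm{NT}}\) into the bounds of Propositions~\ref{prop:single-spine-anchor-bound} and~\ref{prop:single-spine-context-bound}, bound binary exposures by additivity, and conclude with Proposition~\ref{prop:cs-exposure-bound}.

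Your extra refinement is also sound. On a repetition-free spine, each untyped rule is applied only at distinct typed copies of its left-hand side, hence at most \(m^f\) times, so \(\sum_i t(\rho_i)\le m^f\sum_\rho t(\rho)=O(m^f|G|)\) by the encoding convention. This sharpens the paper's per-node estimate \(t(\rho)\le|G|\) to \(B_{\mathrm{exp}}=O_f(m^f|G|)\) and \(\|\CS(\widetilde G_0)\|_+=O_f(m^{3f}|G|^2)\), although the stated bounds do not require it.
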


\begin{proof}
The complete output-type refinement has at most
\[
  1+|V|m^f=O_f(m^f|G|)
\]
typed nonterminals.  Each binary rule has at most
\(m^{\mu(B)+\mu(C)}\le m^{2f}\) typed copies.  A working start rule has at most
\(m\) direct typed start copies in the start-collapsed refinement, and a
one-letter terminal rule has one typed copy.  Hence the trimmed refinement has
\[
  N_{\mathrm{NT}}=O_f(m^f|G|),
  \qquad
  N_{\mathrm{rule}}=O_f(m^{2f}|G|).
\]

The proofs of Propositions~\ref{prop:single-spine-anchor-bound}
and~\ref{prop:single-spine-context-bound} use the observation only through
\(N_{\mathrm{NT}}\).  Substituting the displayed bound there gives, for every
surviving typed nonterminal \(X\),
\[
  \|\omega_{\min}(X)\|_1=O_f(m^f|G|^2),
  \qquad
  |\chi_{\min}(X)|_{\Sigma}=O_f(m^f|G|^2).
\]
For a typed binary rule, at most one child is nonlexical; lexical anchors have
length one.  Lemma~\ref{lem:tuple-length-additivity} therefore gives the same
\(O_f(m^f|G|^2)\) bound for every binary exposure, proving the stated bound on
\(B_{\mathrm{exp}}\).

Finally Proposition~\ref{prop:cs-exposure-bound} gives
\[
  \|\CS(\widetilde G_0)\|_+
  \le
  (N_{\mathrm{NT}}+N_{\mathrm{rule}})B_{\mathrm{exp}}(G,h)
  =O_f(m^{3f}|G|^3).
\]
Exact reconstruction is Theorem~\ref{thm:exact-reconstruction}.  The runtime
statement is Corollary~\ref{cor:poly-supplied-observation}; it is measured
against the explicit observation encoding \(\|h\|_{\mathrm{enc}}\).  The data
statement proved here is different: it is presentation-relative and polynomial
in the observation cardinality \(m\) and the witnessing-presentation size
\(|G|\).
\end{proof}

The presentations displayed above for \(L_3\) and \(L_{\times}\) are
single-spine: in each recursive or top rule, all children except the continuing
nonterminal are lexical.  Together with Corollaries~\ref{cor:l3-in-fixed-observation-class}
and~\ref{cor:cross-serial-in-fixed-observation-class},
Theorem~\ref{thm:single-spine-polynomial-exposure} therefore gives polynomial
time and data for both examples under their fixed envelope morphisms.

\section{Observation as Semantic Side Information}
\label{sec:fixed-observation}

The preceding sections supplied the target observation \(h\).  We now vary this
information: first only a bound on observation size is supplied, and then no
bound at all.  We study both identifiability and uniform characteristic-data
complexity.

\begin{definition}[Intrinsic finite-observation size]
\label{def:intrinsic-observation-size}
Fix a fan-out bound \(f\ge1\).  For a language \(L\subseteq\Sigma^*\), define
its \emph{intrinsic finite-observation size} by
\[
  \obs_f(L)
  :=
  \min\left\{
    |\operatorname{im}(h)|
    \;\middle|\;
    \begin{array}{l}
    h:\Sigma^*\to M\text{ is an explicit finite-monoid homomorphism,}\\
    L\text{ is }(f,h)\text{-tuple-substitutable}
    \end{array}
  \right\},
\]
whenever such a finite observation exists, and put
\(\obs_f(L)=\infty\) otherwise.  The image size is used rather than the
ambient codomain size because replacing \(M\) by \(\operatorname{im}(h)\)
does not change equality of observed types.  Thus \(\obs_f(L)\) is a
language-relative semantic parameter and does not depend on a witnessing MCFG
presentation.
\end{definition}

\begin{remark}[Two immediate readings of intrinsic observation size]
\label{rem:intrinsic-observation-readings}
The case \(\obs_f(L)=1\) is exactly unguarded tuple substitutability, since the
one-element monoid makes every pair of tuples observation-equivalent.  More
generally, if a minimum-size witnessing observation of size
\(r=\obs_f(L)<\infty\) is supplied for a reduced single-spine presentation
\(G\), Theorem~\ref{thm:single-spine-polynomial-exposure} gives a
presentation-relative characteristic sample of positive size
\(O_f(r^{3f}|G|^3)\).
\end{remark}

\begin{definition}[Unbounded latent-observation union]
\label{def:no-advice-union}
Fix an alphabet \(\Sigma\) and a fan-out bound \(f\).  We now allow the target
to belong to any fixed-observation slice, but we do not tell the learner which
finite morphism witnesses this membership.  For this reason, define the
\emph{unbounded latent-observation union}
\[
  \Cmcf{f}{\bullet}
  :=
  \bigcup_h \Cmcf{f}{h},
\]
where the union ranges over all explicit finite monoid homomorphisms
\[
  h:\Sigma^*\to M.
\]
A learner for \(\Cmcf{f}{\bullet}\) receives only positive data and the fixed
fan-out bound; in particular, it receives neither a witnessing morphism nor a
bound on the size of its codomain monoid.
\end{definition}

\begin{lemma}[Ascending-chain obstruction]
\label{lem:ascending-chain-obstruction}
Let a language class \(\mathcal C\) contain languages
\[
  L_1\subsetneq L_2\subsetneq\cdots
\]
and their union \(L_\infty=\bigcup_{k\ge1}L_k\).  Then \(\mathcal C\) is not
identifiable in the limit from positive data.
\end{lemma}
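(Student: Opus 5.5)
The plan is Gold's classical superfinite-class argument, specialized to the paper's notion of learner. A learner maps finite prefixes to hypotheses, so no set-driven assumption is needed. Suppose, for contradiction, that some learner \(\mathcal A\) identifies \(\mathcal C\) in the limit. The identification criterion here is behaviorally correct (TxtBc), so the contradiction must only use eventual semantic correctness. I will build a single text \(T\) for \(L_\infty\) on which \(\mathcal A\) outputs a hypothesis with language \(\neq L_\infty\) infinitely often.

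\textbf{Step 1 (locking prefixes).} Fix an enumeration \(w_1,w_2,\dots\) of \(L_\infty\); pad with \(\#\) if \(L_\infty\) is finite, though then the chain could not be strictly increasing, so \(L_\infty\) is infinite. I will build finite prefixes \(\tau_0\subseteq\tau_1\subseteq\cdots\) and indices \(k_1<k_2<\cdots\) with two properties. The content of \(\tau_i\) is contained in \(L_{k_{i}}\) and contains \(w_1,\dots,w_i\). Moreover, \(\mathcal A(\tau_i)\) has language \(L_{k_i}\neq L_\infty\).

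\textbf{Step 2 (inductive construction).} Given \(\tau_{i-1}\), its content is finite and contained in \(L_\infty=\bigcup_k L_k\). The chain is increasing, so some \(L_k\) with \(k>k_{i-1}\) contains this content together with \(w_i\). Extend \(\tau_{i-1}\) by \(w_i\) and then by any text for \(L_k\). This gives a text for \(L_k\in\mathcal C\). By identification of \(L_k\), some finite extension \(\tau_i\) of this prefix has \(L(\mathcal A(\tau_i))=L_k\). Set \(k_i:=k\). Since \(L_k\subsetneq L_{k+1}\subseteq L_\infty\), we get \(L(\mathcal A(\tau_i))\ne L_\infty\).

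\textbf{Step 3 (conclusion).} Let \(T:=\bigcup_i\tau_i\). Every \(w_i\) appears in \(T\), and all entries lie in \(L_\infty\cup\{\#\}\), so \(T\) is a text for \(L_\infty\in\mathcal C\). Yet at the infinitely many stages \(|\tau_i|\) the hypothesis language is \(L_{k_i}\neq L_\infty\). This violates even TxtBc identification, and hence also the stronger explanatory criterion, giving the contradiction.

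The only delicate point is Step 2. Each prefix must be extended into a genuine text for some \(L_k\) in the class, which requires its content to sit inside a single \(L_k\). This is exactly where the ascending-chain hypothesis is used, since finite subsets of the union land in some member. Beyond this, the argument is routine bookkeeping.
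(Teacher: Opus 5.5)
Your proposal is correct and is essentially the paper's argument: both build a text for \(L_\infty\) by repeatedly extending the current prefix into a text for some \(L_k\) that contains its content, waiting until the learner outputs a hypothesis with language \(L_k\), and then adding the next enumerated element of \(L_\infty\). The only cosmetic difference is how the contradiction is drawn. The paper forces the index upward by appending an element of \(L_{k_s+1}\setminus L_{k_s}\) and concludes from the strictly increasing hypothesis languages. You instead choose \(k>k_{i-1}\) directly and conclude from \(L_{k_i}\subsetneq L_\infty\), which is equally valid under the TxtBc criterion.
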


\begin{proof}
Assume that a learner \(\mathcal A\) identifies every language in
\(\mathcal C\).  We construct a text for \(L_\infty\) on which
\(\mathcal A\) outputs infinitely many different hypothesis languages.
Fix an enumeration \(v_1,v_2,\ldots\) of \(L_\infty\).  Suppose a finite prefix
\(\sigma_s\) has been constructed and its content is contained in some
\(L_{k_s}\).  Continue \(\sigma_s\) with elements of a text for \(L_{k_s}\).
The resulting infinite continuation is a text for \(L_{k_s}\), so at some
finite extension the learner must output a hypothesis whose language is
\(L_{k_s}\).  Then append an element of
\(L_{k_s+1}\setminus L_{k_s}\), and append \(v_s\) if it has not yet appeared.
The new finite content is contained in some later \(L_{k_{s+1}}\), and the
construction can be repeated.

Every \(v_s\) eventually appears, so the resulting sequence is a text for
\(L_\infty\).  At the selected stages, however, the learner outputs the
strictly increasing approximant languages \(L_{k_s}\), and therefore cannot
converge semantically on this text.  This contradicts identification of
\(L_\infty\).
\end{proof}

\begin{proposition}[Regular languages lie in the latent-observation union]
\label{prop:regular-in-latent-union}
For every \(f\ge1\),
\[
  \mathrm{REG}\subseteq\Cmcf{f}{\bullet}.
\]
\end{proposition}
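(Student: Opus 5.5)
We need to show that every regular language $R$ lies in $\Cmcf{f}{h}$ for some explicit finite homomorphism $h$. The plan is to take for $h$ the syntactic morphism of $R$, or simply the transition morphism of any DFA recognizing $R$, and verify the two membership conditions of Definition~\ref{def:class-cmcf} separately.

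First, regular languages are $f$-MCFLs for every $f\ge1$. A right-linear grammar is a context-free grammar, i.e.\ a fan-out-one standard MCFG. Since an MCFG of fan-out at most $1$ is also one of fan-out at most $f$, we get $R\in f\text{-}\mathrm{MCFL}$. Nothing in Definition~\ref{def:standard-mcfg} forbids rank-zero $\eps$-rules, so the empty word causes no trouble, and the empty language is generated by a grammar with no productive start rule.

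Second, we prove substitutability. Fix a finite monoid $M$, a homomorphism $h\colon\Sigma^*\to M$, and $P\subseteq M$ with $R=h^{-1}(P)$; these exist because $R$ is recognized by a finite monoid, for example the transition monoid of a DFA, which is explicit. We could invoke Proposition~\ref{prop:regular-abelian-balance} with the trivial group $A=\{0\}$, $\varphi\equiv0$ and $H=A$, which gives $L=R\cap\varphi^{-1}(H)=R$ directly. Alternatively, the direct argument is just as short: if $h^{(d)}(\tuple x)=h^{(d)}(\tuple y)$, then for every arity-$d$ context $F$ we have $h(F[\tuple x])=h(F[\tuple y])$ by homomorphic evaluation. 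Hence $F[\tuple x]\in R$ if and only if $F[\tuple y]\in R$. So $\D_R^{(d)}(\tuple x)=\D_R^{(d)}(\tuple y)$ holds even without using the shared-context hypothesis.

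Combining the two steps gives $R\in\Cmcf{f}{h}\subseteq\Cmcf{f}{\bullet}$. There is no real obstacle here. The only point needing care is that the union in Definition~\ref{def:no-advice-union} ranges over explicit morphisms, and the transition monoid of a DFA for $R$, together with its multiplication table and letter images, is indeed finite and explicit.
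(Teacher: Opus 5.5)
Your proof is correct and follows essentially the same route as the paper. The paper takes the syntactic morphism $\eta_L$ and uses two-sidedness of the syntactic congruence to replace tuple components one at a time, which amounts to your homomorphic-evaluation argument with a recognizing morphism; it also notes that the shared-context premise is never needed, and simply asserts the $f$-MCFL membership that you justify explicitly via a right-linear grammar.
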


\begin{proof}
Let \(L\) be regular and let \(\eta_L:\Sigma^*\to M_L\) be its syntactic
morphism.  The monoid \(M_L\) is finite, and \(L\) is an \(f\)-MCFL for every
\(f\ge1\).  If \(\eta_L^{(d)}(\tuple x)=\eta_L^{(d)}(\tuple y)\), then in any
named sentence context one may replace the tuple components one at a time.
Syntactic congruence is two-sided, so each replacement preserves membership in
\(L\).  Hence the complete tuple distributions are equal; the shared-context
premise is not even needed.  Thus \(L\in\Cmcf{f}{\eta_L}\).
\end{proof}

The following is the standard superfinite obstruction underlying Gold-style
and finite-tell-tale impossibility arguments.

\begin{theorem}[Non-identifiability with unbounded latent observation]
\label{thm:no-advice-nonidentifiability}
Fix a nonempty finite alphabet \(\Sigma\).  For every fixed \(f\ge1\),
the latent-observation union \(\Cmcf{f}{\bullet}\) is not semantically
identifiable in the limit from positive data.
\end{theorem}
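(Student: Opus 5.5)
We plan to apply the ascending-chain obstruction of Lemma~\ref{lem:ascending-chain-obstruction} inside \(\Cmcf{f}{\bullet}\). Proposition~\ref{prop:regular-in-latent-union} shows that every regular language already lies in \(\Cmcf{f}{\bullet}\). It therefore suffices to exhibit a strictly ascending chain of regular languages whose union is also regular.

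Fix a letter \(a\in\Sigma\), which exists because \(\Sigma\) is nonempty. For \(k\ge1\) put
\[
  L_k:=\{a^i:1\le i\le k\},
  \qquad
  L_\infty:=a^+=\bigcup_{k\ge1}L_k .
\]
Each \(L_k\) is finite, hence regular. The union \(L_\infty\) is regular as well. The inclusions \(L_k\subsetneq L_{k+1}\) are strict because \(a^{k+1}\in L_{k+1}\setminus L_k\). By Proposition~\ref{prop:regular-in-latent-union}, every \(L_k\) and \(L_\infty\) belongs to \(\Cmcf{f}{\bullet}\).

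Lemma~\ref{lem:ascending-chain-obstruction} then shows that no learner identifies this subclass in the limit from positive data. Its proof builds a text for \(L_\infty\) on which the learner infinitely often outputs languages \(L_{k_s}\neq L_\infty\), so it applies under the semantic (TxtBc) criterion used here. Any learner that identified \(\Cmcf{f}{\bullet}\) would in particular identify this subclass. Hence \(\Cmcf{f}{\bullet}\) is not semantically identifiable.

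The only point needing care is that the lemma's conclusion really refutes behaviorally correct identification and not merely explanatory identification. This holds because the adversarial text forces hypotheses with the wrong language at arbitrarily late stages. The proof also needs no fan-out-specific argument, since regular languages are \(f\)-MCFLs for every \(f\ge1\). Beyond this check, the argument is a direct combination of the two earlier results.
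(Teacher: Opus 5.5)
Your proposal is correct and is essentially the paper's own proof. The paper likewise takes the chain \(\{a,\ldots,a^k\}\subsetneq\{a,\ldots,a^{k+1}\}\) with union \(a^+\), places all of these languages in \(\Cmcf{f}{\bullet}\) via Proposition~\ref{prop:regular-in-latent-union}, and concludes by Lemma~\ref{lem:ascending-chain-obstruction}; your added remark that the lemma refutes behaviorally correct (not just explanatory) identification is accurate.
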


\begin{proof}
By Proposition~\ref{prop:regular-in-latent-union}, the class contains
\[
  R_k:=\{a,a^2,\ldots,a^k\}\qquad(k\ge1)
\]
and their union \(R_\infty=a^+\).  Since
\(R_1\subsetneq R_2\subsetneq\cdots\),
Lemma~\ref{lem:ascending-chain-obstruction} applies.
\end{proof}

\begin{definition}[Bounded-size observation union]
\label{def:bounded-observation-union}
Fix the finite alphabet \(\Sigma\), a fan-out bound \(f\ge1\), and \(k\ge1\).
Define
\[
  \Cmcf{f}{\le k}:=\bigcup_h \Cmcf{f}{h},
\]
where the union ranges over all finite monoid homomorphisms
\(h:\Sigma^*\to M\) with \(|M|\le k\).
\end{definition}

\begin{proposition}[Intrinsic observation size and bounded latent slices]
\label{prop:intrinsic-size-bounded-slice}
Fix a finite alphabet \(\Sigma\), a fan-out bound \(f\ge1\), and \(k\ge1\).
For every \(f\)-MCFL \(L\subseteq\Sigma^*\),
\[
  \obs_f(L)\le k
  \quad\Longleftrightarrow\quad
  L\in\Cmcf{f}{\le k}.
\]
Equivalently,
\[
  \Cmcf{f}{\le k}
  =
  \{L\in f\text{-}\mathrm{MCFL}:\obs_f(L)\le k\}.
\]
\end{proposition}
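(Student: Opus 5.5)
We prove the two inclusions separately by unfolding the definitions. Definition~\ref{def:intrinsic-observation-size} measures the image size \(|\operatorname{im}(h)|\). Definition~\ref{def:bounded-observation-union} bounds the codomain size \(|M|\). So the only content of the argument is passing between a homomorphism and its corestriction to its image.

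For the direction \((\Leftarrow)\), suppose \(L\in\Cmcf{f}{\le k}\). Then there is a finite monoid homomorphism \(h:\Sigma^*\to M\) with \(|M|\le k\) and \(L\in\Cmcf{f}{h}\), so \(L\) is \((f,h)\)-tuple-substitutable. Since \(\operatorname{im}(h)\subseteq M\), we have \(|\operatorname{im}(h)|\le|M|\le k\). The minimum in Definition~\ref{def:intrinsic-observation-size} is therefore at most \(k\), giving \(\obs_f(L)\le k\). If one insists that \(h\) be explicit, note that \(M\) is finite and \(\Sigma\) is finite, so \(h\) can be presented by a multiplication table and letter images. The class is unaffected.

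For the direction \((\Rightarrow)\), suppose \(\obs_f(L)\le k<\infty\). Choose an explicit \(h:\Sigma^*\to M\) attaining the minimum, so that \(L\) is \((f,h)\)-tuple-substitutable and \(|\operatorname{im}(h)|\le k\). Let \(M':=\operatorname{im}(h)\). This is a submonoid of \(M\), since \(h(\eps)=1_M\) and \(h(u)h(v)=h(uv)\). Let \(h':\Sigma^*\to M'\) be the corestriction of \(h\). For all tuples \(\tuple x,\tuple y\) we have \(h'^{(d)}(\tuple x)=h'^{(d)}(\tuple y)\) if and only if \(h^{(d)}(\tuple x)=h^{(d)}(\tuple y)\). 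The substitutability condition of Definition~\ref{def:tuple-substitutable} depends on \(h\) only through these equalities, so \(L\) is \((f,h')\)-tuple-substitutable. Alternatively, observe that the inclusion \(M'\hookrightarrow M\) gives \(h\preceq h'\) in the sense of Definition~\ref{def:h-refinement} and invoke Proposition~\ref{prop:h-refinement-monotonicity}. Together with the hypothesis that \(L\) is an \(f\)-MCFL, this gives \(L\in\Cmcf{f}{h'}\), and \(|M'|\le k\), so \(L\in\Cmcf{f}{\le k}\).

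The set equality follows at once. Every member of \(\Cmcf{f}{\le k}\) is an \(f\)-MCFL by Definition~\ref{def:class-cmcf}, so both sides consist of \(f\)-MCFLs, and on \(f\)-MCFLs the equivalence just proved applies. The only point needing care is the corestriction step, namely checking that \(\operatorname{im}(h)\) is a genuine monoid containing the identity, and that an explicit multiplication table for it can be read off from that of \(M\). There is no substantive obstacle.
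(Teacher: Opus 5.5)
Your proof is correct and matches the paper's: for \((\Leftarrow)\) you bound \(\obs_f(L)\le|\operatorname{im}(h)|\le|M|\le k\), and for \((\Rightarrow)\) you corestrict a witnessing \(h\) to the submonoid \(\operatorname{im}(h)\), which preserves the equality relation on observation values and hence \((f,h)\)-tuple substitutability. Your alternative justification of the corestriction step, via \(h\preceq h'\) with the inclusion \(\operatorname{im}(h)\hookrightarrow M\) and Proposition~\ref{prop:h-refinement-monotonicity}, is also valid.
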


\begin{proof}
Suppose first that \(L\in\Cmcf{f}{\le k}\).  Then
\(L\in\Cmcf{f}{h}\) for some explicit finite homomorphism
\[
  h:\Sigma^*\to M
\]
with \(|M|\le k\).  Restricting the codomain to the image of \(h\) does not
change equality of observation values, and therefore does not change
\((f,h)\)-tuple substitutability.  Hence
\[
  \obs_f(L)\le |\operatorname{im}(h)|\le |M|\le k.
\]

Conversely, suppose that \(L\) is an \(f\)-MCFL and
\(\obs_f(L)\le k\).  By Definition~\ref{def:intrinsic-observation-size},
there is an explicit finite homomorphism
\[
  h:\Sigma^*\to M
\]
such that \(L\) is \((f,h)\)-tuple-substitutable and
\(|\operatorname{im}(h)|\le k\).  Replace the codomain of \(h\) by the
submonoid \(\operatorname{im}(h)\).  The resulting explicit homomorphism has
codomain size at most \(k\) and induces exactly the same equality relation on
observed values.  Thus \(L\in\Cmcf{f}{\le k}\).
\end{proof}

\begin{lemma}[Isomorphism invariance of the observation]
\label{lem:iso-invariance}
If \(\varphi:M\to M'\) is an injective monoid homomorphism, then a language is
\((f,h)\)-tuple-substitutable iff it is \((f,\varphi\circ h)\)-tuple-substitutable.
Consequently \(\Cmcf{f}{h}=\Cmcf{f}{\varphi\circ h}\).
\end{lemma}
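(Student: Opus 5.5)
The key observation is that an injective homomorphism \(\varphi\) does not change which tuples receive equal observation values. Everything else in the definitions is independent of the observation. The plan is to compare the two guards directly, componentwise.

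First, for every \(1\le d\le f\) and all \(\tuple x,\tuple y\in(\Sigma^*)^d\), I show
\[
  h^{(d)}(\tuple x)=h^{(d)}(\tuple y)
  \iff
  (\varphi\circ h)^{(d)}(\tuple x)=(\varphi\circ h)^{(d)}(\tuple y).
\]
By Definition~\ref{def-monoid-morphism}, equality of \(d\)-tuples of observation values means equality in every coordinate. So it suffices to check that \(h(x_i)=h(y_i)\) if and only if \(\varphi(h(x_i))=\varphi(h(y_i))\). The forward direction holds because \(\varphi\) is a function, and the backward direction is exactly injectivity of \(\varphi\).

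Second, I note that \(\varphi\circ h:\Sigma^*\to M'\) is again a monoid homomorphism, being a composite of monoid homomorphisms. Hence \((f,\varphi\circ h)\)-tuple substitutability is meaningful in the sense of Definition~\ref{def:tuple-substitutable}.

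That definition is an implication whose hypotheses are the type equality and a shared accepting context, and whose conclusion is equality of the distributions \(\D_L^{(d)}\). The shared-context hypothesis and the conclusion do not involve the observation at all. By the first step, the type-equality hypotheses coincide for \(h\) and \(\varphi\circ h\). So the two substitutability conditions quantify over the same pairs \((\tuple x,\tuple y)\) with the same premises and conclusions, and are therefore equivalent.

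Finally, the class equality \(\Cmcf{f}{h}=\Cmcf{f}{\varphi\circ h}\) follows from Definition~\ref{def:class-cmcf}, since the \(f\)-MCFL membership component does not depend on the observation.

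One direction could alternatively be obtained from Proposition~\ref{prop:h-refinement-monotonicity}. If \(\varphi\) has a homomorphic left inverse on its image, then \(h\preceq\varphi\circ h\) and refinement applies. However, that left inverse is only defined on \(\operatorname{im}(\varphi)\), so to avoid constructing a homomorphism on all of \(M'\) I will use the direct argument above.

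I expect no substantive obstacle here. The only point needing care is that equality in \(M^d\) is componentwise, so that injectivity of \(\varphi\) on \(M\) transfers to \(\varphi^{(d)}\) on \(M^d\).
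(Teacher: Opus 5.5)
Your proof is correct and is essentially the paper's argument: both note that Definition~\ref{def:tuple-substitutable} uses the observation only through componentwise equalities of values, which injectivity of \(\varphi\) preserves in both directions, and that \(f\)-MCFL membership does not involve the observation. A minor aside on the route you set aside: one direction needs no left inverse at all, since \(\varphi\circ h\preceq h\) via \(\pi=\varphi\), so Proposition~\ref{prop:h-refinement-monotonicity} immediately gives that \((f,\varphi\circ h)\)-tuple substitutability implies \((f,h)\)-tuple substitutability.
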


\begin{proof}
Definition~\ref{def:tuple-substitutable} uses \(h\) only through equalities of
componentwise values.  Since \(\varphi\) is injective,
\(h(u)=h(v)\) iff \(\varphi(h(u))=\varphi(h(v))\).  The grammar-theoretic part of
membership is unchanged.
\end{proof}

\begin{theorem}[Bounded observation size restores identifiability]
\label{thm:bounded-union-identifiable}
For every fixed finite alphabet \(\Sigma\), fan-out bound \(f\ge1\), and
\(k\ge1\), there is an explicit finite monoid homomorphism
\(H=H_{\Sigma,k}:\Sigma^*\to M_H\) such that
\[
  \Cmcf{f}{\le k}\subseteq\Cmcf{f}{H}.
\]
Consequently the canonical learner with parameters \((f,H)\) identifies
\(\Cmcf{f}{\le k}\) in the limit from positive data.
\end{theorem}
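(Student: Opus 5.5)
The plan is to build \(H\) as the product of all candidate observations of size at most \(k\). Up to isomorphism there are only finitely many monoids \(M\) with \(|M|\le k\). For each such monoid there are only finitely many homomorphisms \(h:\Sigma^*\to M\), since a homomorphism is determined by the finitely many letter images \(h(a)\in M\). So we choose one representative \(M_1,\ldots,M_t\) of each isomorphism class of monoids of size at most \(k\), and list all homomorphisms \(h_1,\ldots,h_N\), where each \(h_i:\Sigma^*\to M_{c(i)}\). We then set
\[
  M_H:=\prod_{i=1}^N M_{c(i)},\qquad H(w):=(h_1(w),\ldots,h_N(w)).
\]
The product is a finite monoid with componentwise multiplication, and \(H\) is a homomorphism determined by its letter images. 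The multiplication table and letter images can be written down effectively by enumerating all multiplication tables on sets of size at most \(k\) and checking associativity and the identity law. Hence \(H\) is explicit.

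Next, for each \(i\), the projection \(\pi_i:M_H\to M_{c(i)}\) is a monoid homomorphism with \(h_i=\pi_i\circ H\). Therefore \(h_i\preceq H\) in the sense of Definition~\ref{def:h-refinement}.

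Now let \(L\in\Cmcf{f}{\le k}\). By definition, \(L\in\Cmcf{f}{h}\) for some \(h:\Sigma^*\to M\) with \(|M|\le k\). Pick an isomorphism \(\varphi:M\to M_j\) onto the chosen representative. Then \(\varphi\circ h\) is one of the listed \(h_i\). Lemma~\ref{lem:iso-invariance} gives \(\Cmcf{f}{h}=\Cmcf{f}{h_i}\), and Proposition~\ref{prop:h-refinement-monotonicity} gives \(\Cmcf{f}{h_i}\subseteq\Cmcf{f}{H}\). This proves the inclusion \(\Cmcf{f}{\le k}\subseteq\Cmcf{f}{H}\).

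The identification claim then follows from Corollary~\ref{cor:semantic-identification} applied to the fixed pair \((f,H)\): that corollary identifies all of \(\Cmcf{f}{H}\), and hence in particular its subclass \(\Cmcf{f}{\le k}\).

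The only step needing care is the finiteness and explicitness of the enumeration. The bounded size of \(M\) ensures there are finitely many monoids up to isomorphism, and the finite alphabet ensures finitely many homomorphisms into each; isomorphism invariance removes the ambiguity of choosing a concrete codomain. I expect no deeper obstacle, since the inclusion is purely formal once \(H\) is defined. The size of \(M_H\) is exponential in \(k\), but this theorem only claims existence and explicitness.
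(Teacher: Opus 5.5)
Your proposal is correct and follows essentially the same route as the paper: the same product $H_{\Sigma,k}$ of all homomorphisms into representatives, followed by Corollary~\ref{cor:semantic-identification} for the fixed pair $(f,H)$. The one small difference is the inclusion step. You pass through Lemma~\ref{lem:iso-invariance} to replace $h$ by $h_i=\varphi\circ h$ and then use the projection. The paper instead sets $\pi:=\varphi^{-1}\circ\mathrm{pr}_i$, which gives $h\preceq H$ directly, so only Proposition~\ref{prop:h-refinement-monotonicity} is needed.
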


\begin{proof}
Up to isomorphism there are only finitely many monoids of cardinality at most
\(k\); fix explicit representatives \(M_1,\ldots,M_t\).  Since \(\Sigma^*\) is
free, a homomorphism \(\Sigma^*\to M_j\) is uniquely determined by an arbitrary
map \(\Sigma\to M_j\), so there are at most \(t k^{|\Sigma|}\) homomorphisms into
the representatives.  Enumerate them as
\(h_i:\Sigma^*\to M_{j_i}\), \(1\le i\le r\), and define
\[
  M_H:=M_{j_1}\times\cdots\times M_{j_r},
  \qquad
  H:=(h_1,\ldots,h_r):\Sigma^*\to M_H,
\]
with componentwise multiplication.  The multiplication table of \(M_H\) and the
letter values \(H(a)\), \(a\in\Sigma\), are computable from the factor tables, so
\(H\) is explicit.

Let \(h:\Sigma^*\to M\) be any explicit homomorphism with \(|M|\le k\).  Choose
an isomorphism \(\varphi:M\to M_j\) onto a representative.  Then
\(\varphi\circ h\) appears in the enumeration, say \(\varphi\circ h=h_i\).  The
homomorphism
\[
  \pi:=\varphi^{-1}\circ\mathrm{pr}_i:M_H\to M
\]
satisfies \(\pi\circ H=h\).  Hence \(h\preceq H\), and
Proposition~\ref{prop:h-refinement-monotonicity} gives
\(\Cmcf{f}{h}\subseteq\Cmcf{f}{H}\).  Taking the union over all \(|M|\le k\)
yields \(\Cmcf{f}{\le k}\subseteq\Cmcf{f}{H}\).  By
Corollary~\ref{cor:semantic-identification}, the canonical learner for the fixed morphism \(H\)
identifies \(\Cmcf{f}{H}\).  Identification in the limit is a universal
statement over targets and texts, so the same learner identifies the subclass
\(\Cmcf{f}{\le k}\).
\end{proof}

\begin{theorem}[Polynomial time and data without a supplied target observation]
\label{thm:bounded-union-single-spine-poly}
Fix a finite alphabet \(\Sigma\), a fan-out bound \(f\ge1\), and an observation
size bound \(k\ge1\).  Let \(\mathcal G^{\mathrm{ss}}_{f,\le k}\) be the class of
all reduced single-spine working binary linear nondeleting MCFG presentations
\(G\) over \(\Sigma\), of fan-out at most \(f\), such that
\[
  L(G)\in\Cmcf{f}{h}
\]
for some explicit finite monoid homomorphism \(h:\Sigma^*\to M\) with
\(|M|\le k\).

There is a single set-driven learner \(\mathcal A_{\Sigma,f,k}\), depending
only on \((\Sigma,f,k)\) and not on the witnessing target morphism \(h\), that
identifies \(\mathcal L(\mathcal G^{\mathrm{ss}}_{f,\le k})\) in polynomial
time and polynomial data.  More precisely, if
\[
  H_{\Sigma,k}:\Sigma^*\to M_H
\]
is the universal product morphism of
Theorem~\ref{thm:bounded-union-identifiable}, and \(\widetilde G_0\) denotes
the trimmed output-type refinement with respect to \(H_{\Sigma,k}\), then the
rank-two capped learner for \((f,H_{\Sigma,k})\) satisfies, for every
\(G\in\mathcal G^{\mathrm{ss}}_{f,\le k}\),
\[
  B_{\mathrm{exp}}(G,H_{\Sigma,k})
  =O_{\Sigma,f,k}(|G|^2),
  \qquad
  \|\CS(\widetilde G_0)\|_+
  =O_{\Sigma,f,k}(|G|^3),
\]
and constructs its rank-two capped hypothesis from every finite sample \(K\) in
\[
  \|K\|_+^{O(f)}
\]
time, including output size.  Thus the target-specific observation morphism is
not required for polynomial time-and-data identification on this bounded
single-spine union.
\end{theorem}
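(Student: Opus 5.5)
The plan is to reduce everything to the supplied-observation theory, using the single universal observation $H:=H_{\Sigma,k}$ from Theorem~\ref{thm:bounded-union-identifiable} as if it were the supplied morphism. Define $\mathcal A_{\Sigma,f,k}(K):=\widehat G_{\le 2}(K)$, the rank-two capped canonical hypothesis for parameters $(f,H)$. This learner depends only on $(\Sigma,f,k)$, because $H$ is built from an enumeration of monoid representatives of size at most $k$ and of letter assignments into them. It is set-driven by Definition~\ref{def:learner}.

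\textbf{Correctness.} Take $G\in\mathcal G^{\mathrm{ss}}_{f,\le k}$, so that $L(G)\in\Cmcf{f}{h}$ for some $h$ with $|M|\le k$. The refinement step in the proof of Theorem~\ref{thm:bounded-union-identifiable} gives $h\preceq H$. Proposition~\ref{prop:h-refinement-monotonicity} then yields $L(G)\in\Cmcf{f}{H}$. Since $G$ is a binary working presentation, Corollary~\ref{cor:bounded-rank-reconstruction} with $b=2$ applies. Concretely, the completeness argument of Proposition~\ref{prop:completeness} is carried out for the start-collapsed refinement, which is justified by Lemma~\ref{lem:start-collapse-completeness}, together with the soundness of Proposition~\ref{prop:learner-soundness}. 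These show that $\CS(\widetilde G_0)$, computed with respect to $H$, is a characteristic sample for $\widehat G_{\le 2}$. We must check that $G$ needs no renormalization: it is already nondeleting, its ranks are at most two, and compression removes no children because no typed nonterminal is silent in the working form.

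\textbf{Data bound.} Apply Theorem~\ref{thm:single-spine-polynomial-exposure} with the explicit observation $H$ and $m:=|M_H|$. That theorem does not use $h$-substitutability in its quantitative part. Its bounds come from the refinement counts, Lemmas~\ref{lem:no-repeat-shortest-anchor} and~\ref{lem:no-repeat-minimum-exposure}, and Lemma~\ref{lem:single-spine-refinement}, all of which hold for any finite observation. It therefore gives $B_{\mathrm{exp}}(G,H)=O_f(m^f|G|^2)$ and $\|\CS(\widetilde G_0)\|_+=O_f(m^{3f}|G|^3)$. Since $m$ is bounded by a function of $|\Sigma|$ and $k$ alone (at most $k^{t k^{|\Sigma|}}$), it can be absorbed into the $O_{\Sigma,f,k}$ constant, giving the displayed bounds $O_{\Sigma,f,k}(|G|^2)$ and $O_{\Sigma,f,k}(|G|^3)$.

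\textbf{Time bound.} This follows from Theorem~\ref{thm:poly} with $b=2$, which gives $\|K\|_+^{O(f)}$. The constant factor for evaluating $H$ on substrings and comparing $H$-values depends only on $\|H\|_{\mathrm{enc}}$, which is a fixed function of $(\Sigma,k)$.

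\textbf{Main obstacle.} The step needing the most care is checking that the single-spine exposure analysis really is insensitive to which observation is used. Its argument is purely structural, resting on splicing along repeated typed nonterminals and on tuple-length additivity, so replacing $h$ by the much larger $H$ only enlarges $N_{\mathrm{NT}}$ by the fixed factor $|M_H|^f$. I would verify this explicitly and note that reducedness of $G$ ensures the typed start links and anchors exist for $H$ just as they do for $h$.
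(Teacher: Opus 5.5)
Your proposal is correct and follows the paper's proof. You use $h\preceq H_{\Sigma,k}$ and Proposition~\ref{prop:h-refinement-monotonicity} to place $L(G)$ in $\Cmcf{f}{H_{\Sigma,k}}$, then apply Theorem~\ref{thm:single-spine-polynomial-exposure} with $h=H_{\Sigma,k}$ (absorbing $|M_H|$ into the constants) and Theorem~\ref{thm:poly} with $b=2$; your extra checks (no renormalization, no silent children, start-collapse completeness, and completeness of the rank-two capped learner via Corollary~\ref{cor:bounded-rank-reconstruction}) only make explicit what the paper leaves implicit.
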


\begin{proof}
Let \(G\in\mathcal G^{\mathrm{ss}}_{f,\le k}\).  By definition there is an
explicit homomorphism \(h:\Sigma^*\to M\), with \(|M|\le k\), such that
\(L(G)\in\Cmcf{f}{h}\).  Theorem~\ref{thm:bounded-union-identifiable} gives
\(h\preceq H_{\Sigma,k}\), and
Proposition~\ref{prop:h-refinement-monotonicity} therefore yields
\[
  L(G)\in\Cmcf{f}{H_{\Sigma,k}}.
\]
The same presentation \(G\) remains reduced, working, and single-spine; only
the semantic observation used by the learner has been refined.

Now \(\Sigma\), \(f\), and \(k\) are fixed, so the finite morphism
\(H_{\Sigma,k}\) is a fixed external parameter.  Applying
Theorem~\ref{thm:single-spine-polynomial-exposure} with
\(h=H_{\Sigma,k}\) gives the stated quadratic exposure bound and cubic
characteristic-sample bound, with constants depending only on
\((\Sigma,f,k)\).  Theorem~\ref{thm:poly}, applied with \(b=2\), gives the
\(\|K\|_+^{O(f)}\) hypothesis-construction bound.  These are precisely the two
requirements of Definition~\ref{def:poly-time-data} for the presentation class
\(\mathcal G^{\mathrm{ss}}_{f,\le k}\).  Since the learner uses only the fixed
universal morphism \(H_{\Sigma,k}\), it is independent of the particular
witnessing target morphism \(h\).
\end{proof}

\begin{remark}[Dependence on the observation-size bound]
\label{rem:bounded-slice-hidden-k}
The constants hidden in \(O_{\Sigma,f,k}(\cdot)\) are not polynomially uniform
in \(k\).  They arise through the universal refinement \(H_{\Sigma,k}\).
Corollary~\ref{cor:universal-observation-size-lower-bound} shows that every
explicitly \(k\)-universal finite observation has codomain size at least
\[
  \Lambda_k=\operatorname{lcm}(1,\ldots,k)=\exp(\Theta(k)).
\]
Thus at least exponential dependence on \(k\) is unavoidable for an explicitly
universal observation; the concrete full product \(H_{\Sigma,k}\) may be
larger still.  Proposition~\ref{prop:universal-learner-exponential-k-data}
shows how this dependence appears in characteristic data for the canonical
universal-refinement learner.
\end{remark}

\subsection{Bounded and unbounded latent observations}
\label{subsec:uniform-observation-bound}

The preceding theorem is polynomial only slicewise in fixed \(k\).  We now
study what happens when the observation-size bound itself is treated as a
parameter, first for the universal observation and then for arbitrary
set-driven learners.

\begin{definition}[Bounded-observation congruence]
\label{def:bounded-observation-congruence}
Fix a nonempty finite alphabet \(\Sigma\) and \(k\ge1\).  For
\(u,v\in\Sigma^*\), write
\[
\begin{aligned}
  u\equiv_k v
  \quad\Longleftrightarrow\quad&
  h(u)=h(v)\\
  &\text{for every finite-monoid homomorphism }h:\Sigma^*\to M
    \text{ with }|M|\le k.
\end{aligned}
\]
Equivalently, \(u\equiv_k v\) says that the monoid identity \(u\approx v\),
with the letters of \(\Sigma\) viewed as variables, holds in every monoid of
cardinality at most \(k\).  The relation \(\equiv_k\) is a finite-index monoid
congruence.
\end{definition}

\begin{proposition}[Universal product realizes the bounded-observation congruence]
\label{prop:bounded-congruence-kernel}
For the product morphism \(H_{\Sigma,k}\) of
Theorem~\ref{thm:bounded-union-identifiable},
\[
  u\equiv_k v
  \quad\Longleftrightarrow\quad
  H_{\Sigma,k}(u)=H_{\Sigma,k}(v).
\]
Thus \(\Sigma^*/{\equiv_k}\) is isomorphic to the image of
\(H_{\Sigma,k}\).
\end{proposition}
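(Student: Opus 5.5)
The plan is to prove the two implications directly from the construction of \(H_{\Sigma,k}\) in Theorem~\ref{thm:bounded-union-identifiable}. Recall that \(H_{\Sigma,k}=(h_1,\ldots,h_r)\), where \(h_1,\ldots,h_r\) enumerate all homomorphisms \(\Sigma^*\to M_j\) into the fixed isomorphism representatives \(M_1,\ldots,M_t\) of monoids of cardinality at most \(k\). Equality in the product monoid is componentwise, so \(H_{\Sigma,k}(u)=H_{\Sigma,k}(v)\) holds exactly when \(h_i(u)=h_i(v)\) for every \(i\).

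For the forward direction, assume \(u\equiv_k v\). Each \(h_i\) is itself a finite-monoid homomorphism with codomain of size at most \(k\), so \(h_i(u)=h_i(v)\) for every \(i\). Hence \(H_{\Sigma,k}(u)=H_{\Sigma,k}(v)\).

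For the converse, assume \(H_{\Sigma,k}(u)=H_{\Sigma,k}(v)\), and let \(h:\Sigma^*\to M\) be an arbitrary homomorphism with \(|M|\le k\). As in the proof of Theorem~\ref{thm:bounded-union-identifiable}, choose an isomorphism \(\varphi:M\to M_j\) onto a representative. Then \(\varphi\circ h=h_i\) for some \(i\), and the map \(\pi=\varphi^{-1}\circ\mathrm{pr}_i\) satisfies \(h=\pi\circ H_{\Sigma,k}\). Applying \(\pi\) to the assumed equality gives \(h(u)=h(v)\). Since \(h\) was arbitrary, \(u\equiv_k v\).

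The remaining claim is that \(\Sigma^*/{\equiv_k}\) is isomorphic to the image of \(H_{\Sigma,k}\). We have just shown that \(\equiv_k\) is the kernel congruence of the monoid homomorphism \(H_{\Sigma,k}\). The first isomorphism theorem for monoids then gives \(\Sigma^*/{\equiv_k}\cong\operatorname{im}(H_{\Sigma,k})\), with \([u]\mapsto H_{\Sigma,k}(u)\) well defined and injective. This also confirms that \(\equiv_k\) is a congruence of finite index, as asserted in Definition~\ref{def:bounded-observation-congruence}.

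There is no real obstacle in this argument. The only point requiring care is that the definition of \(\equiv_k\) quantifies over all monoids of size at most \(k\), not only the chosen representatives. This is handled by the isomorphism step. Alternatively, Lemma~\ref{lem:iso-invariance} shows that equality of observed values is invariant under injective relabeling of the codomain.
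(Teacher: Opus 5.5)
Your proof is correct and follows the paper's argument. Componentwise equality in the product amounts to simultaneous equality under all enumerated homomorphisms, the isomorphism-to-representative step covers arbitrary codomains of size at most \(k\), and the first isomorphism theorem gives the quotient statement. You simply make explicit the factorization \(h=\pi\circ H_{\Sigma,k}\) that the paper compresses into the phrase ``up to isomorphic codomains.''
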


\begin{proof}
The coordinates of \(H_{\Sigma,k}\) enumerate, up to isomorphic codomains,
all homomorphisms from \(\Sigma^*\) to monoids of cardinality at most \(k\).
Equality in the product is therefore exactly simultaneous equality under all
such homomorphisms.  The quotient by the kernel of a monoid homomorphism is
isomorphic to its image.
\end{proof}

\begin{proposition}[Exponential index of the bounded-observation congruence]
\label{prop:bounded-congruence-index-lower-bound}
Let \(\Sigma\ne\emptyset\), put
\[
  \Lambda_k:=\operatorname{lcm}(1,2,\ldots,k),
\]
and let \(k\ge1\).  Then
\[
  [\Sigma^*:\!\equiv_k]\ge \Lambda_k.
\]
In fact, for any fixed letter \(a\in\Sigma\), the words
\[
  \eps,a,a^2,\ldots,a^{\Lambda_k-1}
\]
lie in pairwise distinct \(\equiv_k\)-classes.  Consequently
\(\log[\Sigma^*:\!\equiv_k]=\Omega(k)\).
\end{proposition}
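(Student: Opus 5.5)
To separate the words \(a^i\) and \(a^j\) for \(0\le i<j<\Lambda_k\), I will exhibit, for each such pair, a single monoid \(M\) with \(|M|\le k\) and a homomorphism \(h:\Sigma^*\to M\) satisfying \(h(a^i)\ne h(a^j)\). By Definition~\ref{def:bounded-observation-congruence}, this gives \(a^i\not\equiv_k a^j\). The witnesses will be the cyclic groups \(\mathbb Z/n\mathbb Z\) for \(1\le n\le k\), which are monoids of cardinality \(n\le k\). For each such \(n\), let \(h_n:\Sigma^*\to\mathbb Z/n\mathbb Z\) send the fixed letter \(a\) to \(1\) and every other letter to \(0\). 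This is a homomorphism because \(\Sigma^*\) is free. It gives \(h_n(a^i)=i\bmod n\).

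Now suppose \(a^i\equiv_k a^j\). Then \(i\equiv j\pmod n\) for every \(1\le n\le k\), so every such \(n\) divides \(j-i\). Hence \(\Lambda_k=\operatorname{lcm}(1,\ldots,k)\) divides \(j-i\). But \(0<j-i<\Lambda_k\), a contradiction. Therefore the \(\Lambda_k\) words \(\eps,a,\ldots,a^{\Lambda_k-1}\) lie in pairwise distinct classes, and \([\Sigma^*:\!\equiv_k]\ge\Lambda_k\).

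For the asymptotic claim \(\log[\Sigma^*:\!\equiv_k]=\Omega(k)\), it suffices to show \(\log\Lambda_k=\Omega(k)\). I will use the Chebyshev-type bound \(\Lambda_k\ge 2^{k-1}\) for \(k\ge7\); the classical Nair argument actually gives the stronger \(\Lambda_k\ge 2^k\). The finitely many small \(k\) are absorbed into the constant. If a more self-contained route is wanted, note that \(\Lambda_k\ge\prod_{p\le k}p\), and Chebyshev's lower bound \(\theta(k)=\sum_{p\le k}\log p\ge ck\) then gives the same conclusion. The paper's earlier remark already states \(\Lambda_k=\exp(\Theta(k))\), so citing this standard number-theoretic fact should be acceptable.

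The only step with any substance is this lower bound on \(\log\operatorname{lcm}(1,\ldots,k)\), and it is handled by citation. The separation argument itself is routine, since cyclic groups of order at most \(k\) already realize all the residues needed.
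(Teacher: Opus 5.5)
Your proof is correct and follows the paper's argument. You separate $a^i$ from $a^j$ using the cyclic group of order $n\le k$, with $a$ sent to a generator and every other letter sent to the identity. The paper picks a single $n\nmid(j-i)$ directly, whereas you argue the contrapositive via $\Lambda_k\mid(j-i)$. Both handle $\log\Lambda_k=\Theta(k)$ by citing the Chebyshev/Nair estimate.
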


\begin{proof}
Take \(0\le r<s<\Lambda_k\).  Since \(0<s-r<\Lambda_k\), the integer
\(s-r\) is not divisible by every \(n\le k\); otherwise it would be divisible
by their least common multiple.  Choose \(n\le k\) with \(n\nmid(s-r)\), let
\(C_n=\langle g\rangle\) be the cyclic group of order \(n\), and map \(a\) to
\(g\) and every other letter to the identity.  Then \(a^r\) and \(a^s\) have
different images, so \(a^r\not\equiv_k a^s\).  The final asymptotic statement
uses \(\log\Lambda_k=\Theta(k)\), equivalently the classical estimate for
Chebyshev's second function; see, e.g., \cite{Nair1982}.
\end{proof}

\begin{definition}[Universal bounded observation]
\label{def:universal-bounded-observation}
Fix a nonempty finite alphabet \(\Sigma\) and \(k\ge1\).  An explicit finite
homomorphism
\[
  U:\Sigma^*\to N
\]
is \emph{\(k\)-universal} if every explicit homomorphism
\(h:\Sigma^*\to M\) with \(|M|\le k\) factors through \(U\); equivalently,
\(h\preceq U\) for every such \(h\).  The morphism \(H_{\Sigma,k}\) constructed
in Theorem~\ref{thm:bounded-union-identifiable} is \(k\)-universal.
\end{definition}

\begin{corollary}[Exponential size is unavoidable for explicit universal observations]
\label{cor:universal-observation-size-lower-bound}
If \(U:\Sigma^*\to N\) is a \(k\)-universal finite observation, then
\[
  |N|\ge [\Sigma^*:\!\equiv_k]\ge \operatorname{lcm}(1,2,\ldots,k).
\]
Hence no explicitly materialized \(k\)-universal observation can have codomain
size polynomial in \(k\).
\end{corollary}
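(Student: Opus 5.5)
The plan is to show that the kernel of any \(k\)-universal observation \(U\) is contained in \(\equiv_k\). Then \(|N|\) is at least the index of that kernel, hence at least \([\Sigma^*:\!\equiv_k]\), and Proposition~\ref{prop:bounded-congruence-index-lower-bound} supplies the final lower bound \(\Lambda_k\).

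For the first step, suppose \(U(u)=U(v)\) and let \(h:\Sigma^*\to M\) be an arbitrary homomorphism with \(|M|\le k\). We may regard \(h\) as explicit, since \(M\) is finite and we can fix its multiplication table. By \(k\)-universality there is a monoid homomorphism \(\pi:N\to M\) with \(h=\pi\circ U\). Then \(h(u)=\pi(U(u))=\pi(U(v))=h(v)\). Since \(h\) was arbitrary, Definition~\ref{def:bounded-observation-congruence} gives \(u\equiv_k v\).

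For the counting step, the map \(u\mapsto U(u)\) induces a surjection from \(\Sigma^*/\ker U\) onto \(\operatorname{im}(U)\subseteq N\), and this surjection is in fact a bijection. Since \(\ker U\subseteq{\equiv_k}\), each \(\equiv_k\)-class is a union of \(\ker U\)-classes. Hence \([\Sigma^*:\!\equiv_k]\le[\Sigma^*:\ker U]=|\operatorname{im}U|\le|N|\). Concretely, the words \(\eps,a,\ldots,a^{\Lambda_k-1}\) from Proposition~\ref{prop:bounded-congruence-index-lower-bound} are pairwise \(\equiv_k\)-inequivalent, so they have pairwise distinct \(U\)-images.

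Finally, \(\log\Lambda_k=\Theta(k)\), so \(|N|\ge e^{\Omega(k)}\), and no polynomial in \(k\) bounds \(|N|\). None of these steps is a real obstacle. The only point needing care is the quantifier mismatch between the definitions: \(k\)-universality is stated for explicit homomorphisms, while \(\equiv_k\) ranges over all finite-monoid homomorphisms. This is resolved by noting that every homomorphism into a monoid with at most \(k\) elements is determined by its letter images and can be presented explicitly.
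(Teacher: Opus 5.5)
Your proposal is correct and follows the paper's proof essentially step for step. You show \(\ker U\subseteq{\equiv_k}\) via the factorization \(h=\pi\circ U\), deduce \(|N|\ge|\operatorname{im}U|=[\Sigma^*:\ker U]\ge[\Sigma^*:\!\equiv_k]\), and invoke Proposition~\ref{prop:bounded-congruence-index-lower-bound}. Your remark reconciling the explicit homomorphisms in the definition of \(k\)-universality with the arbitrary ones in the definition of \(\equiv_k\) is a small point of care that the paper leaves implicit.
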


\begin{proof}
If \(U(u)=U(v)\), then every \(h\preceq U\) also satisfies \(h(u)=h(v)\).
Thus \(\ker U\subseteq\equiv_k\).  Hence
\(|\operatorname{im}U|=[\Sigma^*:\ker U]\ge[\Sigma^*:\equiv_k]\), and
\(|N|\ge|\operatorname{im}U|\).  Apply
Proposition~\ref{prop:bounded-congruence-index-lower-bound}.
\end{proof}

\begin{lemma}[The unary positive language has intrinsic observation size two]
\label{lem:a-plus-observation-size}
Let \(\Sigma=\{a\}\).  For every fan-out bound \(f\ge1\),
\[
  \obs_f(a^+)=2.
\]
More explicitly, let \(M_+=\{e,z\}\) be the two-element monoid with identity
\(e\) and absorbing idempotent \(z\), and let
\[
  h_+:\{a\}^*\to M_+,
  \qquad
  h_+(a)=z.
\]
Then \(a^+\) is \((f,h_+)\)-tuple-substitutable for every \(f\ge1\), whereas
it is not tuple-substitutable for the one-element observation already at
arity one.
\end{lemma}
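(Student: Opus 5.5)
The plan has three parts: show \(\obs_f(a^+)\ge2\) by a failure of unguarded substitutability at arity one, show \(\obs_f(a^+)\le2\) by verifying \((f,h_+)\)-tuple substitutability, and note the trivial membership \(a^+\in f\text{-}\mathrm{MCFL}\). Membership is immediate since \(a^+\) is regular, so only the intrinsic size is at issue.

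\emph{Lower bound.} With the one-element observation \(\mathbf 1\), every pair of tuples has equal type. Take the arity-one tuples \(x=\eps\) and \(y=a\) and the context \(E=a\blank_1\). Then \(E[x]=a\in a^+\) and \(E[y]=aa\in a^+\), so the distributions intersect. The context \(F=\blank_1\) accepts \(y\) but not \(x\), since \(\eps\notin a^+\). Hence \(a^+\) is not \((f,\mathbf 1)\)-tuple-substitutable already at \(d=1\). Any observation with image size one makes all tuple types equal, so Lemma~\ref{lem:iso-invariance}, or directly the fact that only equality of types matters, transfers this failure to every \(h\) with \(|\operatorname{im}h|=1\). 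Therefore \(\obs_f(a^+)\ge2\).

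\emph{Upper bound.} First, \(h_+\) is a homomorphism: \(h_+(a^n)=e\) if \(n=0\) and \(z\) otherwise, which is well defined because \(z\) is idempotent and absorbing. Now fix \(1\le d\le f\) and tuples \(\tuple x,\tuple y\) with equal \(h_+\)-types. Equal type means that, for each \(i\), \(x_i=\eps\) iff \(y_i=\eps\). Over the unary alphabet, \(F[\tuple z]=a^{c_F+\|\tuple z\|_1}\) for every arity-\(d\) context \(F\), where \(c_F\) is the number of terminal letters of \(F\). Thus \(F[\tuple z]\in a^+\) iff \(c_F+\|\tuple z\|_1>0\). Moreover \(\|\tuple x\|_1>0\) iff some \(x_i\ne\eps\) iff some \(y_i\ne\eps\) iff \(\|\tuple y\|_1>0\), so the condition \(c_F+\|\tuple x\|_1>0\) is equivalent to \(c_F+\|\tuple y\|_1>0\). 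Hence \(\D^{(d)}(\tuple x)=\D^{(d)}(\tuple y)\) outright; as in Proposition~\ref{prop:regular-in-latent-union}, the shared-context hypothesis is not even needed. Since \(|\operatorname{im}h_+|=2\), this gives \(\obs_f(a^+)\le2\).

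The argument has no real obstacle. The only points needing care are that the lower bound is stated for image size one, which is automatic, and that \(\eps\notin a^+\) is exactly what separates the empty tuple from nonempty ones.
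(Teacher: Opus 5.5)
Your proposal is correct and follows essentially the same route as the paper. The lower bound uses the same arity-one witness (\(\eps\) and \(a\) share \(a\blank_1\) but are separated by \(\blank_1\)), and the upper bound rests on the same fact, that equal \(h_+\)-types preserve which components are empty, so \(F[\tuple x]\in a^+\) iff \(F[\tuple y]\in a^+\) without the shared-context premise. Your length criterion \(c_F+\|\tuple z\|_1>0\) merely restates the paper's emptiness criterion, and your remark that every observation of image size one fails is a small point the paper leaves implicit.
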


\begin{proof}
The morphism \(h_+\) records exactly whether a word is empty.  Suppose
\(\tuple x,\tuple y\) have equal componentwise \(h_+\)-type.  Then, for every
component index \(i\),
\[
  x_i=\eps \quad\Longleftrightarrow\quad y_i=\eps.
\]
For any named sentence context \(F\), the filled unary word \(F[\tuple x]\)
is empty exactly when the fixed terminal part of \(F\) is empty and every
inserted component \(x_i\) is empty.  By the displayed equivalence, this holds
exactly when \(F[\tuple y]=\eps\).  Hence
\[
  F[\tuple x]\in a^+
  \quad\Longleftrightarrow\quad
  F[\tuple y]\in a^+,
\]
so the tuple distributions agree; the shared-context premise is not even
needed.  Thus \(a^+\) is \((f,h_+)\)-tuple-substitutable for every \(f\).

For the lower bound, under the one-element observation take the unary tuples
\(x=\eps\) and \(y=a\).  They share the accepting context \(a\blank_1\), since
both \(a\blank_1[\eps]=a\) and \(a\blank_1[a]=aa\) belong to \(a^+\).  Their
distributions differ, however, because the context \(\blank_1\) rejects
\(\eps\) and accepts \(a\).  Therefore the one-element observation does not
suffice, and \(\obs_f(a^+)=2\).
\end{proof}

\begin{proposition}[The universal-refinement learner has exponential data dependence on \(k\)]
\label{prop:universal-learner-exponential-k-data}
Let \(\Sigma=\{a\}\), \(f=1\), and let \(G_+\) be the fixed reduced
single-spine working grammar
\[
  S\to A,\qquad
  A\to(a),\qquad
  B\to(a),\qquad
  A\to(x^1y^1)(A,B),
\]
which generates \(a^+\).  For each \(k\ge2\), run the canonical learner with
the universal product observation \(H_{\{a\},k}\).  Every characteristic
sample for \(a^+\) for this learner contains a word of length at least
\[
  \Lambda_k=\operatorname{lcm}(1,2,\ldots,k).
\]
In particular its positive size is at least \(\Lambda_k=\exp(\Theta(k))\),
even though the witnessing grammar \(G_+\) has constant size and the target
has the constant intrinsic observation size \(\obs_1(a^+)=2\).
\end{proposition}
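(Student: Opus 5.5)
Assume, for contradiction, that $S$ is a characteristic sample for $a^+$ with respect to the canonical learner $\widehat G(\cdot)$ run with observation $H:=H_{\{a\},k}$, and that every word of $S$ has length $<\Lambda_k$. The plan is to exhibit a specific sample $K$ with $S\subseteq K\subseteq a^+$ such that $L(\widehat G(K))\ne a^+$. The natural choice is
\[
  K:=\{a,a^2,\ldots,a^{\Lambda_k-1}\}\supseteq S,
\]
since every word of $S$ is some $a^j$ with $1\le j<\Lambda_k$. The goal is then to show $a^{\Lambda_k}\notin L(\widehat G(K))$, or more generally that $\widehat G(K)$ generates only words of length $<\Lambda_k$. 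This contradicts the characteristic-sample condition, which requires $L(\widehat G(K))=a^+$. (If $S=\emptyset$, then $K=\emptyset$ is also admissible and gives the empty language, so that case is immediate.)

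The key step is a length invariant in $\widehat G(K)$. Call an observed tuple (a unary word $a^j$ at $f=1$) \emph{small}. I will prove by induction on derivations that $[a^j]\Rightarrow^* a^m$ implies $m=j$; this gives $L(\widehat G(K))\subseteq K$. The rules are handled as follows.
\begin{itemize}
\item \emph{Constant rules} derive exactly their own tuple.
\item \emph{Witness rules} are nondeleting, and their parent tuple literally equals the template applied to the observed children. The induction hypothesis says each child derives only its own value, so the parent derives only $a^j$.
\item \emph{Semantic unary rules} $[a^i]\to[a^j]$ are the only way lengths could change. They require $H(a^i)=H(a^j)$ and a shared context $E=a^p\blank_1a^q$ with $a^{p+i+q},a^{p+j+q}\in K$.
\end{itemize}
For the unary rules, $i,j\le\Lambda_k-1$, and $i\ne j$ would give $0<|i-j|<\Lambda_k$. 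Proposition~\ref{prop:bounded-congruence-index-lower-bound} then shows $a^i\not\equiv_k a^j$, and Proposition~\ref{prop:bounded-congruence-kernel} converts this to $H(a^i)\ne H(a^j)$. Hence every semantic unary rule is a self-loop $i=j$, and the invariant holds. Finally, the start rules $\widehat S\to[(w)]$ with $w\in K$ yield only words of $K$, so $L(\widehat G(K))=K\ne a^+$.

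This argument is the same for the capped and general learners, since witness rules never change lengths. The remaining claims follow directly. Positive size is at least the length of the long word, which is at least $\Lambda_k$, and $\Lambda_k=\exp(\Theta(k))$ by the Chebyshev estimate cited earlier. Lemma~\ref{lem:a-plus-observation-size} gives $\obs_1(a^+)=2$. Checking that $G_+$ is reduced and single-spine is immediate, since $B$ is lexical. The main obstacle is keeping the induction honest about semantic unary rules chained with witness rules; the clean resolution is that the invariant ``a nonterminal derives only its own value'' is preserved by all three rule types separately, and that suffices.
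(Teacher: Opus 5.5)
Your proof is correct and follows the paper's route. In both, the universal observation \(H_{\{a\},k}\) separates \(\eps,a,\ldots,a^{\Lambda_k-1}\), so a sample of words shorter than \(\Lambda_k\) yields no nontrivial semantic unary rules, and the resulting hypothesis generates only a finite language.

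The only difference is the finishing step. The paper shows that witness rules strictly decrease length once vacuous identity self-loops are discarded, and concludes from acyclicity of the productive dependency graph. Your invariant that each \([a^j]\) derives only \(a^j\) is slightly cleaner and gives \(L(\widehat G(K))=K\) directly. Taking the maximal short sample rather than \(K=S\) makes no difference.
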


\begin{proof}
By Lemma~\ref{lem:a-plus-observation-size}, \(a^+\) belongs to a fixed
two-element observation slice.  Hence, for every \(k\ge2\), the universal
morphism \(H_{\{a\},k}\) refines a witnessing observation for \(a^+\), so
Proposition~\ref{prop:h-refinement-monotonicity} puts the target in the
corresponding universal slice.  The exact-reconstruction theorem therefore
supplies a finite characteristic sample for the canonical learner on every
such slice.

By Proposition~\ref{prop:bounded-congruence-kernel} and
Proposition~\ref{prop:bounded-congruence-index-lower-bound}, the words
\(\eps,a,\ldots,a^{\Lambda_k-1}\) have pairwise different
\(H_{\{a\},k}\)-values.  Let \(K\subseteq a^+\) be finite and suppose every
word of \(K\) has length smaller than \(\Lambda_k\).  Every unary tuple
observed in \(K\) is \(a^i\) with \(0\le i<\Lambda_k\).  Therefore two
distinct observed unary tuple values never have the same universal type, and
the canonical construction introduces no nontrivial unit rule between
distinct tuple values.

The new constant rules, including \([\eps]\to\eps\), do not invalidate the
length argument because positive-support witness rules never use a zero-length
child.  Consider a witness rule with parent label \(a^q\) and child labels
\(a^{q_1},\ldots,a^{q_r}\).  Then every \(q_j\ge1\), and the witness equation
gives
\[
  q=t(\rho)+\sum_{j=1}^r q_j.
\]
If \(r\ge2\), or if \(t(\rho)>0\), every child label is strictly shorter than
the parent.  The only remaining case is \(r=1\), \(t(\rho)=0\), and
\(q_1=q\).  At fan-out one over the unary alphabet, the induced nondeleting
template is then the identity template \(x^1\), so the rule is a vacuous
self-loop and may be deleted.  Together with the absence of nontrivial semantic
unary rules proved above, every remaining productive dependency therefore
strictly decreases the positive integer labeling its nonterminal.  Rank-zero
constant rules create no dependency edges.  Hence the productive dependency
graph is acyclic, and the finite hypothesis grammar generates a finite
language.  It therefore cannot generate the infinite target \(a^+\).

If \(S_+\) were a characteristic sample for \(a^+\) with all words shorter
than \(\Lambda_k\), taking \(K=S_+\) would contradict the defining property of
a characteristic sample.  Thus some word in every characteristic sample has
length at least \(\Lambda_k\), and the positive-size lower bound follows.
\end{proof}

\subsection{A learner-uniform ambient-slice obstruction}
\label{subsec:uniform-k-data-lower-bound}

The preceding lower bounds are observation-specific: they arise from the size
and kernel structure of a universal refinement.  We now give a different kind
of result.  It uses only the fact that a sufficiently large bounded latent
slice contains a rich deletion family, and therefore applies uniformly to
every set-driven identifier of that ambient class.  It should not be read as a
claim that the deletion obstruction is specific to finite-monoid observations.

\begin{definition}[Uniform polynomial data in the observation bound]
\label{def:uniform-polynomial-k-data}
Fix a finite alphabet \(\Sigma\) and a fan-out bound \(f\).  A family of
set-driven learners
\[
  (\mathcal A_k)_{k\ge1}
\]
for the bounded-observation single-spine classes
\(\mathcal L(\mathcal G^{\mathrm{ss}}_{f,\le k})\) has
\emph{uniform polynomial data in \(k\)} if there is one polynomial \(p\),
independent of \(k\), such that for every \(k\) and every
\(G\in\mathcal G^{\mathrm{ss}}_{f,\le k}\), the target \(L(G)\) has a
characteristic sample \(S_G\) for \(\mathcal A_k\) satisfying
\[
  \|S_G\|_+\le p(k,|G|).
\]
This strengthens the slicewise notion of
Definition~\ref{def:poly-time-data}, where \(k\) is fixed externally and may be
absorbed into the polynomial constants.
\end{definition}

\begin{lemma}[Fixed-length deletion family]
\label{lem:fixed-length-deletion-family}
Let \(\Gamma=\{0,1\}\), let \(n\ge1\), and put
\[
  F_n:=\Gamma^n,
  \qquad
  F_{n,w}:=\Gamma^n\setminus\{w\}\quad(w\in\Gamma^n),
\]
\[
  k_n:=\frac{n(n+1)}2+2.
\]
For every fan-out bound \(f\ge1\):
\begin{enumerate}[label=(\roman*),leftmargin=*]
\item \(\obs_f(F_n)=1\); in particular \(F_n\) belongs to the
one-element observation slice and hence to \(\Cmcf{f}{\le k_n}\).  Moreover,
\(F_n\) has a reduced single-spine working binary linear nondeleting
presentation \(G_n^{F}\) with \(|G_n^{F}|=O(n\log n)\);
\item for every \(w\in\Gamma^n\),
\(F_{n,w}\in\Cmcf{f}{\le k_n}\), and \(F_{n,w}\) has a reduced
single-spine working binary linear nondeleting presentation of size
\(O(n\log n)\).
\end{enumerate}
\end{lemma}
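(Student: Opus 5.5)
Both parts rest on one elementary observation about languages of words of a fixed length. Suppose \(\tuple x,\tuple y\) share an accepting context \(E\) for a language \(L\subseteq\Gamma^n\). Then \(|E|_{\Gamma}+\|\tuple x\|_1=n=|E|_{\Gamma}+\|\tuple y\|_1\), so \(\|\tuple x\|_1=\|\tuple y\|_1\). Consequently, for every named context \(F\) we have \(|F[\tuple x]|=|F[\tuple y]|\). For (i) this already suffices, because membership in \(F_n=\Gamma^n\) depends only on length. Hence \(\D^{(d)}_{F_n}(\tuple x)=\D^{(d)}_{F_n}(\tuple y)\) with no observation at all, exactly as in the unary argument of Proposition~\ref{prop:general-exponential-exposure}. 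This gives \(\obs_f(F_n)\le 1\), and \(\obs_f(F_n)=1\) is trivial. Membership in \(\Cmcf{f}{\le k_n}\) then follows from Proposition~\ref{prop:intrinsic-size-bounded-slice}, once we have an \(f\)-MCFG presentation, which the next step supplies.

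For the presentation \(G_n^F\) I will use fan-out one. Take lexical nonterminals \(D_0\to(0)\) and \(D_1\to(1)\), and a single lexical \(D\) with both rules \(D\to(0)\mid(1)\). Take spine nonterminals \(A_1,\dots,A_n\) with \(A_1\to(x^1y^1)(D,D)\) (or, for \(n=1\), make \(A_1\) lexical directly) and \(A_i\to(x^1y^1)(A_{i-1},D)\), plus \(S\to A_n\). Induction on \(i\) gives \(L_{A_i}=\Gamma^i\). Every binary rule has at most one nonlexical child, all symbols are reachable and productive, and there are \(O(n)\) rules of constant template length. The encoding of Definition~\ref{def:poly-time-data} therefore gives size \(O(n\log n)\).

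For (ii), the key step is choosing the observation. I will take the Rees quotient of \(\Gamma^*\) by the ideal of non-factors of \(w\). Let \(M_w\) be the set of factors of \(w\), including \(\eps\), together with a fresh zero \(0\). The product of two factors is \(uv\) if \(uv\) is a factor of \(w\), and \(0\) otherwise. This product is associative because the non-factors of \(w\) form a two-sided ideal of \(\Gamma^*\). Define \(h_w(u)=u\) if \(u\) is a factor of \(w\), and \(h_w(u)=0\) otherwise. The nonempty factors are indexed by the intervals \([i,j)\) with \(0\le i<j\le n\), so \(|M_w|\le n(n+1)/2+2=k_n\).

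To check substitutability, let \(\tuple x,\tuple y\) have equal \(h_w\)-type and share an accepting context. By the length observation, \(|F[\tuple x]|=|F[\tuple y]|\) for every \(F\), so it remains to show that \(F[\tuple x]=w\) if and only if \(F[\tuple y]=w\). There are two cases.
\begin{itemize}
\item If some component has type \(0\), then the corresponding components of both tuples are non-factors of \(w\). Neither filling can then equal \(w\), so both fillings lie in \(F_{n,w}\) exactly when they have length \(n\).
\item Otherwise every component is a factor of \(w\). Equal types then mean equal components, so \(\tuple x=\tuple y\).
\end{itemize}
Either way the two distributions coincide, which gives \(F_{n,w}\in\Cmcf{f}{h_w}\subseteq\Cmcf{f}{\le k_n}\), using the presentation from the next step.

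The presentation scans left to right with two spine families: \(P_i\) derives the single word \(w[0{:}i]\), and \(Q_i\) derives the words of length \(i\) differing from that prefix. The rules are:
\begin{itemize}
\item \(P_1\to(w_1)\) and \(Q_1\to(\bar w_1)\), both lexical;
\item \(P_i\to(x^1y^1)(P_{i-1},D_{w_i})\);
\item \(Q_i\to(x^1y^1)(P_{i-1},D_{\bar w_i})\) and \(Q_i\to(x^1y^1)(Q_{i-1},D)\);
\item \(S\to Q_n\), dropping \(P_n\).
\end{itemize}
Induction gives \(L_{Q_n}=\Gamma^n\setminus\{w\}\), the grammar is single-spine and reduced, and its size is again \(O(n\log n)\).

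The step I expect to need the most care is the monoid verification: associativity of the Rees quotient, and the count \(|M_w|\le k_n\). The latter relies on repeated factors only shrinking the set. I would also double-check the degenerate case \(n=1\) in both presentations, where the spine has no binary rule and the lexical start children must satisfy the working-form conventions.
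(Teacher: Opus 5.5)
Your argument is essentially the paper's. Part (i) uses the same total-length argument. In (ii), your Rees quotient by the non-factors of \(w\) is, up to isomorphism, exactly the syntactic monoid \(\eta_w\) of \(\{w\}\) that the paper uses: distinct factors of \(w\) have disjoint nonempty context sets, so each factor is its own syntactic class. Your \(P_i/Q_i\) grammar is the paper's \(E_i/D_i\) grammar. Your case split (a type-\(0\) component blocks \(w\) on both sides, otherwise \(\tuple x=\tuple y\)) replaces the paper's component-by-component syntactic replacement, and it makes \(|M_w|\le k_n\) and the multiplication table explicit.

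There is one concrete slip in your presentation for (i). With \(A_1\to(x^1y^1)(D,D)\) you get \(L_{A_1}=\Gamma^2\), hence \(L_{A_i}=\Gamma^{i+1}\), and \(S\to A_n\) generates \(\Gamma^{n+1}\) rather than \(\Gamma^n\). Take \(A_1:=D\) itself (the paper's lexical \(X_1\)), and use \(A_i\to(x^1y^1)(A_{i-1},D)\) only for \(2\le i\le n\); this also removes your \(n=1\) special case. In addition, leave \(D_0,D_1\) out of \(G_n^F\): they are unreachable there and would make the grammar non-reduced.
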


\begin{proof}
For \(F_n\), let
\[
  \mathbf 1:\Gamma^*\to\{e\}
\]
be the unique homomorphism to the one-element monoid.  Suppose two arity-\(d\)
tuples \(\tuple x,\tuple y\) share an accepting sentence context \(E\).  Since
both fillings lie in \(\Gamma^n\), the outside contribution of \(E\) is the
same and therefore
\[
  \sum_{i=1}^d |x_i|=\sum_{i=1}^d |y_i|.
\]
For every other arity-\(d\) sentence context \(F\), the two fillings
\(F[\tuple x]\) and \(F[\tuple y]\) consequently have the same total length.
Membership in \(\Gamma^n\) depends only on that total length, so
\[
  F[\tuple x]\in F_n
  \quad\Longleftrightarrow\quad
  F[\tuple y]\in F_n.
\]
Thus \(F_n\) is \((f,\mathbf 1)\)-tuple-substitutable for every \(f\ge1\).
Since no finite monoid image can have size below one,
\[
  \obs_f(F_n)=1.
\]
In particular \(F_n\in\Cmcf{f}{\le k_n}\).

A single-spine presentation for \(F_n\) uses nonterminals
\(X_1,\ldots,X_n\), with
\[
  X_1\to(0),\qquad X_1\to(1),
\]
\[
  X_i\to(x^1y^1)(X_{i-1},X_1)
  \qquad(2\le i\le n),
\]
and the start rule \(S\to X_n\).  The nonterminal \(X_1\) is lexical, so each
binary rule has at most one nonlexical child.  The presentation is reduced and
generates exactly all binary words of length \(n\).  It has \(O(n)\)
nonterminals and rules, constant-length templates, and therefore encoding size
\(O(n\log n)\) under Definition~\ref{def:poly-time-data}.

Now fix \(w=c_1\cdots c_n\in\Gamma^n\), and let
\[
  \eta_w:\Gamma^*\to M_w
\]
be the syntactic morphism of the singleton language \(\{w\}\).  The syntactic monoid of the singleton language \(\{w\}\) has at most
\[
  \frac{n(n+1)}2+2=k_n
\]
classes: every word that is not a factor of \(w\) has empty two-sided context
distribution with respect to \(\{w\}\) and hence belongs to one common class,
while every remaining class contains either \(\eps\) or a nonempty factor of
\(w\).  Thus \(|M_w|\le k_n\).
We claim that \(F_{n,w}\) is \((f,\eta_w)\)-tuple-substitutable for every
\(f\).  Let \(\tuple x,\tuple y\) have equal componentwise \(\eta_w\)-type and
share an accepting arity-\(d\) context \(E\).  Since both
\(E[\tuple x]\) and \(E[\tuple y]\) lie in \(F_{n,w}\), both have length
\(n\), and therefore
\[
  \sum_{i=1}^d |x_i|=\sum_{i=1}^d |y_i|.
\]
For any other arity-\(d\) sentence context \(F\), the two fillings
\(F[\tuple x]\) and \(F[\tuple y]\) consequently have the same total length.
Moreover, replacing the named components one at a time preserves syntactic
congruence with respect to \(\{w\}\); hence
\[
  F[\tuple x]=w
  \quad\Longleftrightarrow\quad
  F[\tuple y]=w.
\]
Thus
\[
  F[\tuple x]\in\Gamma^n\setminus\{w\}
  \quad\Longleftrightarrow\quad
  F[\tuple y]\in\Gamma^n\setminus\{w\},
\]
so the tuple distributions are equal.

For completeness, \(F_{n,w}\) also has a linear-size single-spine working
presentation.  Write \(\overline c\) for the other binary letter.  Use lexical
nonterminals \(A_0\to(0)\) and \(A_1\to(1)\).  Let \(E_i\) derive exactly the
prefix \(c_1\cdots c_i\), and let \(D_i\) derive exactly the length-\(i\) words
different from that prefix.  Start with
\[
  E_1\to(c_1),\qquad D_1\to(\overline{c_1}),
\]
and, for the relevant indices, use
\[
  E_i\to(x^1y^1)(E_{i-1},A_{c_i}),
\]
\[
  D_i\to(x^1y^1)(D_{i-1},A_0),\qquad
  D_i\to(x^1y^1)(D_{i-1},A_1),
\]
\[
  D_i\to(x^1y^1)(E_{i-1},A_{\overline{c_i}}).
\]
Take \(S\to D_n\), omit unused \(E_n\), and trim any redundant symbols in the
case \(n=1\).  An induction on \(i\) gives the stated tuple languages.  Every
binary rule again has at most one nonlexical child, and the trimmed grammar is
reduced.  It has \(O(n)\) symbols and rules and hence encoding size
\(O(n\log n)\).
\end{proof}

\begin{theorem}[No uniform polynomial-data learner without target observation]
\label{thm:no-uniform-polynomial-k-data}
Fix the binary alphabet \(\Gamma=\{0,1\}\) and any fan-out bound \(f\ge1\).
Let
\[
  k_n=\frac{n(n+1)}2+2.
\]
For every \(n\) and every set-driven learner \(\mathcal A_{k_n}\) that
identifies
\(\mathcal L(\mathcal G^{\mathrm{ss}}_{f,\le k_n})\) from positive data,
every characteristic sample for the target \(F_n=\Gamma^n\) is exactly
\(F_n\).  Consequently every such characteristic sample has positive size
\[
  \|F_n\|_+=n2^n.
\]
In particular, no family of set-driven learners for the bounded-observation
single-spine slices has uniform polynomial data in \(k\) in the sense of
Definition~\ref{def:uniform-polynomial-k-data}.  This failure already holds for
fan-out-one finite regular languages with witnessing presentations of size
\(O(n\log n)\).
\end{theorem}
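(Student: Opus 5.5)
The argument is a deletion-family argument built on Lemma~\ref{lem:fixed-length-deletion-family}. Fix \(n\) and a set-driven learner \(\mathcal A:=\mathcal A_{k_n}\) that identifies \(\mathcal L(\mathcal G^{\mathrm{ss}}_{f,\le k_n})\). By part~(i) of that lemma, the presentation \(G_n^F\) lies in \(\mathcal G^{\mathrm{ss}}_{f,\le k_n}\): it is reduced, single-spine, of fan-out one, and \(F_n\) lies in the one-element slice. So \(F_n\) is a target of \(\mathcal A\). By part~(ii), the same holds for every \(F_{n,w}\) with \(w\in\Gamma^n\). First I would confirm these memberships: fan-out one is at most \(f\), and \(|M_w|\le k_n\).

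Next I show that every characteristic sample \(S\) for \(F_n\) equals \(F_n\). Suppose instead that \(S\subseteq F_n\) is characteristic but some \(w\in F_n\setminus S\) exists. Then \(S\subseteq F_{n,w}\). Since \(F_{n,w}\) is finite, the set \(K:=F_{n,w}\) is a finite sample with \(S\subseteq K\subseteq F_n\), so the characteristic property gives \(L(\mathcal A(K))=F_n\).

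On the other hand, \(\mathcal A\) identifies \(F_{n,w}\). Consider a text for \(F_{n,w}\) that lists its finitely many words once and then repeats them (or pauses) forever. From some stage on, the content is permanently \(F_{n,w}\). Because \(\mathcal A\) is set-driven, every later hypothesis is \(\mathcal A(F_{n,w})\), so identification forces \(L(\mathcal A(F_{n,w}))=F_{n,w}\neq F_n\). This contradicts the previous paragraph, hence \(S=F_n\). I would also note briefly that \(F_n\) itself is characteristic by the same text argument; this is the only possible characteristic sample, but it is not needed for the size bound. Finally, \(\|F_n\|_+=\sum_{w\in\Gamma^n}\max(1,n)=n2^n\).

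For the uniform consequence, suppose a family \((\mathcal A_k)\) had uniform polynomial data with polynomial \(p\). Taking \(k=k_n=\Theta(n^2)\) and \(G=G_n^F\) with \(|G|=O(n\log n)\) gives \(n2^n\le p(k_n,|G_n^F|)=n^{O(1)}\), which fails for large \(n\). The only delicate point is keeping the quantifiers straight. The argument needs that \(\mathcal A_{k_n}\) identifies each \(F_{n,w}\) as a member of its class, and that the set-driven hypothesis on the exact content \(F_{n,w}\) is therefore forced. I expect no genuine obstacle beyond checking that \(F_{n,w}\) is nonempty for \(n\ge1\) and that the required presentations exist, both of which Lemma~\ref{lem:fixed-length-deletion-family} supplies. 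The final sentence of the theorem follows because \(F_n\) is finite, hence regular, and has fan-out one.
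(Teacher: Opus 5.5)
Your proposal is correct and matches the paper's proof essentially step for step. Both place $F_n$ and every $F_{n,w}$ in the $k_n$-slice via Lemma~\ref{lem:fixed-length-deletion-family}, and both use a set-driven text argument to force $L(\mathcal A_{k_n}(F_{n,w}))=F_{n,w}$, which contradicts the characteristic property at $K=F_{n,w}$. They finish identically by comparing $n2^n$ with $p(k_n,|G_n^F|)=n^{O(1)}$.
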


\begin{proof}
By Lemma~\ref{lem:fixed-length-deletion-family}, the target \(F_n\) and every
proper sublanguage
\[
  F_{n,w}=F_n\setminus\{w\}
  \qquad(w\in F_n)
\]
belong to the same bounded-observation single-spine slice with parameter
\(k_n\).

Because \(F_n\) is finite and \(\mathcal A_{k_n}\) is set-driven and identifies
it, a text that has displayed every member of \(F_n\) and then repeats them
forces
\[
  L(\mathcal A_{k_n}(F_n))=F_n.
\]
Hence \(F_n\) itself is a characteristic sample, so characteristic samples for
the target do exist.

Let \(S\subseteq F_n\) be any characteristic sample for \(F_n\) for
\(\mathcal A_{k_n}\), and fix \(w\in F_n\).  Suppose \(w\notin S\).  Then
\[
  S\subseteq F_{n,w}\subsetneq F_n.
\]
Because \(F_{n,w}\) is finite and \(\mathcal A_{k_n}\) is set-driven and
identifies it, once a text for \(F_{n,w}\) has displayed every member, its
content remains exactly \(F_{n,w}\); hence necessarily
\[
  L(\mathcal A_{k_n}(F_{n,w}))=F_{n,w}.
\]
On the other hand, the characteristic property of \(S\), applied to the finite
sample \(K=F_{n,w}\), gives
\[
  L(\mathcal A_{k_n}(F_{n,w}))=F_n,
\]
a contradiction.  Thus every \(w\in F_n\) belongs to \(S\), so \(S=F_n\).
Its positive size is \(n2^n\).

Finally,
\[
  k_n=\Theta(n^2),
  \qquad
  |G_n^{F}|=O(n\log n),
\]
where \(G_n^{F}\) is the presentation from
Lemma~\ref{lem:fixed-length-deletion-family}.  Hence a polynomial in
\((k_n,|G_n^{F}|)\) is only polynomial in \(n\), whereas \(n2^n\) is exponential.
Equivalently, the lower bound is
\[
  2^{\Omega(\sqrt{k_n})}
\]
up to a polynomial factor.  Therefore no uniform polynomial
characteristic-data bound can exist.
\end{proof}

Two immediate consequences are worth recording.  Theorem~\ref{thm:no-uniform-polynomial-k-data}
places no restriction on how a set-driven learner represents or chooses an
observation, so adaptive selection cannot remove the barrier.  Moreover,
Lemma~\ref{lem:fixed-length-deletion-family} has \(\obs_f(F_n)=1\) and
\(|G_n^F|=O(n\log n)\) while every characteristic sample in the ambient slice
has size \(n2^n\); hence intrinsic observation size and presentation size alone
cannot bound latent-slice data polynomially, independently of the ambient
slice parameter.

\begin{corollary}[Exponential sample value of supplied slice information]
\label{cor:constant-fiber-advice-exponential-value}
Let \(\mathbf 1:\Gamma^*\to\{e\}\) be the one-element observation and let
\(G_n^{F}\) be the single-spine presentation of \(F_n=\Gamma^n\) from
Lemma~\ref{lem:fixed-length-deletion-family}.  For fixed \(f\), the canonical
learner supplied with \(\mathbf 1\) has a presentation-relative
characteristic sample for \(F_n\) of positive size
\[
  O_f(|G_n^{F}|^3)=O_f(n^3\log^3 n).
\]
By contrast, every set-driven learner that identifies the whole ambient slice
\(\mathcal L(\mathcal G^{\mathrm{ss}}_{f,\le k_n})\), without being supplied a
target slice, requires characteristic-sample size exactly \(n2^n\) for the
same target \(F_n\).  The two bounds concern two different semantic promises: the fixed
one-element slice and the full ambient bounded-observation slice.  Thus the
corollary quantifies the value of the stronger supplied-slice promise; it is
not a within-class lower bound for one fixed learner.
\end{corollary}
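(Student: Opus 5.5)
The corollary has two halves, and I will prove each by specializing an earlier result to the target \(F_n\).

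\emph{Upper bound (supplied observation).} By Lemma~\ref{lem:fixed-length-deletion-family}(i), \(F_n\) is \((f,\mathbf 1)\)-tuple-substitutable. The grammar \(G_n^{F}\) is a reduced single-spine working binary linear nondeleting presentation of fan-out one, and fan-out one is at most \(f\). Hence \(F_n\in\Cmcf{f}{\mathbf 1}\). I then apply Theorem~\ref{thm:single-spine-polynomial-exposure} with \(h=\mathbf 1\), so \(m=|M|=1\). This gives \(B_{\mathrm{exp}}(G_n^F,\mathbf 1)=O_f(|G_n^F|^2)\) and
\[
  \|\CS(\widetilde G_0)\|_+=O_f(1^{3f}|G_n^F|^3)=O_f(|G_n^F|^3).
\]
By Theorem~\ref{thm:exact-reconstruction} and Corollary~\ref{cor:bounded-rank-reconstruction}, this set is a characteristic sample for the canonical learner supplied with \(\mathbf 1\). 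The capped rank-two learner suffices, and the general learner also works. Substituting \(|G_n^F|=O(n\log n)\) from Lemma~\ref{lem:fixed-length-deletion-family} yields \(O_f(n^3\log^3 n)\).

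\emph{Lower bound (ambient slice).} Here I quote Theorem~\ref{thm:no-uniform-polynomial-k-data} directly. For any set-driven learner identifying \(\mathcal L(\mathcal G^{\mathrm{ss}}_{f,\le k_n})\), every characteristic sample for \(F_n\) equals \(F_n\), so its positive size is exactly \(\|F_n\|_+=n2^n\).

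The only point requiring care is the remark that the two bounds concern different promises. The canonical learner with \(\mathbf 1\) is not asserted to identify the ambient slice. Indeed, by the argument of Theorem~\ref{thm:no-uniform-polynomial-k-data}, it cannot identify all of the \(F_{n,w}\) while keeping a small sample for \(F_n\). So I will state explicitly that there is no contradiction: the small-sample learner is only required to succeed on \(\Cmcf{f}{\mathbf 1}\), and the \(F_{n,w}\) are not in that class. Beyond this, I expect no real obstacle, since both halves are immediate instantiations of earlier results.
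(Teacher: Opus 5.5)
Your proposal is correct and matches the paper's proof: it instantiates Theorem~\ref{thm:single-spine-polynomial-exposure} with \(m=1\) and \(|G_n^F|=O(n\log n)\) for the upper bound, and cites Theorem~\ref{thm:no-uniform-polynomial-k-data} directly for the exact \(n2^n\) lower bound. Your side remark that the \(F_{n,w}\) lie outside \(\Cmcf{f}{\mathbf 1}\) holds for \(n\ge2\) but fails for \(n=1\), where the singletons do lie in the one-element slice; this is harmless because the comparison is asymptotic.
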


\begin{proof}
The first bound is Theorem~\ref{thm:single-spine-polynomial-exposure}
with \(m=1\), together with \(|G_n^{F}|=O(n\log n)\).  The second statement is
Theorem~\ref{thm:no-uniform-polynomial-k-data}.
\end{proof}

\begin{remark}[Scope of the lower bound]
Corollary~\ref{cor:constant-fiber-advice-exponential-value} concerns supplied
semantic slice information, not unrestricted nonuniform advice.  The uniform
lower bound is stronger than the size lower bound for an explicit universal
product: it applies to every set-driven strategy on the full bounded slice,
including strategies that try to infer a coarser target observation.  Removing
the size bound altogether gives the separate non-identifiability result of
Theorem~\ref{thm:no-advice-nonidentifiability}.
\end{remark}

\section{Comparison with Distributional Learning}
\label{sec:comparison}

The fixed-observation condition belongs to the distributional tradition of
grammatical inference, but the organizing question here is the value of a
specified semantic information interface: which merges become safe when a
finite observation is supplied and what fails when the relevant slice is
latent.  This differs both in semantic condition and in information model from
earlier MCFG constructions.

\subsection{Relation to Yoshinaka's multidimensional substitutability}
\label{subsec:comparison-multidimensional}

Yoshinaka developed multidimensional substitutability for positive-data
learning of mildly context-sensitive languages in an earlier conference paper
and in the subsequent journal article
\cite{Yoshinaka2009,Yoshinaka2011}; his preceding work treated substitutable
context-free languages \cite{Yoshinaka2008}.  His learner is the closest
predecessor of the tuple-indexed part of the present construction.  The main
interfaces are summarized below.

\begin{center}
\small
\begin{tabularx}{0.96\textwidth}{>{\raggedright\arraybackslash}p{0.25\textwidth}>{\raggedright\arraybackslash}p{0.33\textwidth}>{\raggedright\arraybackslash}X}
\hline
Axis & Yoshinaka's multidimensional framework & Present framework \\
\hline
Semantic merge condition & multidimensional substitutability on restricted multicontexts & named shared context plus componentwise equality under a supplied finite observation \\
Dimension parameter & fixed nonterminal dimension \(p\) & fixed fan-out \(f\) \\
Rule-rank parameter & fixed \(r\) in the positive-data learning slice \(SL(p,r)\) & no common rank cap in the qualitative theorem; a cap is used only for polynomial construction \\
Information interface & positive data, with no supplied finite observation & positive data plus a finite observation \(h\), or a bound on its size in the latent-slice results \\
Quantitative role & fixed \((p,r)\)-slice & branching cap for construction; polynomial exposure for data \\
Known separation & --- & \(L_3\) lies in a fixed-observation slice but fails \(S(p)\) for every \(p\ge2\) \\
\hline
\end{tabularx}
\end{center}

The two semantic conditions are nevertheless genuinely different.
Proposition~\ref{prop:l3-not-yoshinaka-2d} exhibits a language in the present
fixed-observation slice that violates Yoshinaka's substitution condition
already at dimension two.  We use the journal definition as the formal
comparison point and compare with Section~3.1, p.~1824.  Let
\(
  \mathsf{Ctx}_{\mathrm Y}^{(m)}
  :=\Sigma^*(\blank\Sigma^+)^{m-1}\blank\Sigma^*
\)
and, for \(x=x_0\blank x_1\blank\cdots\blank x_m\), write
\(x\odot(y_1,\ldots,y_m)=x_0y_1x_1\cdots y_mx_m\).

\begin{definition}[Yoshinaka's \(p\)D-substitutability
{\cite[Section~3.1, p.~1824]{Yoshinaka2011}}]
\label{def:yoshinaka-pd-substitutability}
A language \(L\) is \emph{\(p\)D-substitutable} if, for every
\(1\le m\le p\), all \(x_1,x_2\in\mathsf{Ctx}_{\mathrm Y}^{(m)}\), and all
\(\tuple y_1,\tuple y_2\in(\Sigma^+)^m\),
\[
 x_1\odot\tuple y_1,\quad x_1\odot\tuple y_2,\quad
 x_2\odot\tuple y_1\in L
 \quad\Longrightarrow\quad
 x_2\odot\tuple y_2\in L.
\]
Write \(S(p)\) for this semantic class.  Yoshinaka's learned class is
\(SL(p,r)=S(p)\cap L(p,r)\), where \(p\) bounds nonterminal dimension and
\(r\) bounds rule-function rank.
\end{definition}

Thus Yoshinaka's positive-data identification theorem works on each fixed
\((p,r)\)-slice.  By contrast, Corollary~\ref{cor:semantic-identification}
fixes \(f\) and the finite observation \(h\) but takes the union over all
finite target rule ranks; only Theorem~\ref{thm:poly} reintroduces a fixed
branching cap for efficiency.  This difference is orthogonal to the semantic
separation below.

The good-grammar convention in Yoshinaka's paper
(\(\lambda\)-free, non-erasing, non-permuting, and non-merging) is a grammar
normal form, not part of the definition of \(S(p)\).  Yoshinaka's multicontexts
embed into the present named-context domain as the identity-order subdomain
with nonempty tuple components and nonempty factors between consecutive holes.
On this common domain, the present rule is the guarded version: a shared
context licenses a unit rule only when the componentwise \(h\)-types also
agree.  No global inclusion between the full semantic classes is claimed.

\begin{proposition}[Separation from Yoshinaka's multidimensional substitutability]
\label{prop:l3-not-yoshinaka-2d}
Let \(h_{P_3}=h_3^+\) be the fixed transition morphism defined in
Section~\ref{sec:running-examples} for \(P_3=a^+b^+c^+\).
Then \(L_3\in\Cmcf{2}{h_{P_3}}\), whereas
\(L_3\notin S(p)\) for every \(p\ge2\).
In particular, \(L_3\notin SL(2,2)\).
\end{proposition}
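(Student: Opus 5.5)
The first claim, \(L_3\in\Cmcf{2}{h_{P_3}}\), is exactly Corollary~\ref{cor:l3-in-fixed-observation-class} with \(f=2\), so nothing new is needed there. The work is the negative claim. Since \(S(p)\subseteq S(2)\) for every \(p\ge2\) (the defining implication for \(p\) includes all \(m\le2\)), it suffices to exhibit one violation at \(m=2\). The statement \(L_3\notin SL(2,2)\) then follows from \(SL(2,2)=S(2)\cap L(2,2)\subseteq S(2)\).

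For the violation I will choose two-hole contexts in \(\mathsf{Ctx}_{\mathrm Y}^{(2)}=\Sigma^*\blank\Sigma^+\blank\Sigma^*\) and pairs \(\tuple y_1,\tuple y_2\in(\Sigma^+)^2\) that preserve the letter counts in one context but not in another. Two constraints must be respected: the middle factor between the holes must be nonempty, and all tuple components must be nonempty. My concrete choice is:
\[
  x_1:=\blank b\blank c,\qquad x_2:=a\blank bb\blank cc,\qquad
  \tuple y_1:=(a,c),\qquad \tuple y_2:=(aa,bcc).
\]
Checking the memberships:
\begin{itemize}
  \item \(x_1\odot\tuple y_1=abcc\). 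This is not in \(L_3\), so the choice needs adjusting.
\end{itemize}
I will instead use a more systematic parametrization. Take \(x_1=a^i\blank b^j\blank c^k\) and \(x_2=a^{i'}\blank b^{j'}\blank c^{k'}\), with \(\tuple y_1=(a^s,b^tc^u)\) and \(\tuple y_2=(a^{s'},b^{t'}c^{u'})\). Every filling stays in \(a^+b^+c^+\) as long as \(j,j'\ge1\). Membership then reduces to three linear count equalities per filling: \(i+s=j+t=k+u\), and similarly for the other combinations.

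The key point is that Yoshinaka's condition, unlike the guarded one, ignores the block a component sits in. So I will exploit a pair whose total differences cancel in some contexts but not in others. The cleanest way to do this is to let the hole placement differ between the contexts, putting the second hole in the \(b\)-block in one context and in the \(c\)-block in the other. In the other direction, I keep the hole placement fixed and let the components \(\tuple y\) contribute to different blocks, so that the \(h_{P_3}\)-types differ and Lemma~\ref{lem:shared-context} does not apply.

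A working instance uses \(\tuple y_1=(a,b)\) and \(\tuple y_2=(ab,bc)\):
\begin{itemize}
  \item \(x_1=\blank b\blank cc\) gives \(x_1\odot\tuple y_1=abbcc\), which is not in \(L_3\). I therefore shift to \(x_1=a\blank b\blank cc\).
  \item With \(x_1=a\blank b\blank cc\): \(x_1\odot\tuple y_1=aabbcc\in L_3\), and \(x_1\odot\tuple y_2=aabbbccc\), which has unequal counts.
\end{itemize}
So I will solve the linear system directly. The requirements are:
\begin{itemize}
  \item \(\tuple y_1\) and \(\tuple y_2\) have the same Parikh-count vector up to one specific difference that is compensated in \(x_1\);
  \item the contexts \(x_1\) and \(x_2\) differ in where the boundary between blocks falls.
\end{itemize}
A candidate is \(\tuple y_1=(a,bc)\), \(\tuple y_2=(ab,c)\), with \(x_1=a\blank b\blank c\), which gives \(aabbcc\) and \(aabbcc\). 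Both fillings lie in \(L_3\). Now take \(x_2=\blank bb\blank cc\) with \(\tuple y_1\) and the filling is \(x_2\odot\tuple y_1=abbbccc\). This is not in \(L_3\), so that particular \(x_2\) must be replaced.

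The main obstacle is therefore this finite search. I will carry it out by keeping \(\tuple y_1\) and \(\tuple y_2\) Parikh-equal, so that only the block structure (order) can break membership. I then pick \(x_2\) so that \(x_2\odot\tuple y_1\) keeps its blocks ordered while \(x_2\odot\tuple y_2\) places a \(b\) after a \(c\), or an \(a\) after a \(b\), and therefore leaves \(P_3\).

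With \(\tuple y_1=(a,bc)\) and \(\tuple y_2=(ab,c)\), I take \(x_2=a\blank bb\blank c\)? Then \(x_2\odot\tuple y_1=aabbbcc\), which fails the count condition. The fix is to adjust the counts so that all three required memberships hold while the fourth filling is out of order.

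After fixing the instance, I verify the three memberships and the single non-membership by direct inspection. The non-membership follows because the fourth word lies outside \(a^+b^+c^+\). Finally, I note that the \(h_{P_3}\)-types of \(\tuple y_1\) and \(\tuple y_2\) differ, which explains why this is consistent with the first claim.
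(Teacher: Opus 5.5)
Your reductions are fine. The first claim is Corollary~\ref{cor:l3-in-fixed-observation-class} at $f=2$. A single violation at $m=2$ gives $L_3\notin S(p)$ for every $p\ge2$, and $SL(2,2)\subseteq S(2)$.

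The gap is that the whole negative claim rests on exhibiting one concrete quadruple $(x_1,x_2,\tuple y_1,\tuple y_2)$, and your proposal never produces one. Every candidate you write down fails one of the three required memberships. Your last paragraph (``After fixing the instance, I verify\ldots'') defers the search rather than completing it, so as written nothing has been established about $S(2)$.

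Your ``systematic parametrization'' also cannot succeed. Take $x=a^i\blank b^j\blank c^k$ and $\tuple y=(a^s,b^tc^u)$ with $j\ge1$. Every filling lies in $a^+b^+c^+$, and $x\odot\tuple y\in L_3$ iff $(i-j,\,j-k)=(t-s,\,u-t)$. A relation of the form $\phi(x)=\psi(\tuple y)$ always satisfies Yoshinaka's implication. A violation must therefore come from block order: a component straddles a block boundary, and the context places earlier-block material after it.

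Your final strategy (Parikh-equal tuples, with failure by order) does work once $x_2$ is chosen correctly. Keep $\tuple y_1=(a,bc)$, $\tuple y_2=(ab,c)$ and $x_1=a\blank b\blank c$, and take $x_2=\blank ab\blank c$, whose middle factor $ab$ is nonempty. Then
\[
  x_1\odot\tuple y_1=x_1\odot\tuple y_2=x_2\odot\tuple y_1=aabbcc\in L_3,
\]
while $x_2\odot\tuple y_2=ababcc\notin a^+b^+c^+$.

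The paper's instance is of the same kind: $x_1=\blank b\blank$, $x_2=\blank abb\blank c$, $\tuple y_1=(a,c)$ and $\tuple y_2=(a^2b,c^2)$. The three accepted words are $abc$ and $a^2b^2c^2$ (twice). The rejected word $aababbccc$ has balanced letter counts but is out of order. Either instance closes the gap; without one, the proposal is only a search plan.
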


\begin{proof}
Corollary~\ref{cor:l3-in-fixed-observation-class} gives
\(L_3\in\Cmcf{2}{h_{P_3}}\).

We show that Yoshinaka's condition already fails at dimension \(2\).
Let \(\tuple x=(a,c)\), \(\tuple y=(a^2b,c^2)\),
\(E=\blank b\blank\), and \(F=\blank abb\blank c\).
These are contexts allowed by
Definition~\ref{def:yoshinaka-pd-substitutability}, and all tuple components
and factors occurring between consecutive holes are nonempty.

Indeed, \(E\odot\tuple x=abc\in L_3\), and
\(E\odot\tuple y=F\odot\tuple x=a^2b^2c^2\in L_3\), whereas
\(F\odot\tuple y=aababbccc\notin L_3\).
Thus \(\tuple x\) and \(\tuple y\) share one successful context but are
separated by another context.  Yoshinaka's multidimensional substitutability
condition therefore fails already for \(m=2\).

Since \(p\)D-substitutability requires the condition for every
\(1\le m\le p\), it follows that \(L_3\notin S(p)\) for every \(p\ge2\).
In particular, \(L_3\notin SL(2,2)\).

\end{proof}

This proposition separates the two conditions on the specific language
\(L_3\).  We make no further general inclusion claim between Yoshinaka's
classes and the full fixed-observation slices considered here.

\paragraph{Comparison with untyped named contexts.}
One may also remove the \(h\)-type guard from the full named-context domain of
the present paper.  Call a language \emph{untyped tuple-substitutable} up to
fan-out \(f\) if, whenever two tuples \(\tuple x,\tuple y\) of the same arity
\(d\le f\) share at least one accepting named context, their full context
distributions are equal.

Untyped tuple substitutability implies \((f,h)\)-tuple substitutability for
every finite observation \(h\).  Indeed, the untyped condition requires
equality of distributions from the existence of a shared accepting context
alone, whereas the \((f,h)\)-condition imposes that conclusion only when one
additionally has \(h^{(d)}(\tuple x)=h^{(d)}(\tuple y)\).  Thus introducing
the \(h\)-type guard weakens the semantic substitutability requirement.

For the zone morphism \(h_{P_3}\), this inclusion is strict.  The tuples
\(\tuple x=(a,c)\) and \(\tuple y=(a^2b,c^2)\) used above share the accepting
context \(E\) but are separated by \(F\), so \(L_3\) is not untyped
tuple-substitutable.  On the other hand, \(h_{P_3}\) assigns different types
to these two tuples, and hence \((f,h_{P_3})\)-tuple substitutability does not
require their distributions to agree.  By
Corollary~\ref{cor:l3-in-fixed-observation-class},
\(L_3\in\Cmcf{f}{h_{P_3}}\) for every \(f\ge2\).
Consequently, for this fixed observation and every \(f\ge2\), the class of
\(f\)-MCFLs satisfying the untyped condition is a proper subclass of
\(\Cmcf{f}{h_{P_3}}\).

\subsection{Relation to the author's context-free fixed-typing theorem}
\label{subsec:comparison-author-cfg}

The author's context-free fixed-typing theorem, proved in the author's 2014
master's thesis at the Graduate University for Advanced Studies
(SOKENDAI), Japan,\footnote{The Graduate University for Advanced Studies,
SOKENDAI, Japan.}
gives typed positive-data reconstruction for CFGs \cite{kuriyamaCFG}.
The present paper has no logical dependence on that result: the MCFG proof
given here is self-contained.  Some high-level ideas carry over---finite
observation guards, finite positive characteristic samples, and distributional
soundness---but the tuple setting introduces additional proof obligations
that have no literal one-dimensional counterpart.

\begin{center}
\small
\begin{tabularx}{0.96\textwidth}{>{\raggedright\arraybackslash}p{0.24\textwidth}>{\raggedright\arraybackslash}p{0.31\textwidth}>{\raggedright\arraybackslash}X}
\hline
Issue & CFG reconstruction & Present MCFG reconstruction \\
\hline
Observed object & string occurrence & named tuple occurrence, including empty components and component order \\
Composition witness & constituent boundary & occurrence-sensitive placement of every child component in a linear template \\
Soundness step & one-dimensional substitution & induced child contexts and child-by-child replacement (Lemma~\ref{lem:filling-identity}) \\
Empty behavior & epsilon derivations & silent typed states and silent-child compression (Lemma~\ref{lem:silent-compression}) \\
Rule rank & native CFG production structure & arbitrary finite MCFG rank; required rank is bounded by the positive sample (Lemma~\ref{lem:sample-bounded-rank}) \\
Fixed fan-out & not applicable & binarization may increase fan-out, so direct arbitrary-rank witnesses are needed \\
Quantitative data & linear/thickness regimes & polynomial exposure as the abstract criterion; single-spine as one sufficient regime \\
\hline
\end{tabularx}
\end{center}

In particular, the MCFG theorem cannot be obtained by simply replacing strings
with tuples and then appealing to a binary normal form.  The occurrence model,
silent-child compression, induced child contexts, and sample-dependent rank
cap are used directly in the exact-reconstruction proof, while the
single-spine anchor/context analysis is a separate quantitative argument.

The arbitrary-rank construction is essential at fixed fan-out: nondeleting
normalization preserves fan-out and branching rank \cite{Kanazawa2019Ogden},
whereas LCFRS binarization can increase fan-out and need not preserve the
original fixed-\(f\) class \cite{GomezRodriguezEtAl2009}.  Thus a common
branching cap is imposed only for the polynomial-construction result, not for
qualitative identification.

\subsection{Finite-kernel, finite-context, and query-based distributional learning}
\label{subsec:comparison-fkp}

In distributional learning, finite-kernel and finite-context methods avoid
treating infinitely many substrings or tuples separately by using finitely many
representative objects or contexts to capture their distributional behavior.
There are corresponding approaches for MCFGs
\cite{Yoshinaka2010,ClarkYoshinaka2014,ClarkYoshinaka2016}.

The output-type refinement used here has a related singleton-kernel flavor.%
\footnote{That is, within each typed component, one representative tuple can
stand for the distributional behavior of the tuples belonging to that
component.}
Within each surviving typed component, all tuples have the same \(h\)-type and
occur in a common exposing context with the representative tuple.
Tuple substitutability therefore makes them have the same sentence-context
distribution.
In this sense, one representative tuple captures the distributional role of
the whole typed component.

We do not, however, claim the stronger finite-context characterization found
in the finite-context literature \cite{KanazawaYoshinaka2017}: namely, that
the set of all tuples passing a fixed finite collection of context tests is
exactly the tuple language generated by a given nonterminal.
Nor is the problem of deciding such a congruential property from a supplied
grammar presentation part of the present learning theorem.
Here \((f,h)\)-tuple substitutability is assumed as a semantic promise on the
target language; the learner is not required to decide it from a presentation
\cite{Clark2017,KanazawaKappe2019}.

\paragraph{Membership queries versus fixed finite observation.}
Earlier MCFG finite-kernel and finite-context learning uses membership
information
\cite{Yoshinaka2010,ClarkYoshinaka2014,ClarkYoshinaka2016}.
Related work also studies learners equipped with membership and equivalence
queries
\cite{YoshinakaClark2012,KanazawaYoshinaka2021,KanazawaYoshinaka2023}.

The finite observation \(h\) supplied in the present paper is not a membership
oracle.
It classifies strings into finitely many types and tells the learner whether
two strings or tuples carry the same finite-state information.
Knowing \(h(w)\) therefore does not, in general, tell us whether \(w\) belongs
to the target language.
Instead, the learner uses this finite-state information to control whether two
tuples observed in positive data are even candidates for identification with
the same grammatical state.

Conversely, access to a membership oracle does not automatically provide a
finite congruence \(h\) specifying which tuples may safely be identified.
Fixed finite observation and membership queries are therefore different forms
of side information, and we claim no direct conversion between one interface
and a number of queries to the other.

\section{Conclusion}
\label{sec:conclusion}

Finite observation provides a precise semantic interface for positive-data
learning of bounded-fan-out MCFGs.  For every fixed fan-out \(f\) and supplied
finite observation \(h\), occurrence-sensitive witnesses, output-type
refinement, silent-child compression, and sample-bounded arbitrary rule rank
give exact reconstruction on the full semantic slice \(\Cmcf{f}{h}\), with no
binary or working-presentation restriction on the qualitative target class.
The canonical learner is set-driven and semantically convergent, and a
conservative wrapper additionally gives explanatory stabilization of the
hypothesis grammar.

The quantitative boundary is most naturally expressed through exposure.
Polynomially exposable presentation classes have polynomial
presentation-relative characteristic data.  Single-spine working
presentations form one explicit sufficient regime: lexical children can be
inlined to obtain an equivalent rank-at-most-one MCFG, and the resulting
nonlexical derivation structure supports polynomial anchor and context bounds.
At fan-out one this specializes to linear context-free languages, while at
higher fan-out it still covers genuinely non-context-free examples.

The observation interface itself admits three regimes.  A supplied observation
supports fixed-slice learning; a fixed bound on an unknown observation can be
compiled into a universal finite refinement slicewise; and without any bound
the latent-observation union is not identifiable from positive data.  The
universal refinement has an observation-specific quantitative cost: every
explicitly \(k\)-universal observation has at least exponential size in \(k\),
and the canonical universal-refinement learner exhibits corresponding long
exposures even for \(a^+\).  Separately, sufficiently large bounded latent
slices contain deletion families that rule out any characteristic-data bound
polynomial jointly in the observation bound and presentation size (with the
constructed family giving \(2^{\Omega(\sqrt{k})}\), up to a polynomial
factor).  The latter is an ambient-class obstruction, not a claim that the
deletion phenomenon is specific to finite monoids.

Natural examples delimit the semantic class.  MIX lies in the trivial
one-element slice, whereas the copy language is a fan-out-two MCFL with no
finite witnessing observation.  Thus finite observation is neither a disguised
Parikh-only condition nor a characterization of all bounded-fan-out MCFLs.
Finally, intrinsic observation size by itself is not a sufficient statistic for
latent-slice data complexity: the learning promise also matters, specifically
whether a witnessing slice is supplied or must be resolved inside a larger
ambient class.

\section*{Acknowledgements}

The author is sincerely grateful to Makoto Kanazawa for valuable discussions
and advice, and to Ryo Yoshinaka for encouraging the completion of research
that had remained unfinished since the author's doctoral studies.

\end{document}